\DeclareMathAlphabet{\mathbbold}{U}{bbold}{m}{n}
\newcommand*{\one}{\mathbbold{1}}
\newcommand{\quotes}[1]{``#1''}
\let\original@footnotemark\footnotemark
\newcommand{\align@footnotemark}{%
  \ifmeasuring@
    \chardef\@tempfn=\value{footnote}%
    \original@footnotemark
    \setcounter{footnote}{\@tempfn}%
  \else
    \iffirstchoice@
      \original@footnotemark
    \fi
  \fi}
\pretocmd{\start@align}{\let\footnotemark\align@footnotemark}{}{}
\DeclarePairedDelimiter\floor{\lfloor}{\rfloor}
\newcolumntype{C}{>{\centering\arraybackslash}X}
\newtheoremstyle{boldstyle} % name of the style to be used
  {6pt} % space above
  {6pt} % space below
  {} % font of body
  {} % indent amount
  {\bfseries} % font of header
  {.} % punctuation between head and body
  {.5em} % space after theorem head
  {} % manual header specification
\theoremstyle{boldstyle}
\newtheorem{simulation}{Simulation}
\title{Posterior Conformal Prediction}
	\author[]{Yao Zhang\footnote{Department of Statistics and Data Science, National University of Singapore.} \ \ and  \  Emmanuel  J. Cand\`es\footnote{Departments of Statistics and Mathematics, Stanford University.} }
\begin{document}

\maketitle

\begin{abstract}
Conformal prediction is a popular technique for constructing prediction intervals with distribution-free coverage guarantees. The coverage is marginal, meaning it only holds \emph{on average} over the entire population but not necessarily for any specific subgroup. This article introduces \emph{posterior conformal prediction} (PCP), which generates prediction intervals with both marginal and approximate conditional validity for clusters (or subgroups) naturally discovered in the data.
PCP achieves these guarantees by modelling the conditional nonconformity score distribution as a mixture of cluster distributions. Compared to other methods with approximate conditional validity, this approach produces tighter intervals, particularly when the test data is drawn from clusters that are well represented in the validation data. PCP can also be applied to guarantee conditional coverage on user-specified subgroups, in which case it further ensures coverage for underrepresented individuals in each subgroup. 
When the response variable is categorical, PCP can adjust the coverage level based on the classifier's predictive probabilities,  yielding low-cardinality prediction sets if the classifier is well calibrated. 
We demonstrate enhanced performance on datasets from socioeconomics, materials science, and healthcare.

\end{abstract}

\section{Introduction}\label{sect:back}

Conformal prediction \citep{vovk2005algorithmic} is a model-free technique for generating prediction intervals with distribution-free coverage guarantees. Its flexibility has led to applications across diverse domains, such as drug discovery \citep{Ciriano2020} and election forecasting \citep{cherian2020washington}. Despite successful deployments, it has been observed that while conformal prediction techniques perform well for a test sample drawn at random from the general population, they may underperform if it is drawn from a subpopulation, e.g.~from a group for which training data is scarce. For example, in election forecasting, a predictive model trained mostly on data from urban counties may lead to conformal prediction intervals that undercover vote outcomes in rural counties \citep{cherian2020washington}.
To draw reliable conclusions, prediction intervals should ideally guarantee coverage for every individual's response.
However, achieving such individual-level guarantees often results in wide intervals. This paper explores data-adaptive guarantees balancing this trade-off.

\subsection{Marginal, conditional, or in between?}\label{sect:sect:1_1}

We begin with a quick review of conformal prediction methods. 
Imagine we hold a pre-fitted model $\hat{\mu}$ mapping features $X\in\mathcal{X}$ into a prediction of the response $Y \in\mathcal{Y}$. Imagine we also have a separate set of validation data $((X_1,Y_1), \ldots, (X_n,Y_n))$. The model $\hat{\mu}$ may be any nonparametric or machine learning model with the proviso that it is trained independently of the validation data. With this, we wish to make a prediction about the response $Y_{n+1}$ for a new feature vector $X_{n+1}$ while providing a valid prediction interval that quantifies the uncertainty of our prediction. A special instance of split conformal prediction (SCP) \citep{papadopoulos2002inductive} operates as follows: compute the {nonconformity} score $S(X_i, Y_i) = |Y_i - \hat\mu(X_i)|$ for each validation data point $i \in [n]$, and include $y$ in the prediction interval $\hat C_n^{\text{SCP}}(X_{n+1})$, if and only if 
\begin{equation}
  \label{eq:scp}
S(X_{n+1}, y) \leq Q_{1-\alpha} \left(\sum_{i=1}^{n}\frac{1}{n+1}\delta_{S(X_i, Y_i)} + \frac{1}{n+1}\delta_{+\infty}   \right). 
\end{equation}
Above, $Q_{1-\alpha}(\cdot)$ denotes the $(1-\alpha)$-quantile of the empirical distribution in its argument and $\delta_{a}$ denotes a point mass at $a$. With this choice of nonconformity score, the prediction interval for $Y_{n+1}$ takes the form 
\begin{equation}\label{equ:interval_scp}
\hat C_n^{\text{SCP}}(X_{n+1}) = 
 \left[\hat \mu(X_{n+1}) \pm Q_{1-\alpha} \left(\sum_{i=1}^{n}\frac{1}{n+1}\delta_{S(X_i, Y_i)} + \frac{1}{n+1}\delta_{+\infty}   \right)\right].
\end{equation}
In plain English, the prediction interval \eqref{equ:interval_scp} is centered at $\hat \mu(X_{n+1})$ and its width is given by the empirical residual quantile. Although our methods apply to any choice of nonconformity score $S(X_i, Y_i)$ such as those proposed in \citet{lei2018distribution,romano2019conformalized,chernozhukov2021distributional} and prediction intervals constructed as in \eqref{eq:scp}, we shall use the residual magnitude as our nonconformity score to keep things simple. 
An exception is in \Cref{sect:pcc}, which concerns categorical responses.

Set $Z_i = (X_i,Y_i) \sim P$ and assume that the sequence $(Z_1,Z_2,\ldots,Z_{n+1})$ is exchangeable (e.g.~the $Z_i$'s are drawn i.i.d.~from $P$). Then it is well known \citep[Proposition 1]{papadopoulos2002inductive} that the SCP interval \eqref{equ:interval_scp} has {\it marginal validity} in the sense that
	\begin{equation}\label{equ:marginal}
\mathbb P \big\{ {Y_{n+1}\not \in \hat C_n^{\textup{SCP}}(X_{n+1})}	\big\}  \leq \alpha.
\end{equation}
While marginal validity holds, the conditional coverage can fall below $1-\alpha$. In election forecasting, if voter turnout predictions are less accurate for low-income counties, then the coverage rate would fall below $1-\alpha$ for these. This is why we would like to construct a prediction interval with {\em conditional validity}, i.e.,
\begin{equation}\label{equ:asym_valid}
\mathbb P \big\{ {Y_{n+1}\not\in  \hat C_{n}(X_{n+1}) } \mid  X_{n+1}\big\}\leq \alpha.
\end{equation}
In the case where income is included as a feature, conditional validity ensures that the coverage rate is at least $1 - \alpha$ regardless of the county income level.  
It is however well known that, in general, any prediction interval satisfying \eqref{equ:asym_valid} must have infinite expected length in distribution-free settings \citep{vovk2012conditional,lei2014distribution}. 
This negative result is intuitive. Suppose that in our running example of election forecasting, the list of features includes 
continuous geographic, demographic, and economic variables,
and no two counties have exactly the same feature values.
If the distribution of $Y$ can depend arbitrarily on the features $X$, 
then samples with $X\neq x$ provide no information about the distribution of $Y \mid X=x$. Thus, it is not possible to achieve nontrivial conditional validity unless one is willing to make distributional assumptions, such as smoothness.

Expressed differently, 
marginal validity means that if we draw $n$ counties uniformly at random without replacement, and a test county uniformly at random from the remaining counties, then the coverage rate of the prediction interval \eqref{equ:interval_scp}---this is an average over the validation {\em and} test draws---is at least $1-\alpha$.  In contrast, suppose the features identify the test county as ``Santa Clara'', say. Then conditional validity means we need a valid prediction interval {\em for} Santa Clara. There is however only one Santa Clara county, and without making assumptions, we cannot say anything about  its vote outcome from votes in other counties.

We thus need a notion of validity which interpolates between marginal and conditional validity. Informally, we would like validity to hold for counties of the same type as Santa Clara. This line of reasoning prompts the question: what is a county type? By and large, the existing literature on {\it localized} conformal prediction methods treats counties as neighbors if their feature vectors are close to each other; see \Cref{sect:related_work} for a thorough discussion. {This local notion can become problematic in high-dimensional feature spaces, where even the two nearest points can still be far apart.}

This paper presents a radically different approach to modelling similarity between counties. {Posterior Conformal Prediction} (PCP) clusters counties into latent types, and proceeds by constructing prediction intervals that ensure coverage within each type. Crucially, these clusters are learned from data, specifically from the counties' nonconformity scores, rather than from their raw features.  In this way, two counties will be declared similar if the conditional distributions of their residuals are similar. 
This means that two counties can be of the same type even if their features are very different, as long as the residuals from a predictive model follow similar distributions. For instance, we may learn from data that what mostly drives the model error is along the rural vs.~urban axis; e.g.~the model may perform well in urban counties but poorly in rural ones. By placing the test county on this axis, we can perhaps improve 
conditional coverage across a wide range of demographic and economic conditions. 
The novel aspect of PCP is that looking at the distribution of the nonconformity scores across the feature space informs the 
construction of a prediction interval for Santa Clara, leveraging non-local information from counties that may lie far away in the feature space but exhibit similar nonconformity score distributions.
As we will see, this allows PCP to produce tighter intervals than localization methods as demonstrated in \Cref{sect:exp_preview}.

\subsection{Validity over cluster membership probabilities}\label{subsection:sv}

To formalize matters, assume for now that the (exchangeable) nonconformity scores $R_1, \ldots, R_{n+1}$, defined by $R_i = S(X_i,Y_i) = |Y_i - \hat\mu(X_i)|$, follow a finite mixture model: 
 \begin{equation}\label{equ:mixture}
	R_i \mid X_i \sim  \sum_{k=1}^{K}\pi_k(X_i)f_k,
\end{equation}
where  $\pi(X_i)$ is a $K$-dimensional cluster membership probability vector, and $f_1,\dots, f_{K}$ are $K$ distinct, potentially arbitrary, probability density or mass functions. From here on, we shall refer to the $f_k$'s as cluster distributions.
At one extreme, a model with $K=n+1$ allows each observation to belong to its own cluster,
i.e., $R_i\mid X_i \sim f_i$ for every $i\in [n+1]$, placing no assumptions whatsoever on the residual distributions. At the other extreme, $K = 1$ assumes that all residuals are conditionally drawn from the same cluster distribution, i.e., $R_i \mid X_i \sim f_1$ for all $i$.
In many applications, the conditional residual distribution can be well approximated using a mixture model with $K  \ll n$. For instance, the 
 mixture model  \eqref{equ:mixture} can model the situation discussed earlier, 
 {in which a predictive model forecasts vote outcomes more accurately for urban clusters than for rural clusters.} 

For pedagogical purposes, we first assume that the true cluster membership probabilities $\pi(X)$ are known although we shall shortly dismiss this assumption. Then to construct prediction intervals, we propose upweighting validation points whose cluster membership probabilities $\pi(X_i)$ are near those of the test point $\pi(X_{n+1})$. To do this, we start with  nonnegative weights $w_i \propto \phi(\pi(X_i),\pi(X_{n+1}))$ normalized to sum to one, where $\phi$ is a measure of similarity between probability vectors. Then set 
\begin{equation}\label{equ:oracle}
\hat C_{n}(X_{n+1}) = \left[\hat \mu(X_{n+1}) \pm Q_{1-\alpha} \left(\sum_{i=1}^{n}w_i\delta_{R_i} + w_{n+1}\delta_{+\infty}   \right)\right]. 
\end{equation}
It turns out that if this construction provides coverage conditional on the summary statistic $\pi(X_{n+1})$, then it achieves conditional validity. 
\begin{proposition}\label{prop:oracle_coverage_2}
Under the mixture model \eqref{equ:mixture}, the interval  $\hat C_{n}(X_{n+1})$ in \eqref{equ:oracle} obeys 
\begin{equation}\label{equ:oracle_coverage}
\mathbb P \big\{ Y_{n+1}\not\in   \hat C_{n}(X_{n+1}) \mid X_{n+1} \big\} = \mathbb P \big\{ Y_{n+1}\not\in   \hat C_{n}(X_{n+1}) \mid \pi(X_{n+1}) \big\}.
\end{equation}
\vspace{-2.5em}
\end{proposition}
In practice, the mixture model \eqref{equ:mixture} may of course only hold approximately, and even if it were exact, we would not know the cluster membership probabilities $\pi(X_i)$. Posterior conformal prediction (PCP) will address these issues by controlling the miscoverage rate conditional on a proxy for $\pi(X_{n+1})$. Before explaining this approximation, we preview performance.

\subsection{Preview of experimental results}\label{sect:exp_preview}

We work with the Communities and Crime dataset \citep{dua2017uci}, which we shall examine in greater detail in later sections. In a nutshell, 
we train a random forest model (as the predictive model $\hat \mu$ in \Cref{sect:sect:1_1})
to predict the per capita violent crime rate of communities using 99 features, most of which provide demographic, socioeconomic, and crime-related information.
To assess marginal coverage, we compute the coverage rate over all communities in the test portion of the dataset. 
To assess conditional coverage in finite samples, we follow \citet{cauchois2020knowing,romano2020classification} to compute the ``worst-slice'' conditional coverage. This roughly corresponds to the worst coverage over all subpopulations given by $\big\{x: v^{\top}x\in [a,b]\big\}$, where $a$, $b$, and $v$ can vary arbitrarily; when $v$ is a basis vector, the ``worst-slice'' method considers the local coverage for a particular feature in $x$. When $v$ combines multiple features (e.g., unemployment and poverty rates), it can represent the economic condition of communities, as well as social and public security conditions. In this sense, the ``worst-slice'' method also assesses the local coverage across communities under a variety of conditions.

\begin{figure}[t]
\vspace{-5pt}
     \centering
     \hspace{-5pt}
     \begin{subfigure}[b]{0.329\textwidth}
         \centering         \includegraphics[width=\textwidth]{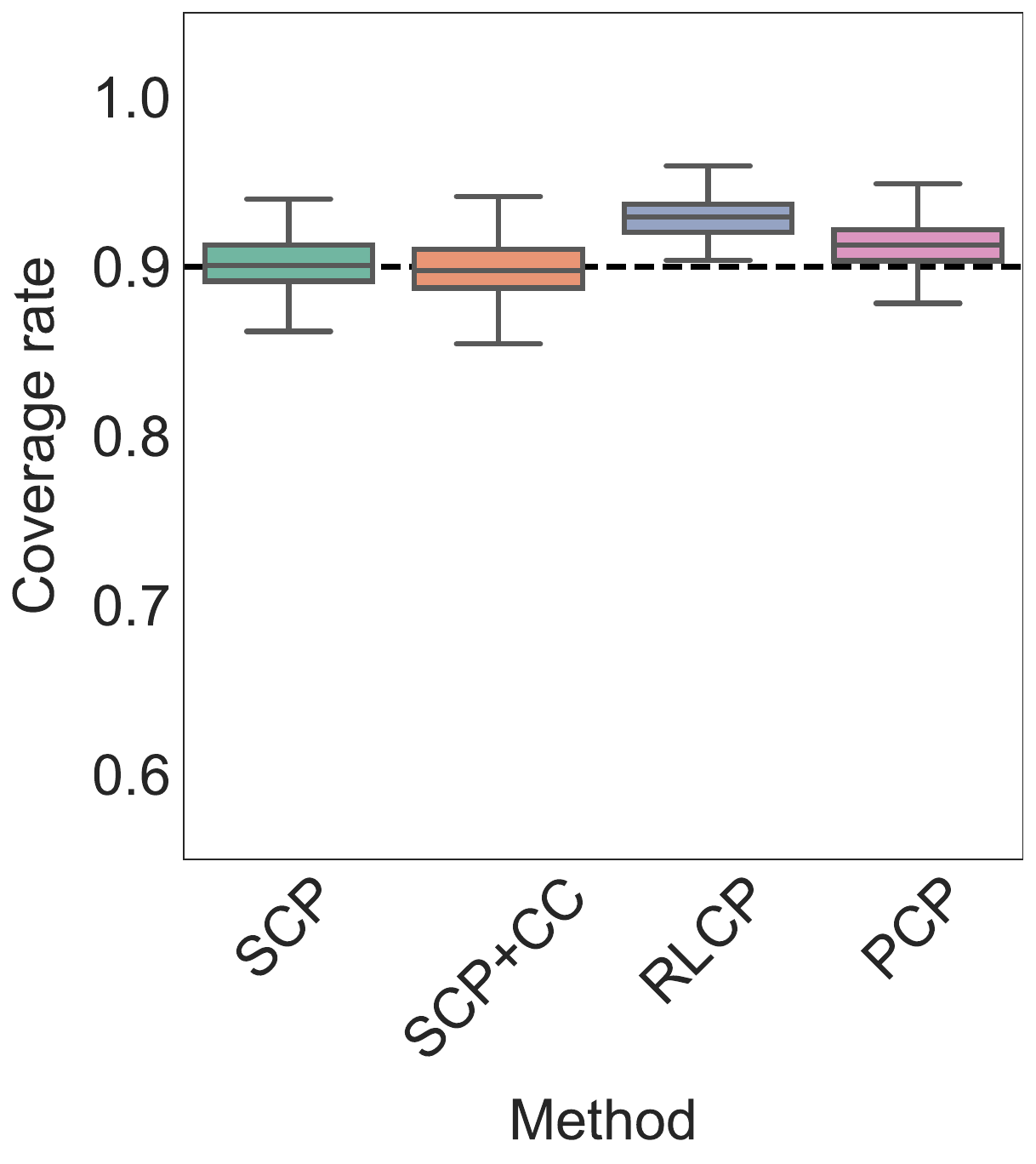}
         \caption{Marginal coverage.
         }
           \label{fig:wscr1}
     \end{subfigure}
     \begin{subfigure}[b]{0.329\textwidth}
         \centering
         \includegraphics[width=\textwidth]{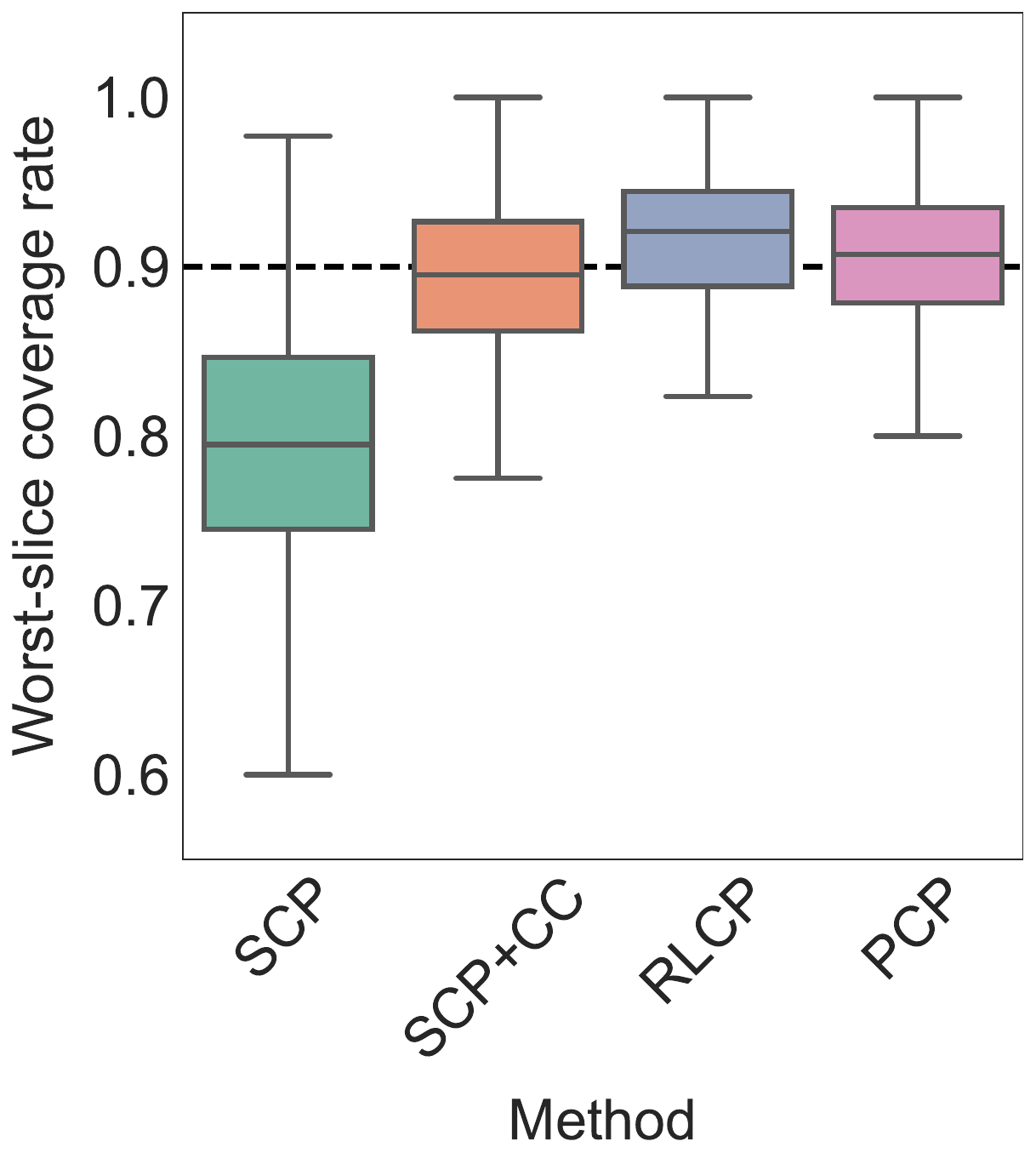}
          \caption{Worst-slice coverage.}
           \label{fig:wscr2}
     \end{subfigure}
          \begin{subfigure}[b]{0.329\textwidth}
         \centering
         \includegraphics[width=\textwidth]{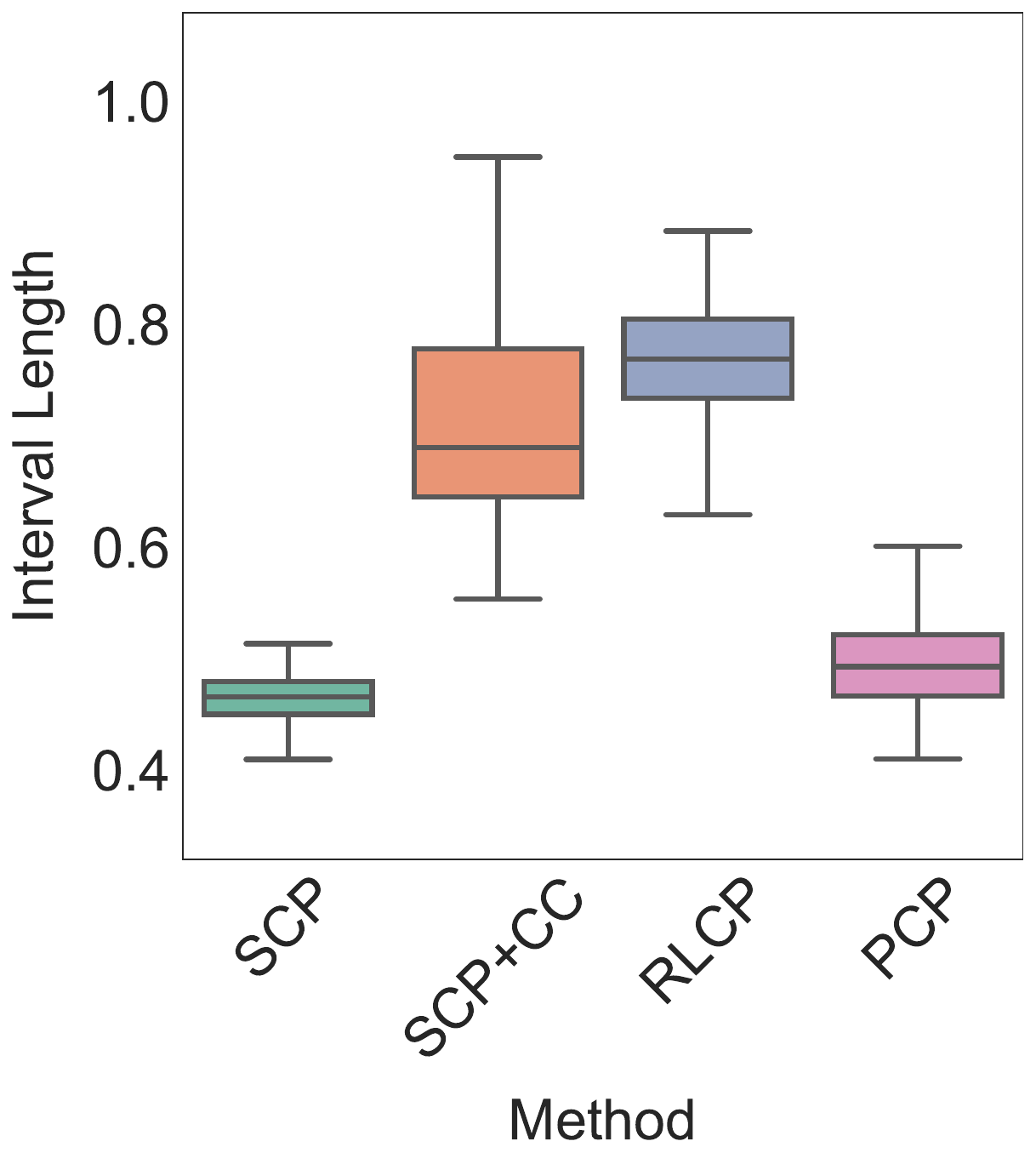}
         \caption{Interval length.
         }
          \label{fig:wscr3}
            \end{subfigure}
        \caption{Coverage rates and interval length of conformal prediction methods on the Communities and Crime dataset. If an interval is infinite, we replace its length with twice the largest absolute prediction error in the test set, which represents the range of the prediction error.     
        The results are from 200 runs of the experiments. The meaning of the labels is explained in the text.}
        \label{fig:wscr}
\end{figure}

\Cref{fig:wscr1} shows that all conformal prediction methods achieve a marginal coverage rate close to the target rate of 0.9.  Split conformal prediction (SCP) \citep{vovk2005algorithmic} has a worst-slice conditional coverage rate of around 0.8 (\Cref{fig:wscr2}). More advanced methods, including
 SCP+conditional calibration (CC) \citep{gibbs2023conformal}, randomly-localized conformal prediction (RLCP) \citep{hore2023conformal}, and our proposed method, \emph{posterior conformal prediction} (PCP), achieve a worst-slice conditional coverage rate close to 0.9. Among these advanced methods, only PCP can maintain an interval length comparable to that of SCP, as shown in \Cref{fig:wscr3}.
This improvement stems from the mixture modelling step in PCP, which identifies clusters (i.e., subpopulations) where the random forest model performs unevenly, and adjusts the interval length accordingly to ensure valid coverage in those regions. 
Code for implementing PCP and reproducing the experiments and figures in our article is available at \href{https://github.com/yaozhang24/pcp}{https://github.com/yaozhang24/pcp}.

\subsection{Related work}\label{sect:related_work}
 
We next compare PCP with existing methods that achieve approximate conditional validity, including the methods compared in the experiments above. These methods can be broadly categorized into localization and grouping. At a high level, PCP is related to both. The key difference is that PCP defines similarity between data points through their cluster membership probabilities, which are learned from the nonconformity scores rather than pre-specified based on the raw features.

Localization modifies the SCP interval in \eqref{equ:interval_scp} by using a weighted empirical quantile, where the weight on $\delta_{R_i}$ is proportional to a similarity measure between $X_i$ and $X_{n+1}$, e.g.,  $w_i \propto  \exp\big\{- \frac{1}{2\sigma^2} \|X_i- X_{n+1}\|^2\big\}$ in the localized conformal prediction (LCP) method proposed by \citet{guan2023localized}. To make this work, LCP needs to recalibrate the level $\alpha$ of the quantile to maintain marginal validity. Conceptually, we can view LCP as running SCP in the neighborhood of $X_{n+1}$, thereby improving the conditional coverage of $Y_{n+1}$. 
\citet{hore2023conformal} proposed an extension of LCP called randomly-localized conformal prediction (RLCP), where they replace 
 $X_{n+1}$ in the LCP weights by a random draw $\tilde X_{n+1}\sim \mathcal N (X_{n+1} ,\sigma^2 I)$. This randomization step makes RLCP marginally valid without any level adjustment.
Decreasing the variance parameter {$\sigma^2$} in (R)LCP widens the interval since it effectively shrinks the sample size. This strategy is different from PCP, which constructs intervals using a similarity measure between the cluster membership probabilities of $X_i$ and $X_{n+1}$.
Although $X_i$ and $X_{n+1}$ may be far apart, PCP will consider them to be close if the nonconformity score distributions conditional on $X_i$ and $X_{n+1}$ are similar.
In other words, PCP does not rely on any notion of local similarity {in the feature space}. Moreover, we shall see that PCP can also be used to solve other calibration problems, further distinguishing it from existing localization methods.

Grouping is often used to attain coverage guarantees for subgroups defined in terms of some sensitive attributes such as gender and age.
When these subgroups form a partition of the feature space $\mathcal X$, running SCP with the data points from the same group as $X_{n+1}$ can generate a group-conditional valid interval  \citep{vovk2003mondrian,romano2020malice,Kiyani2024ConformalPW,zhou2024conformal}. 
\citet{jung2022batch} introduce a more direct method called multi-calibration, which aims to achieve coverage conditional on a set of pre-specified and possibly overlapping subgroups. This method does not have finite sample coverage guarantees and can undercover in practice. 
\citet{gibbs2023conformal} propose a conditional calibration (CC) 
 method to generate intervals with exact finite-sample coverage guarantees for any pre-specified function class, subsuming a class of subgroups as a special case. 
However, using a complex function class (e.g., a large number of subgroups) may result in wide intervals (due to small sample sizes within each subgroup). We find that in contrast, PCP automatically balances the trade-off between conditional coverage and interval length.

\section{Posterior conformal prediction (PCP)}\label{sect:pcp_main}

\subsection{PCP interval with true $\pi$}\label{sect:pi_star_interval}

Recall the mixture model \eqref{equ:mixture} and let us return to the interval \eqref{equ:oracle} from \Cref{subsection:sv}, which assumes knowledge of the true cluster membership probabilities. We have seen in Proposition \ref{prop:oracle_coverage_2} that conditional validity is implied by the coverage guarantee conditional on $\pi(X_{n+1})$. There is however a major outstanding issue: it is {entirely possible} that the $\pi(X_i)$'s are all distinct across $i\in [n+1]$. When this occurs, knowledge of the values of $\pi(X_i)$ perfectly identifies each data point, and then nontrivial conditional inference is again impossible, as in the Santa Clara discussion surrounding \eqref{equ:asym_valid}.

A natural idea is to obfuscate the value of $\pi(X_{n+1})$ by introducing randomness as not to reveal the identity of the test point among the unordered collection of calibration samples and test sample. An elegant way of doing this is via sampling. We draw a sample of size $m$ from the multinomial distribution with parameters 
$\pi (X_{n+1})$ and return empirical frequencies: 
\begin{equation}\label{equ:disc}
L^\star  = (L_1^\star ,\dots , L_K^\star )\sim \text{Multi}(m,\pi(X_{n+1}))  \text{ and set }    \pi^\star(X_{n+1}) =  L^\star/m.
\end{equation}
We refer to the sample size $m$ as the precision parameter below.

Upon revealing $\pi^\star(X_{n+1})$, 
we would not be sure which of the $n+1$ multinomial distributions $\text{Multi}(m,\pi(X_{i}))$ was used to generate it---even in the case where the $\pi(X_{i})$'s are all distinct. Formally, let 
$E_{\bm z}$ denote the event that the set $\{Z_1, \dotsc, Z_{n+1}\} = \{z_i = (x_i, y_i): i \in [n+1]\}$. This says that the vector $(Z_1, \ldots, Z_{n+1})$ is a permutation of the $z_i$'s but we do not know which one. Bayes' rule gives that for each $i \in [n+1]$, 
\begin{equation}\label{equ:likelihood}
\mathbb P \big\{Z_{n+1} = z_i\mid E_{\bm z}, \pi^\star(X_{n+1})	 \big\}\propto \prod_{k=1}^{K}\big[ \pi_k(x_{i})\big]^{ L_k^\star 	}.
\end{equation}
It is clear from above that we are not sure about the identity of the test point. 
We now construct intervals as in \eqref{equ:oracle} with weights given by the right-hand side of \eqref{equ:likelihood}:
\begin{equation}\label{equ:w_i_star}
w_i^\star=\frac{\prod_{k=1}^{K}\big[\pi_k(X_i)\big]^{L_k^\star}}{\sum_{j=1}^{n+1}\prod_{k=1}^{K}\big[\pi_k(X_j)\big]^{L_k^\star}} = \frac{\prod_{k=1}^{K}\big[\pi_k(X_i)\big]^{m  \pi_k^\star(X_{n+1}) }}{\sum_{j=1}^{n+1}\prod_{k=1}^{K}\big[\pi_k(X_j)\big]^{m  \pi_k^\star(X_{n+1}) }}.
\end{equation}
\begin{theorem}\label{thm:validity_1}
Assume $(Z_1,\ldots,Z_{n+1})$, with $Z_i=(X_i,Y_i)$, is exchangeable. The prediction interval $\hat C_n(X_{n+1})$ \eqref{equ:oracle}, with weights $w_i^\star$ defined in \eqref{equ:w_i_star},
satisfies 
\begin{equation}\label{equ:tilde}
\mathbb P \big\{ Y_{n+1}\not\in   \hat C_{n}(X_{n+1})  	
\mid	\pi^\star(X_{n+1}) \big\}\leq \alpha.
\end{equation}
Marginalizing out $\pi^\star(X_{n+1})$ also establishes that $\hat C_{n}(X_{n+1})$  has marginal validity. 
\end{theorem}
Returning to our election-forecasting example, the empirical frequencies \(\pi^\star(X_{n+1})\) describe how often Santa Clara is assigned to each cluster over the \(m\) draws in \eqref{equ:disc}. With this in mind, the guarantee in \eqref{equ:tilde} means that the interval \(\hat C_n(X_{n+1})\) has valid coverage for counties whose cluster-assignment frequencies are similar to those of Santa Clara (similarity is thus defined through the cluster-assignment frequencies rather than the raw covariates). Moreover, as the precision parameter \(m\) grows, \(\pi^\star(X_{n+1})\) concentrates around \(\pi(X_{n+1})\). Proposition \ref{prop:oracle_coverage_2} then implies that this cluster-based guarantee approaches conditional coverage at the test point itself, so \(\hat C_n(X_{n+1})\) achieves approximately valid coverage conditional on \(X_{n+1}\).

To demonstrate the necessity of randomization, we prove in \Cref{sect:random_nonrandom} that our interval becomes invalid even if it uses similar but non-randomized weights.
`Noising' data is of course a common strategy in the literature for constructing valid intervals as can be seen in \cite{tian2018selective,barber2022testing,hore2023conformal,panigrahi2024exact}. 
PCP randomizes the interval by sampling a multinomial draw $L^\star$ in \eqref{equ:disc}, which yields empirical frequencies $\pi^\star(X_{n+1})=L^\star/m$ preserving a cluster membership interpretation. In addition, when $\pi(X_{n+1})$ is binary with one entry equal to one, i.e. the membership of $X_{n+1}$ is known with no uncertainty, this sampling step introduces no error since  $\pi^\star(X_{n+1})=\pi(X_{n+1})$.

\textbf{RLCP vs.~PCP.} We have seen that RLCP adds noise by drawing $\tilde X_{n+1}$ from $\mathcal N (X_{n+1},\sigma^2 I)$ and constructs intervals with a coverage guarantee conditional on $\tilde X_{n+1}$. When the value of $\sigma^2$ is small, $X_{n+1}$ may effectively become the only neighbor of $\tilde X_{n+1}$ among all $X_i$'s and the RLCP interval becomes infinite. This happens as soon as the weight $w_{n+1} \propto  \exp\big\{- \|X_{n+1} - \tilde X_{n+1}\|^2/2\sigma^2 \big\}$ assigned to $\delta_{+\infty}$ exceeds $\alpha$. Conversely, using a large value of $\sigma^2$ increases the distance between $\tilde X_{n+1}$ and $X_{n+1}$, weakening the guarantee conditional on $\tilde X_{n+1}$. Thus, while introducing noise may be a simple way to discard information about $X_{n+1}$, it may not effectively {balance the trade-off between conditional coverage and interval length.}

In contrast, let $D_{\textup{KL}}(\hspace{-1pt}\cdot \hspace{-0.5pt} \| \hspace{-0.5pt}\cdot \hspace{-1pt} )$ denote the Kullback--Leibler (KL) divergence. 
Then, for large \(m\), the weights \(w_i^\star\) in \eqref{equ:w_i_star} can be approximated by
\begin{equation}\label{equ:likelihood_2}
w_i^\star
\approx
\frac{\exp\left\{- m D_{\textup{KL}}\big(\pi(X_{n+1})\| \pi(X_i)\big)\right\}}
{\sum_{j=1}^{n+1}\exp\left\{- m D_{\textup{KL}}\big(\pi(X_{n+1})\| \pi(X_j)\big)\right\}}.
\end{equation}
The weights \(w_i^\star\) can differ substantially from the local weights used by RLCP, especially when the validation points are far from the test point in the raw feature space, yet have similar membership probabilities. In such settings, the interval constructed from $w_i^\star$ can remain finite, that is, \(w_{n+1}^\star < \alpha\), even when the precision parameter $m$ is large. The next theorem illustrates this phenomenon through an example in which the probability that the interval is finite can be characterized explicitly. 

\begin{theorem}\label{thm:length}
Assume the model \eqref{equ:mixture} holds, $\pi(X_{n+1})\sim \textup{Dir}(\Lambda := (\Lambda_1,\Lambda_2,\Lambda_3))$ and $\bar \Lambda = \sum_{k=1}^3\Lambda_k$.
 The interval $\hat  C_{n}(X_{n+1})$ described in \Cref{thm:validity_1} satisfies 
\vspace{2pt}
\begin{align*}
& \mathbb P \big\{  |\hat C_{n}(X_{n+1})| < \infty  	\mid  X_1,\dots,X_{n},\pi^\star(X_{n+1}) = \Lambda/\bar \Lambda \big\}  \\
 = &  \min \left\{\frac{\alpha}{1-\alpha}\sum_{i=1}^{n}\exp \left(-m D_{\textup{KL}}\big( \pi^\star(X_{n+1}) \| \pi(X_{i}) \big)  \right),1\right\}^{(m+\bar \Lambda)/m }+ O(1/m).\footnotemark 
\end{align*}
\footnotetext{
The proof in \Cref{sect:length} is based on an interesting observation that, after some transformation, the non-normalized weight of $\delta_{+\infty}$ in $\hat  C_{n}(X_{n+1})$  approximately follows an exponential distribution. We also provide a set of  experiments to verify this theory across various $\Lambda$.}
\end{theorem}
The result above gives an example in which membership probability vectors on the simplex follow a Dirichlet distribution and are concentrated around the mean \(\Lambda/\bar\Lambda\). By conditioning on \(\pi^\star(X_{n+1})=\Lambda/\bar\Lambda\), we place the randomized membership vector of the test point at a high-density location, so that many validation points can also have membership probabilities \(\pi(X_i)\) close to \(\pi^\star(X_{n+1})\) in KL divergence. If a nontrivial fraction of these points satisfy
$D_{\textup{KL}}\big(\pi^\star(X_{n+1})\|\pi(X_i)\big)=O(1/m)$,
then the sum in the theorem can reach \((1-\alpha)/\alpha\), implying that \(\hat C_n(X_{n+1})\) is finite with high probability. In this sense, \Cref{thm:length} illustrates that increasing \(m\) can improve conditional coverage without necessarily forcing the interval to become infinite.

\subsection{PCP interval with estimated $\pi$}
\label{sect:e_validity}

In reality, we do not have access to the true cluster membership probabilities, and we shall instead learn them from data. The estimated
membership probabilities, and hence the interval weights, may depend on the responses.

For the conformal machinery to go through, the learning algorithm for the membership probabilities must treat the validation and test samples symmetrically. 
However, the test response \(Y_{n+1}\) is unobserved. Thus, for each candidate value \(y\), we impute \(Y_{n+1}=y\) and relearn the
membership probabilities from the imputed dataset
\[
\mathcal D_{[n+1]}^y
=
\{(X_1,Y_1),\dots,(X_n,Y_n),(X_{n+1},y)\}.
\]
Concretely, fix a number \(J\) of clusters. Our learning algorithm
\(\mathcal A\) maps \(\mathcal D_{[n+1]}^y\) to the estimated membership probabilities \(\hat\pi^y\).
We require \(\mathcal A\) to satisfy
\begin{equation}\label{equ:sym}
\mathcal A ((x_{\sigma (1)},y_{\sigma (1)}),\dots,
(x_{\sigma (n+1)},y_{\sigma (n+1)}))
\stackrel{d}{=}
\mathcal A ((x_{1},y_{1}),\dots,(x_{n+1},y_{n+1})),
\end{equation}
for all permutations \(\sigma\) on \([n+1]\). The equality in distribution
allows \(\mathcal A\) to be randomized, provided that its randomization is
independent of the data order.

\textbf{Notation.} Before defining PCP with estimated membership probabilities, \Cref{tab:pcp_notation} summarizes the notation changes from true to estimated probabilities.
We retain the superscript \(y\) throughout the definition of the prediction set, since \(y\) ranges over candidate responses. For the coverage statements
and subsequent discussion, when the imputed value is the realized response
\(Y_{n+1}\), we suppress the superscript and write
\begin{equation}\label{equ:shorthand}
\begin{aligned}
\mathcal D_{[n+1]} & = \mathcal D_{[n+1]}^{Y_{n+1}}, 
&\qquad
\hat\pi(X_i) &= \hat\pi^{Y_{n+1}}(X_i), \\
\hat L^\star &= \hat L^{\star,Y_{n+1}},
&\qquad
\hat\pi^\star(X_{n+1})
&=
\hat\pi^{\star,Y_{n+1}}(X_{n+1})
=
\hat L^\star/m .
\end{aligned}
\end{equation}
The same convention applies to the interval weights
\(\hat w_i^\star(\mathcal D_{[n+1]})\).

\begin{table}[t]
\centering
\footnotesize
\setlength{\tabcolsep}{10pt}
\renewcommand{\arraystretch}{2.5}
\begin{tabular}{lcc}
\toprule
Quantity & True & Estimated with imputed \(y\) \\
\midrule
Membership probabilities
& \(\pi(X_i)=[\pi_1(X_i),\dotsc,\pi_K(X_i)]\)
& \(\hat\pi^y(X_i)=[\hat\pi_1^y(X_i),\dotsc,\hat\pi_J^y(X_i)]\) \\

Multinomial draw
& \(L^\star \sim \mathrm{Multi}(m,\pi(X_{n+1}))\)
& \(\hat L^{\star,y} \sim \mathrm{Multi}(m,\hat\pi^y(X_{n+1}))\) \\

Randomized probabilities
& \(\pi^\star(X_{n+1})=L^\star/m\)
& \(\hat\pi^{\star,y}(X_{n+1})=\hat L^{\star,y}/m\) \\

\bottomrule
\end{tabular}
\vspace{2pt}
\caption{Notation for true and estimated membership probabilities in PCP. The superscript \(y\) indicates that the probabilities are computed after imputing the test response as \(y\), and the superscript \(\star\) denotes the randomized multinomial version.}
\label{tab:pcp_notation}
\end{table}

\textbf{PCP interval.} The prediction interval of PCP is defined as 
\begin{equation}\label{equ:posterior_p}
\hat C_{n}^{\textup{PCP}}(X_{n+1}) = \left\{y\in \mathcal{Y}: P_{n}^{\textup{PCP}}(y) := \sum_{i=1}^{n+1}\hat w_{i}^\star(\mathcal D_{[n+1]}^y)\one {\left\{R_i\geq R_{n+1}^{y}\right\}}  >\alpha \right\},
\end{equation}
where $R_i = |Y_i - \hat \mu(X_i)|, R_{n+1}^{y} = |y-\hat \mu(X_{n+1})|$, and the model $\hat \mu(\cdot)$ is as before. The weights $\hat w_i^\star(\mathcal D_{[n+1]}^y)$ are defined using $\hat L^{\star,y} \sim \text{Multi}(m,\hat \pi^{y}(X_{n+1}))$ by 
\[
\hat  w_{i}^\star(\mathcal D_{[n+1]}^y)
=
\frac{
\prod_{k=1}^{J}\big[ \hat \pi_k^y(X_{i})\big]^{m\hat \pi_k^{\star,y}(X_{n+1})}
}{
\sum_{j=1}^{n+1} \prod_{k=1}^{J}\big[ \hat \pi_k^y(X_{j})\big]^{m \hat \pi_k^{\star,y}(X_{n+1})}
},
\qquad
\hat \pi^{\star,y}(X_{n+1}) = \hat L^{\star,y}/m.
\]
When computing the interval in \eqref{equ:posterior_p}, we need to estimate \(\hat \pi^y\) for all \(y\in\mathcal Y\). We address this challenge by
developing an efficient algorithm \(\mathcal A\), which has linear complexity in the sample size \(n\).
Since this algorithm is not required for the applications of PCP in \Cref{sect:fair,sect:pcc}, we defer its description to \Cref{sect:alg}. 

\vspace{5pt}
\begin{proposition}\label{thm:pcp_cover}
Using the shorthand notation in \eqref{equ:shorthand}, suppose that the sequence \((Z_1,\ldots,Z_{n+1})\), with \(Z_i=(X_i,Y_i)\), is exchangeable.
Assume that the learning algorithm \(\mathcal A\) satisfies the symmetry condition in \eqref{equ:sym}. Then, for any positive integer \(m\), the
posterior conformal prediction interval \eqref{equ:posterior_p} satisfies
\begin{equation}\label{equ:pcp_guarantee}
\mathbb P \Big\{ Y_{n+1}\not\in \hat C_{n}^{\textup{PCP}}(X_{n+1}) \mid  \hat\pi^\star(X_{n+1}) \Big\} \leq \alpha.
\end{equation}
Furthermore, if all the residuals $R_{[n+1]}$ are distinct with probability 1, 
\begin{equation}\label{equ:pcp_guarantee_2}
\mathbb P \Big\{ Y_{n+1}\not\in \hat C_{n}^{\textup{PCP}}(X_{n+1}) \mid  \hat\pi^\star(X_{n+1}) \Big\}
\geq \alpha - \mathbb E \Big\{ \max_{i\in [n+1]} \hat w_{i}^\star(\mathcal D_{[n+1]}) \mid  \hat\pi^\star(X_{n+1}) \Big\}.
\end{equation}
\vspace{-1.5em}
\end{proposition}
If the maximum weight in \eqref{equ:pcp_guarantee_2} is small, our interval is finite and has a coverage rate close to $1-\alpha$. This occurs when sufficiently many $\hat \pi(X_{i})$'s are close to  $\hat \pi^{\star}(X_{n+1})$, which resembles the condition for obtaining a finite interval in \Cref{thm:length}. To control the magnitude of the weights, we select the number $J$ of cluster distributions and the precision parameter $m$ using the dataset originally used to fit the model $\hat\mu$. Specifically, we generate a separate set of residuals from this dataset via cross-validation and use them to construct PCP intervals and choose $J$ and $m$ so that most intervals remain finite and stable; see \Cref{sect:hyper} for more detail.

\begin{figure}[t]
\vspace{-35pt}
%will remove -35pt later
     \centering
     \begin{subfigure}[b]{0.45\textwidth}
     \hspace{-15pt}
         \centering
         \includegraphics[width=0.91\textwidth]{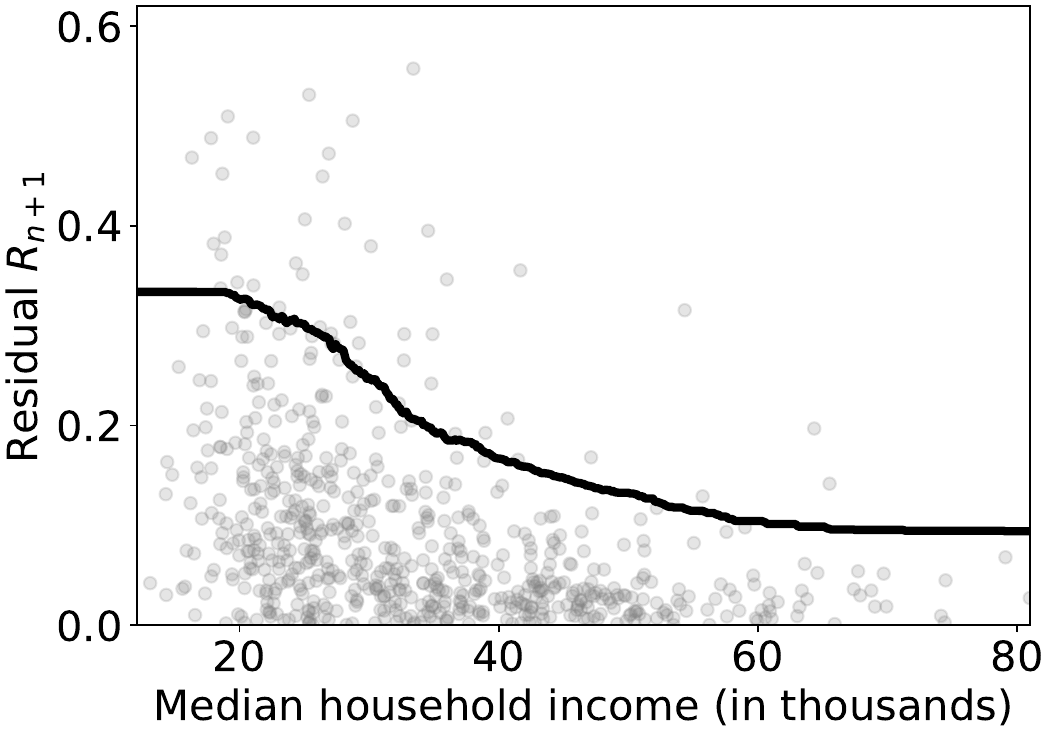}
          \caption{Residuals and PCP quantile.}
          \vspace{15pt}
           \label{fig:g1}
     \end{subfigure}
 \begin{subfigure}[b]{0.45\textwidth}
         \centering
         \includegraphics[width=0.91\textwidth]{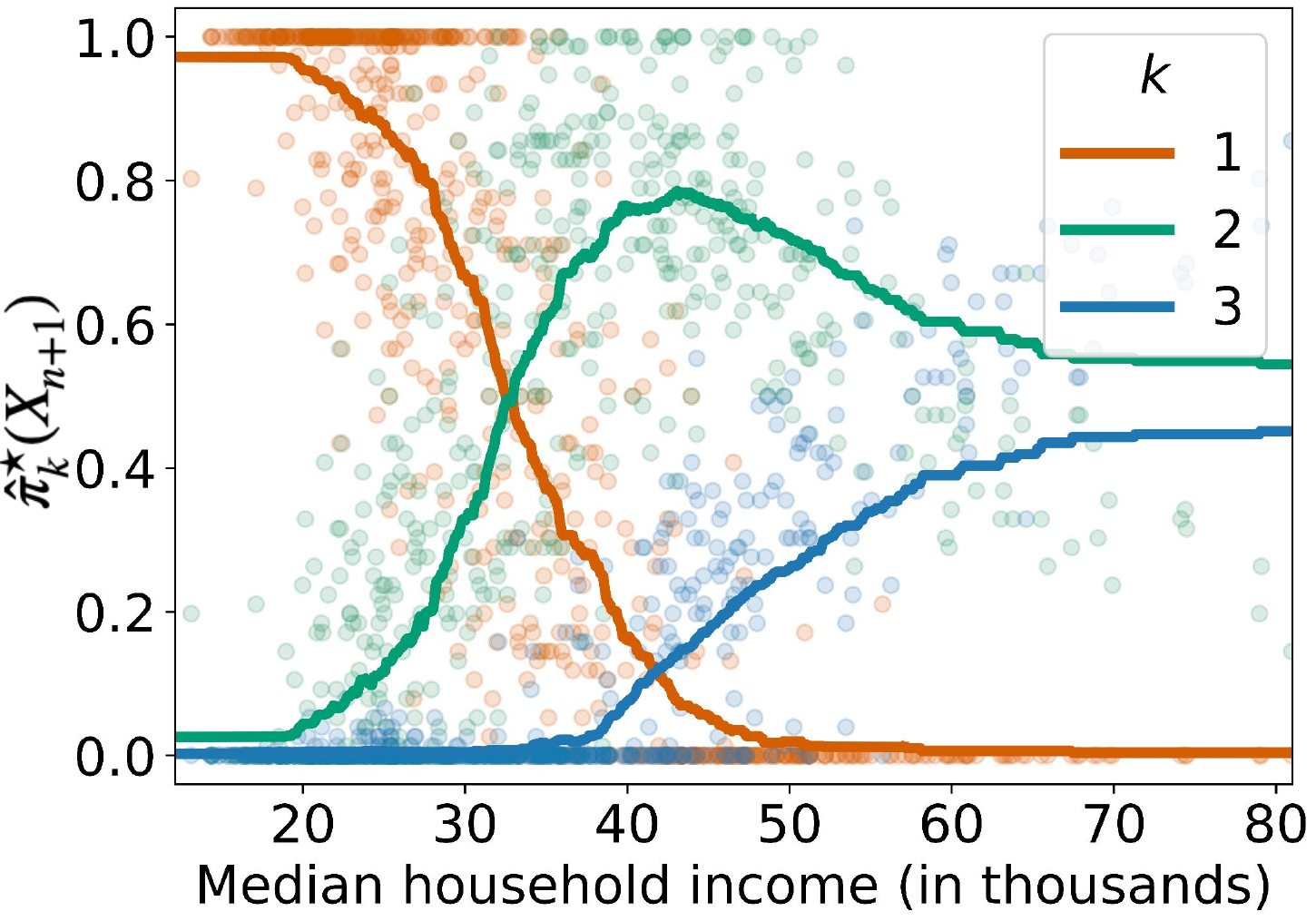}
         \caption{Membership probabilities.}
            \vspace{15pt}
          \label{fig:g2}
            \end{subfigure}    
     \begin{subfigure}[b]{0.45\textwidth}
     \hspace{-15pt}
         \centering
         \includegraphics[width=0.91\textwidth]{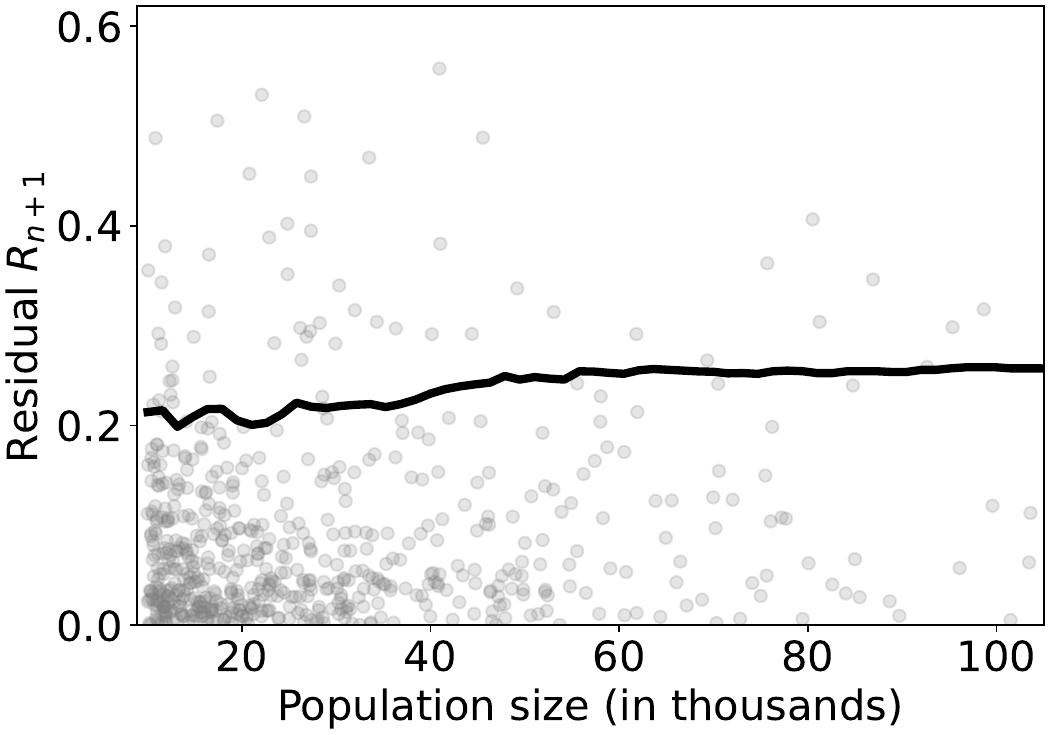}
          \caption{Residuals and PCP quantile.}
          \vspace{5pt}
     \label{fig:g3}
     \end{subfigure}
	\begin{subfigure}[b]{0.45\textwidth}
         \centering
         \includegraphics[width=0.91\textwidth]{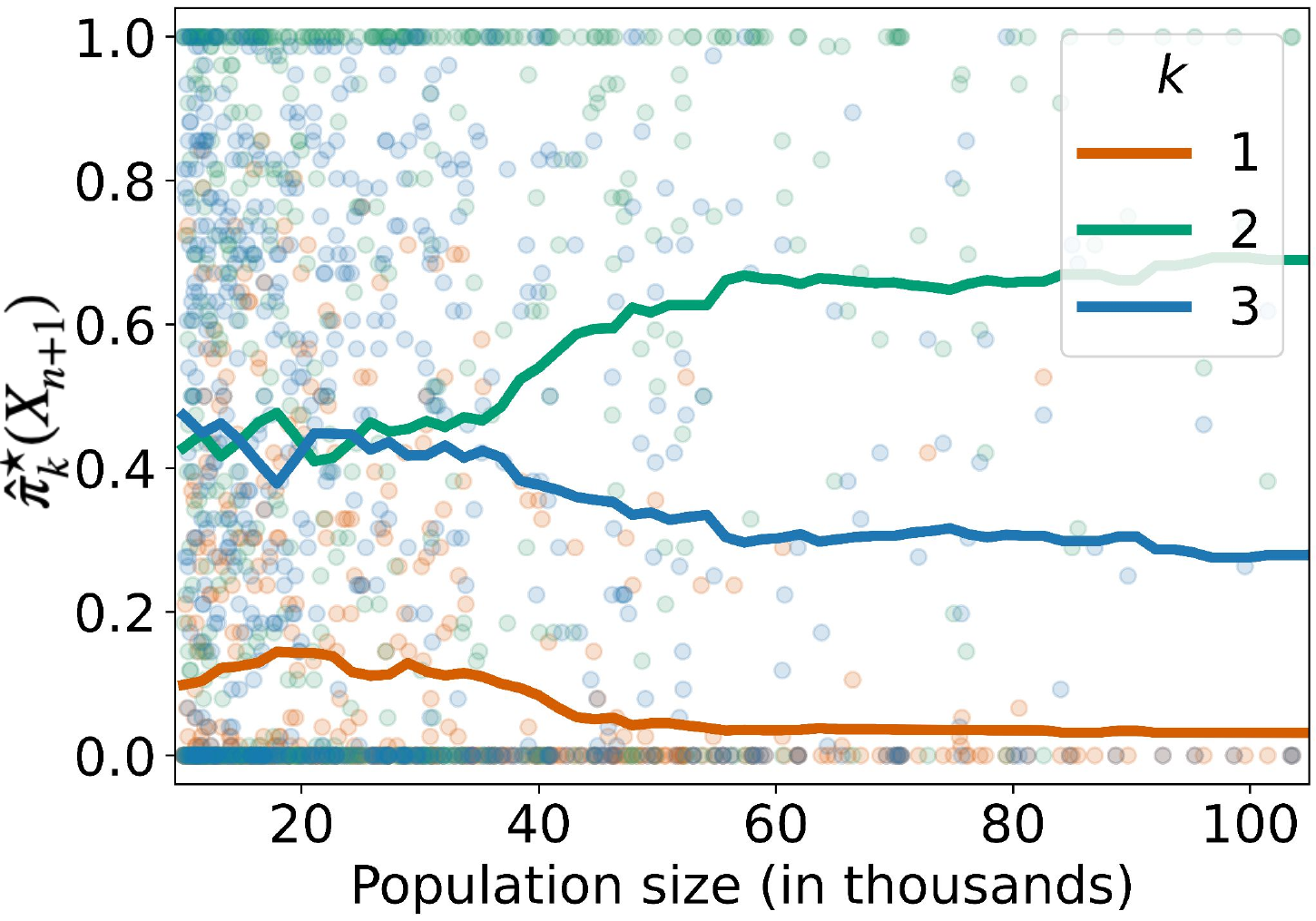}
         \caption{Membership probabilities.}
            \vspace{5pt}
          \label{fig:g4}
            \end{subfigure}
\caption{Illustration of PCP quantiles and membership probabilities over the population size and income features in the Communities and Crime dataset. Each grey dot in panels (a) and (c) represents the residual of a test point. The black curves in the same panels show the local average of the upper residual bounds of PCP, computed over the 100 neighboring test points along the horizontal feature axis, where the upper residual bound is the upper endpoint of the PCP interval for $Y_{n+1}$ minus the prediction $\hat{\mu}(X_{n+1})$.
In panels (b) and (d), each test point contributes three dots, one for each cluster membership probability $\hat \pi_k^\star(X_{n+1})$, $k=1,2,3$, with cluster identity indicated by color. The curves show the corresponding local average probabilities, again computed over the 100 neighboring test points along the horizontal feature axis.
}
\label{fig:guarantee_pi}
\end{figure}

One way to interpret the guarantee in \eqref{equ:pcp_guarantee} is to view the probabilities $\hat \pi^\star(X_{n+1})$ as a ``soft'' prediction of the cluster membership of $X_{n+1}$. 
When $\hat \pi_k^\star(X_{n+1}) \approx 1$ for some $k$, PCP assigns more weight to data points that are likely to belong to cluster $k$. 
When the membership is uncertain and the values of $\hat \pi_k^\star(X_{n+1})$ are nearly equal across $k$, PCP instead upweights data points with similarly uncertain membership. 
This conditioning can serve as a useful proxy for conditioning on $X_{n+1}$ when the residual distribution is well described by  
$R \mid X \sim \sum_{k=1}^{J} \hat \pi_k(X) g_k$ 
for some cluster distributions $g_1,\dotsc,g_J$, as discussed above.

To illustrate, we show further results on the Communities and Crime dataset \citep{dua2017uci} introduced in \Cref{sect:exp_preview}. 
\Cref{fig:g1,fig:g3} show how the test residuals vary across two features, \quotes{median household income} and \quotes{population size}, respectively.
\Cref{fig:g2,fig:g4} depict the membership probabilities for three clusters as a function of the same two features. When the residuals increase in lower-income communities, as shown in \Cref{fig:g1}, the likelihood of belonging to the first cluster also rises with decreasing income; see \Cref{fig:g2}. 
The first cluster primarily consists of low-income communities, while the other two comprise higher-income communities. 
In contrast, the residual distribution is nearly invariant across population size (\Cref{fig:g3}), and consequently, $\hat \pi^\star(X_{n+1})$ only exhibits minor variations across population sizes (\Cref{fig:g4}).  We can thus see that 
$\hat \pi^\star(X_{n+1})$ summarizes the feature information related to the residual distribution while marginalizing out the irrelevant factors. Consequently, PCP has approximately valid coverage conditional on these features.

\Cref{thm:pcp_cover} offers a guarantee based on a mixture model learned from the data. We next assume that the residuals follow a true mixture model \eqref{equ:mixture}, and then bound the conditional coverage gap of our interval, without requiring the number of cluster distributions to be specified correctly. Our bounds are inspired by those in \citet{barber2022conformal}.  In the theorems below, we consider the KL limit of our interval in \eqref{equ:likelihood_2}.

\vspace{8pt}
\begin{theorem}\label{thm:approximate_valid_2}
If the mixture model \eqref{equ:mixture} holds, the PCP interval \eqref{equ:posterior_p} using weights 
$\hat w_{\textup{KL},i} \propto \exp\{- m D_{\textup{KL}}(\hat \pi(X_{n+1})\| \hat \pi(X_{i}))\}
$ for some fixed $\hat \pi$  obeys 
\vspace{-0.1em}
\[
\mathbb P \big\{ Y_{n+1}\not \in  \hat C_{n}^{\textup{PCP}}(X_{n+1}) 	\mid	 X_{1},\dots, X_{n+1} \big\}	
\leq  \alpha  + 2\sum_{i=1}^{n+1} \hat w_{\textup{KL},i} \sum_{k=1}^{K} |\pi_k(X_{i})- \pi_k(X_{n+1})|.
\]
\end{theorem}
As demonstrated in \Cref{fig:guarantee_pi}, PCP fits a mixture model to the residuals, allowing the membership probabilities $\hat \pi$ to identify the key features driving the differences in the conditional distribution of the residuals.
\Cref{thm:approximate_valid_2} shows that the weight $\hat w_{\textup{KL},i}$ used in the PCP interval can reduce the conditional coverage gap when the KL divergence $D_{\textup{KL}}(\hat \pi(X_{n+1})\| \hat \pi(X_{i}))$ increases with the true membership discrepancy $\sum_{k=1}^{K} |\pi_k(X_{i})- \pi_k(X_{n+1})|$; 
that is when data points $X_i$ whose membership probabilities differ from those of the test point $X_{n+1}$ receive smaller weights $\hat w_{\textup{KL},i}$.

\clearpage

\begin{theorem}\label{thm:approximate_valid_3}
In the setting of \Cref{thm:approximate_valid_2}, assume additionally that
$
\pi_k(X_i)>0, \forall i\in[n+1],\ k\in[K],
$
almost surely. Then it holds that
\[
\mathbb P \big\{ Y_{n+1}\not \in  \hat C_{n}^{\textup{PCP}}(X_{n+1}) \mid X_1,\dots, X_{n+1} \big\} 
\leq 
\alpha \sum_{k=1}^{K}
\frac{ \pi_k(X_{n+1}) }
{\sum_{i=1}^{n} \hat w_{\textup{KL},i}\pi_k(X_i)}.
\]
\vspace{-2em}

\end{theorem}
When $\pi$ is sparse, \Cref{thm:approximate_valid_3} yields a tighter upper bound that merely 
depends on the nonzero components of $\pi$. For instance, if $\pi_k{(X_{n+1})} \approx 1$ and $\pi_{k'}{(X_{n+1})} \approx 0$ for $k'\neq k,$ the bound can be written as 
\[\alpha /\sum_{i=1}^{n} \hat w_{\textup{KL},i}\pi_k{(X_{i})}\approx \alpha +\alpha \left[\pi_k{(X_{n+1})} -\sum_{i=1}^{n} \hat w_{\textup{KL},i}\pi_k{(X_{i})} \right]\approx \alpha,
\]
which avoids dependence on all $K$ components of $\pi(X_{n+1})$ as in
\Cref{thm:approximate_valid_2}.

PCP may be less effective when the coverage gaps in the theorems are difficult to minimize. For example, when the predictive model $\hat \mu$ performs unevenly across many regions of the feature space, the conditional residual distribution may only be represented by a complex mixture model \eqref{equ:mixture} with a large number
$K$ of cluster distributions and/or with nonsmooth membership probabilities $\pi(X_i)$.
To improve coverage in such cases, one would need to increase the precision
parameter $m$ or the dimension of $\pi$.
However, doing so increases $\hat w_{\textup{KL},n+1}$ and widens
the intervals.

\section{Empirical performance}\label{sect:empirical}

This section introduces experiments comparing PCP with the other methods from \Cref{fig:wscr}. 
All implementation details can be found in \Cref{sect:D}.

\subsection{Experiments on synthetic data}  

We consider two settings, the first described here and the second in the Appendix. We begin by creating a 6-dimensional feature vector $X$ by sampling each feature from the uniform distribution on $[0,8].$ We let $V$ denote the first feature in $X$. We then define a  mean function 
$f(V) = -3V + V^{2} - 5 V\sin (V)$ and use it to generate the response,
\[
Y = f(V)  + [4+2(V- 2)^2]\epsilon_i, \quad \epsilon_i\sim \mathcal{N}(0,1).
\]
We generate training, validation, and test sets, each consisting of 5000 independent copies of $(X,Y)$. We fit a random forest model $\hat \mu$ on the training set and use it to predict the responses in the validation and test sets. Because $\hat \mu$ accurately approximates $f(V)$, most of the variation in the residual $R = |Y-\hat \mu(X)|$ given $X$ comes from the variance function of $Y$. Consequently, $R$ is approximately distributed according to the mixture model in \eqref{equ:mixture} with a small number of clusters. This is precisely the setting in which the PCP interval can remain finite even when a large precision parameter $m$ is used to reduce the coverage gap in \Cref{thm:approximate_valid_2}.

\Cref{fig:sim_1_coverage} shows the local average coverage rates of four conformal prediction methods over the feature $V$, computed using the 250 nearest test points. PCP is the only method that maintains a coverage rate close to the target level $1-\alpha=0.9$ across the range of $V$. In contrast, SCP constructs intervals without using the feature $V$, and its local coverage rate deviates substantially from 0.9. SCP+CC constructs intervals using a linear quantile regression model for the residuals, and loses coverage for small and large values of $V$, where the model fails to capture the nonlinear changes in the conditional residual distribution. RLCP also suffers coverage loss, suggesting that matching on all features may overlook the residual variation driven by $V$.

\begin{figure}[t] 
	\centering  
         \includegraphics[width=\textwidth]{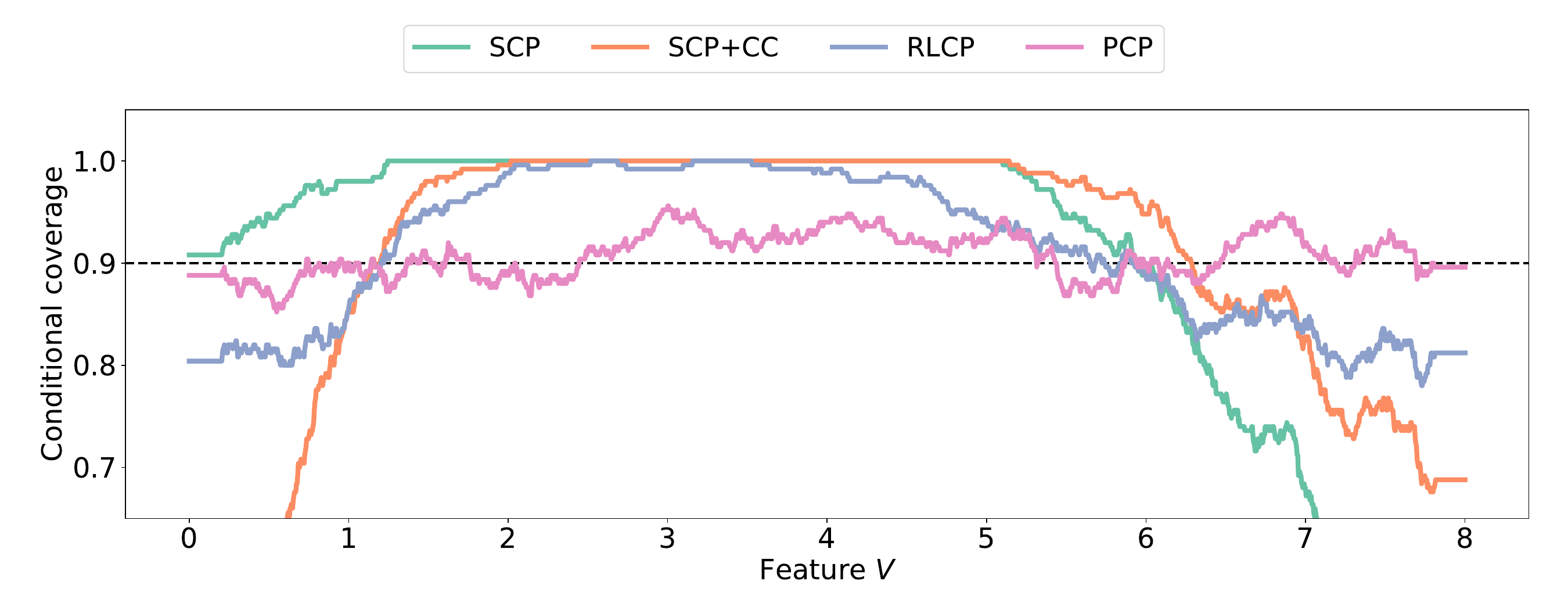}
         	\hspace{-16pt}
\caption{Local average coverage rates of conformal prediction methods in Setting 1.}
\label{fig:sim_1_coverage}
\end{figure}

For a closer look, \Cref{fig:sim_1_interval} visualizes the test responses in grey together with the corresponding prediction intervals. In \Cref{fig:sim_1_RCLP}, the RLCP intervals fluctuate irregularly across adjacent values of $V$, moving repeatedly in and out of the cloud of grey points. By contrast, in \Cref{fig:sim_1_PCP}, the PCP intervals track the local spread of the grey points more closely. This behavior can be attributed to the PCP hyperparameters \((J,m)=(3,276)\) selected by the procedure in \Cref{sect:hyper}.
\Cref{fig:pi_sim1} (the same type of plot as \Cref{fig:g4}) shows that the membership probabilities of PCP vary smoothly over $V$ while fluctuating randomly over the second feature $W$ in $X$. This suggests that the mixture model in PCP identifies $V$ as the main feature governing the conditional residual distribution, allowing PCP to adapt its interval length to the systematic residual variation along $V$. In \Cref{sect:synthetic_data}, we further show that PCP can also handle nonsmooth variations.

\begin{figure}[t]
\vspace{-20pt}
     \centering
  \begin{subfigure}[b]{1\textwidth}      
	\centering 
         \includegraphics[width=0.99\textwidth]{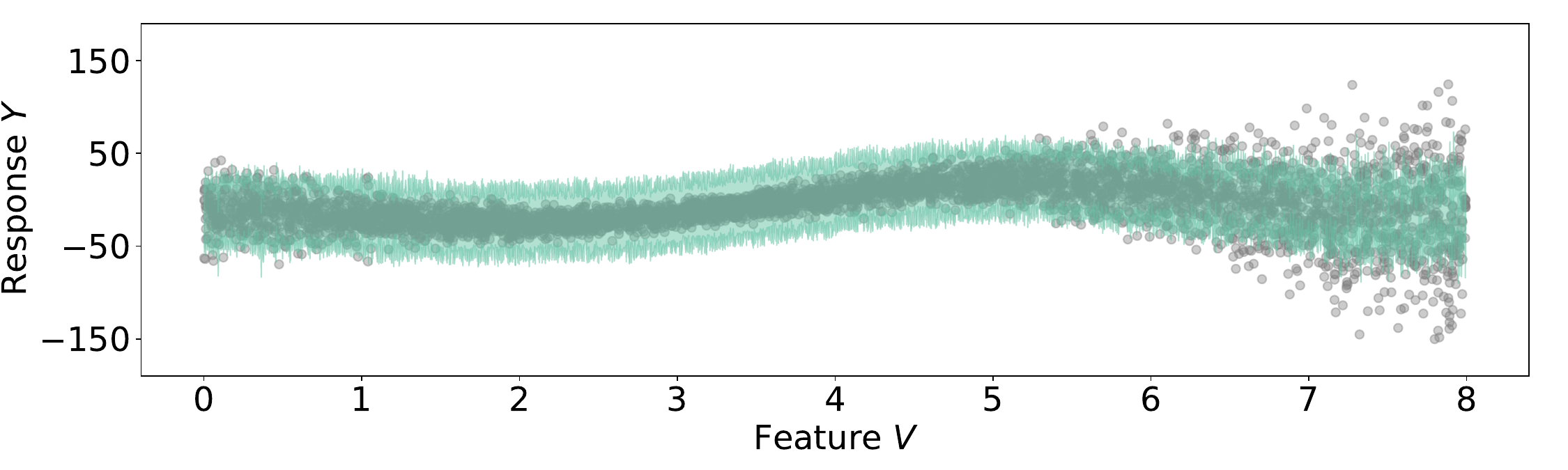}
          \caption{Split conformal prediction (SCP).}
          \vspace{5pt}
          \label{fig:sim_1_SCP}
\end{subfigure}
   \begin{subfigure}[b]{1\textwidth}    
	\centering  
         \includegraphics[width=0.99\textwidth]{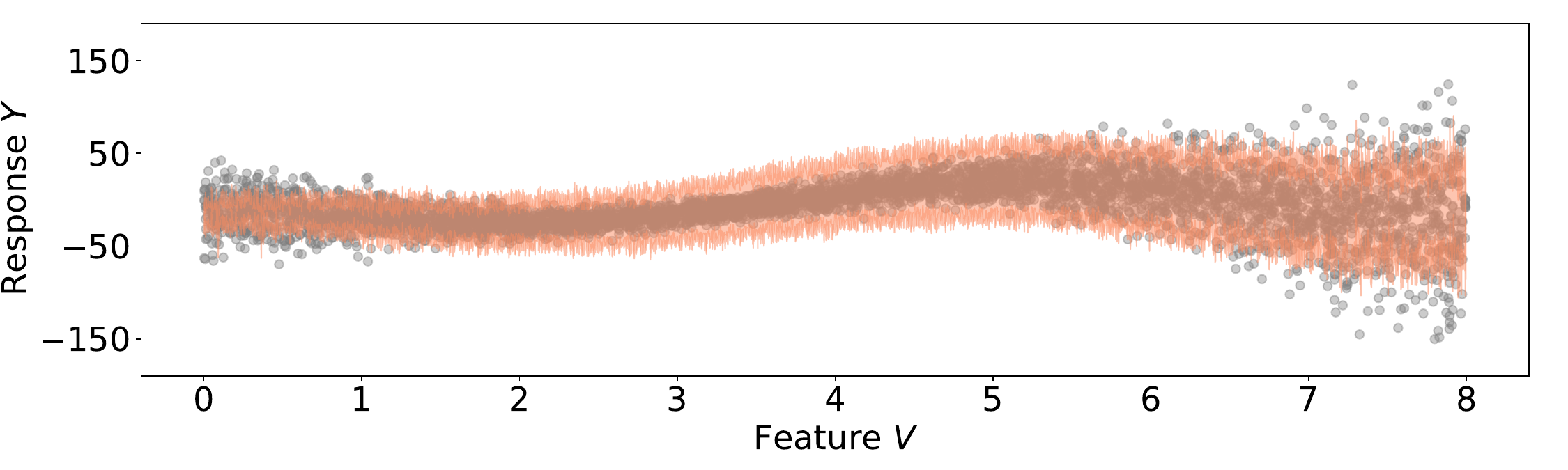}
          \caption{SCP+conditional calibration (SCP+CC).}
                \vspace{5pt}
             \label{fig:sim_1_CC}
	\end{subfigure}	
   \begin{subfigure}[b]{1\textwidth}    
	\centering  
         \includegraphics[width=0.99\textwidth]{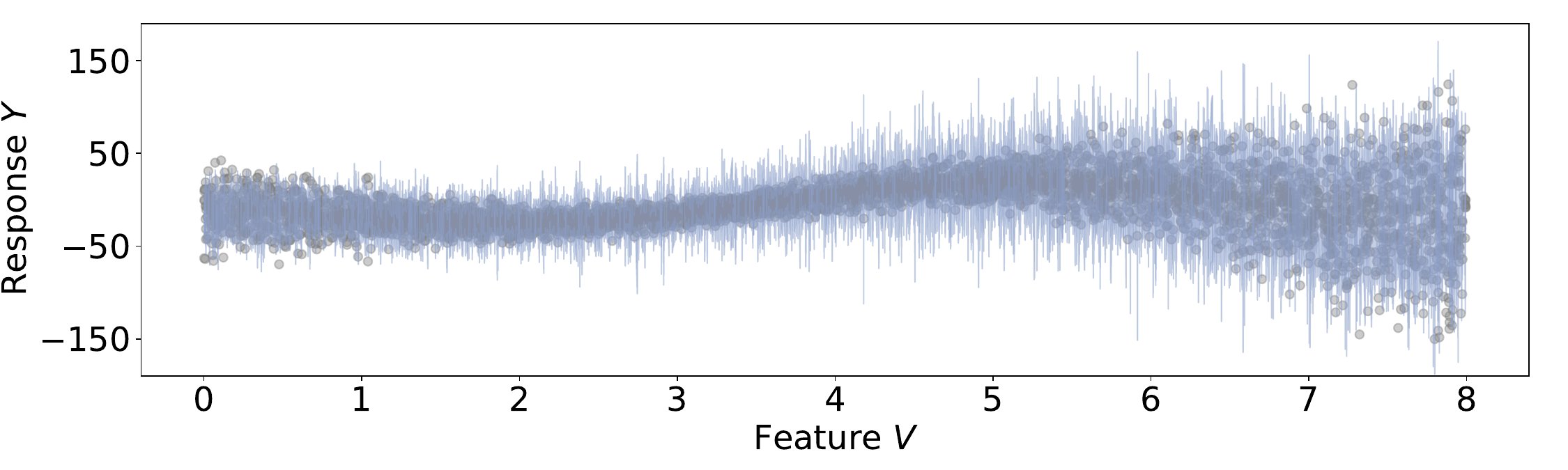}
          \caption{Randomly-localized conformal prediction (RLCP).}
                \vspace{5pt}
             \label{fig:sim_1_RCLP}
	\end{subfigure}	
   \begin{subfigure}[b]{1\textwidth}    
	\centering  
         \includegraphics[width=0.99\textwidth]{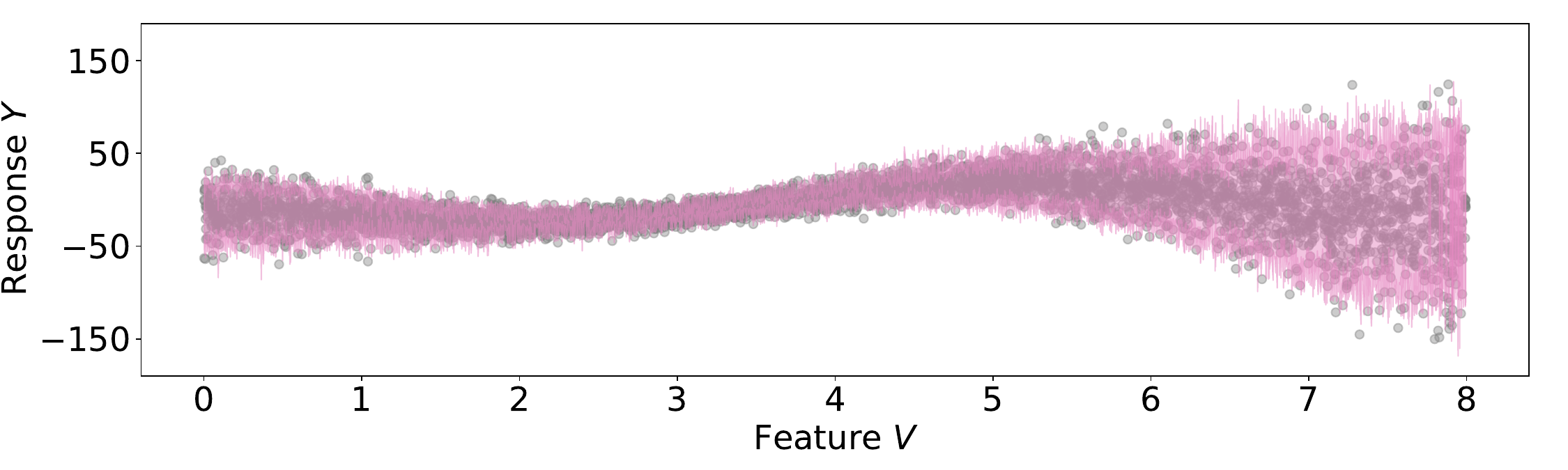}
          \caption{Posterior conformal prediction (PCP).}
                \vspace{5pt}
             \label{fig:sim_1_PCP}
	\end{subfigure}	
\caption{Prediction intervals of conformal prediction methods in Setting 1.}
\label{fig:sim_1_interval}
\end{figure}

\clearpage

\begin{figure}[t]
     \centering
       \vspace{5pt}
  \begin{subfigure}[b]{0.49\textwidth}      
  \hspace{-28pt}
	\centering 
         \includegraphics[width=0.97\textwidth]{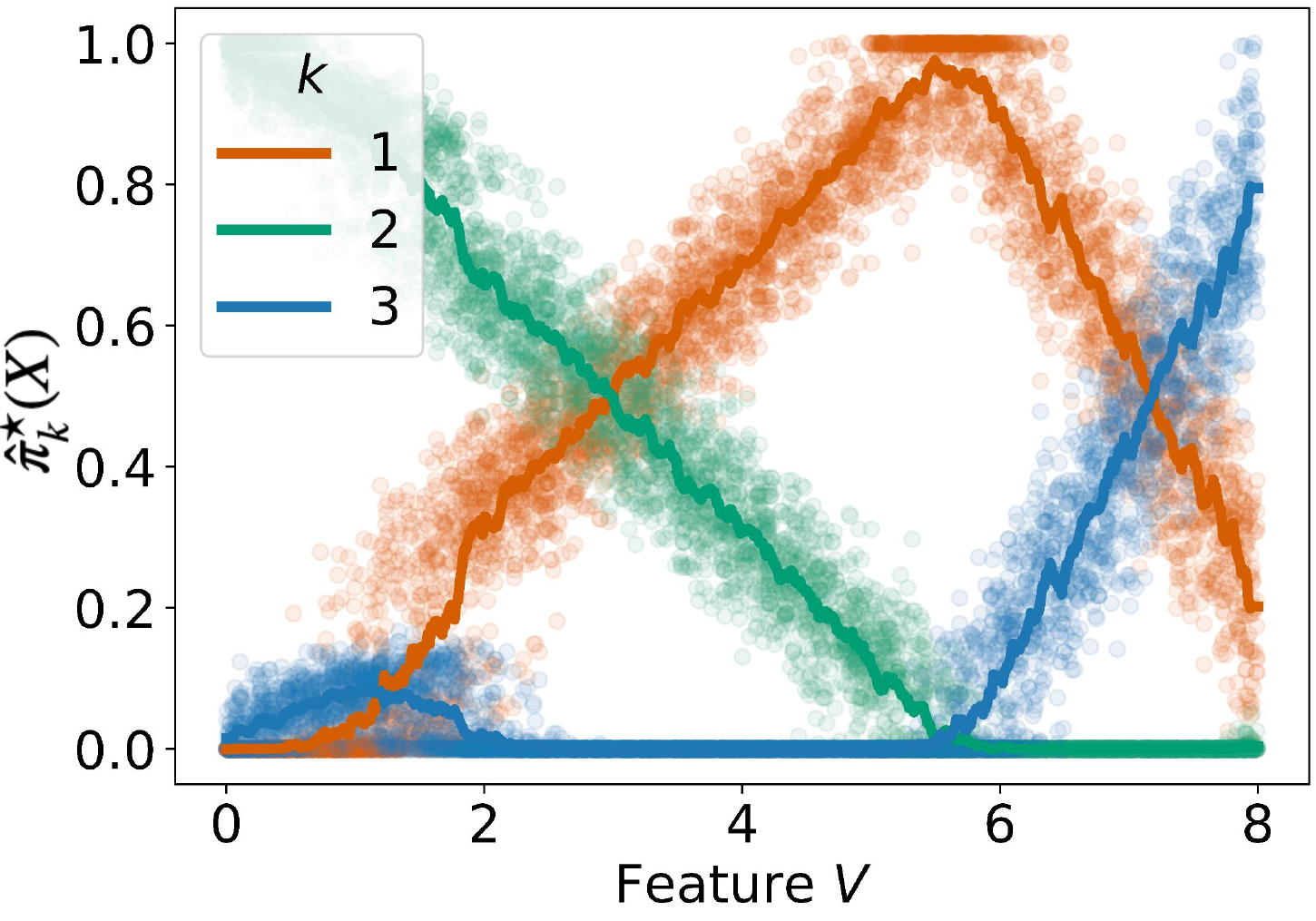}
          \caption{}
          \label{fig:pi_sim_1_1}
\end{subfigure}
  \begin{subfigure}[b]{0.49\textwidth}      
	\centering 
         \includegraphics[width=0.97\textwidth]{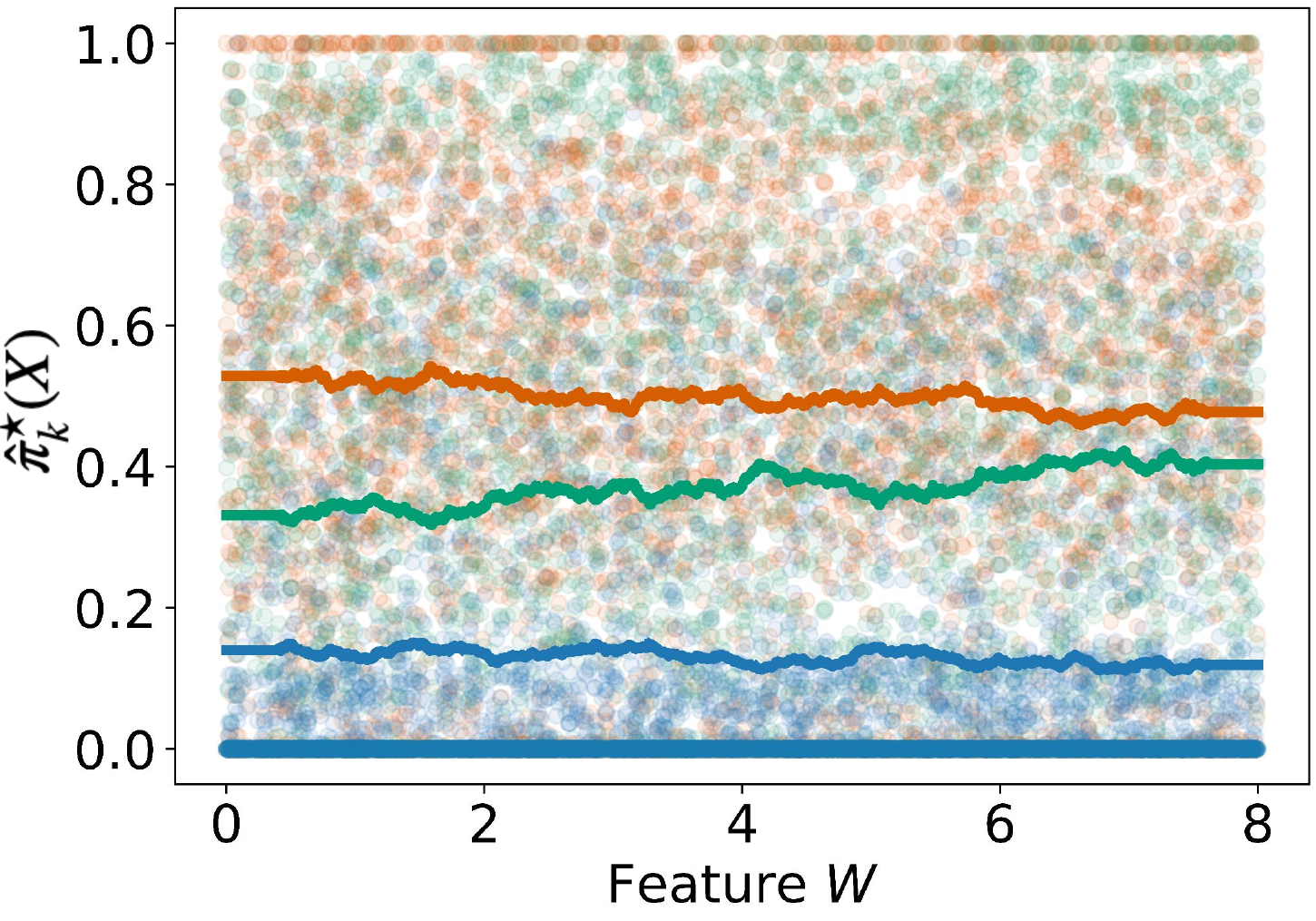}
          \caption{}
          \label{fig:pi_sim_1_2}
\end{subfigure}
\caption{PCP membership probabilities in Setting 1.}
\label{fig:pi_sim1}
\vspace{5pt}
\end{figure}

\subsection{Experiments on real data}

We next discuss experiments on the following two datasets:
\begin{itemize}
\item The online news popularity dataset  \citep{misc_online_news_popularity}  includes 58 features describing the content of articles, such as word counts and the number of images, to predict the number of shares each article receives on social networks.
\item The superconductivity dataset  \citep{misc_superconductivty} contains 81 features about materials and their properties, such as mean atomic mass and radius, to predict the critical temperature (Tc) at which a material becomes superconductive. 
\end{itemize}

We reduce the dimensionality of both datasets to 30 using principal component analysis (PCA) and use ridge regression as the predictive model $\hat \mu$.  We compare the coverage rates and interval length of the conformal prediction methods across 200 independent runs of our experiments, each using a subsample of 2,000 data points.

The results for the popularity dataset are reported in \Cref{fig:wscr_shares}. All methods achieve a marginal coverage rate near the target rate of  0.9. However, the worst-slice conditional coverage rate (WSCR) of SCP falls significantly below 0.9, while the WSCRs of RLCP, SCP+CC, and PCP are approximately 0.9. In terms of interval length, RLCP produces much wider intervals than SCP+CC and PCP. 
Although SCP+CC generates shorter intervals than PCP, it does so at the cost of slightly lower WSCR.

The results for the superconductivity dataset are shown in  \Cref{fig:wscr_conductor}. Here, the WSCR of SCP is slightly below 0.9, leaving a small gap for other methods to fill. 
Among the advanced methods, only PCP can close the gap without significantly increasing interval length. Using a linear quantile regression model, SCP+CC achieves a pre-specified coverage guarantee across all linear functions of the features. In comparison,  PCP improves coverage more effectively by adapting its guarantee to the residuals.

\begin{figure}[t]
     \vspace{-5pt}
     \centering
     \hspace{-5pt}
     \begin{subfigure}[b]{0.329\textwidth}
         \centering         \includegraphics[width=\textwidth]{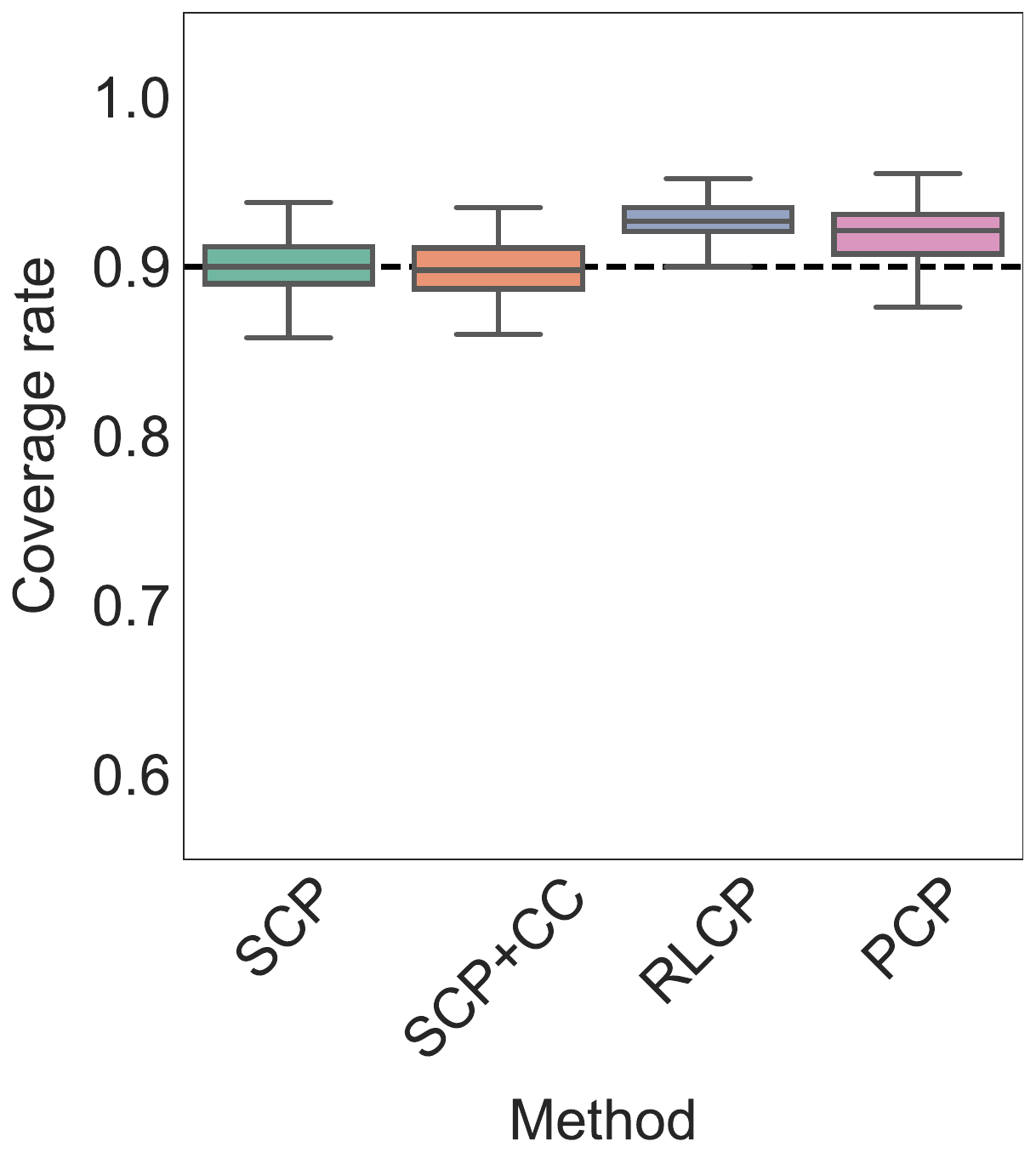}
         \caption{Marginal coverage.}
           \label{fig:wscr1_shares}
     \end{subfigure}
     \begin{subfigure}[b]{0.329\textwidth}
         \centering
         \includegraphics[width=\textwidth]{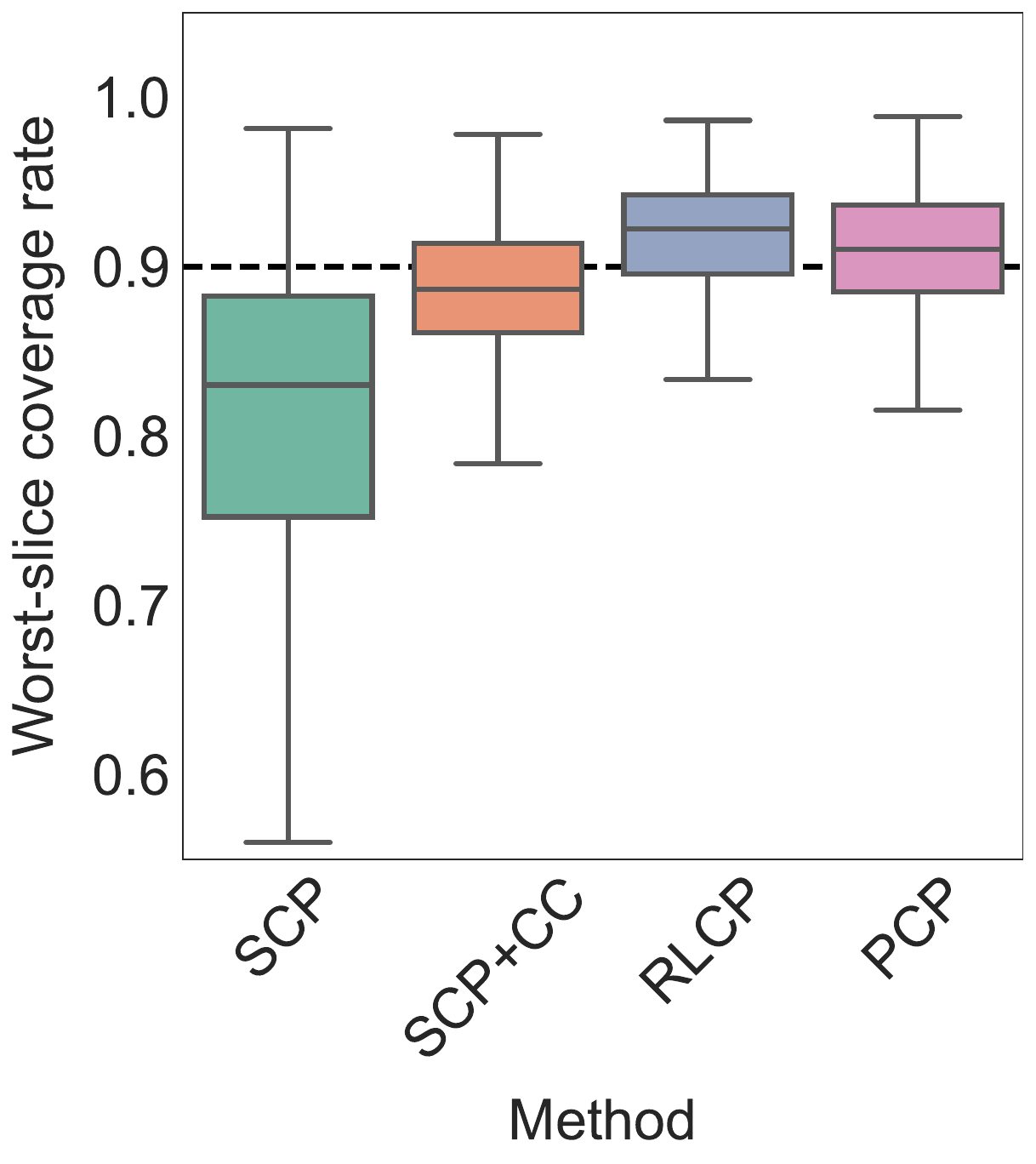}
          \caption{Worst-slice coverage.}
           \label{fig:wscr2_shares}
     \end{subfigure}
          \begin{subfigure}[b]{0.329\textwidth}
         \centering
         \includegraphics[width=\textwidth]{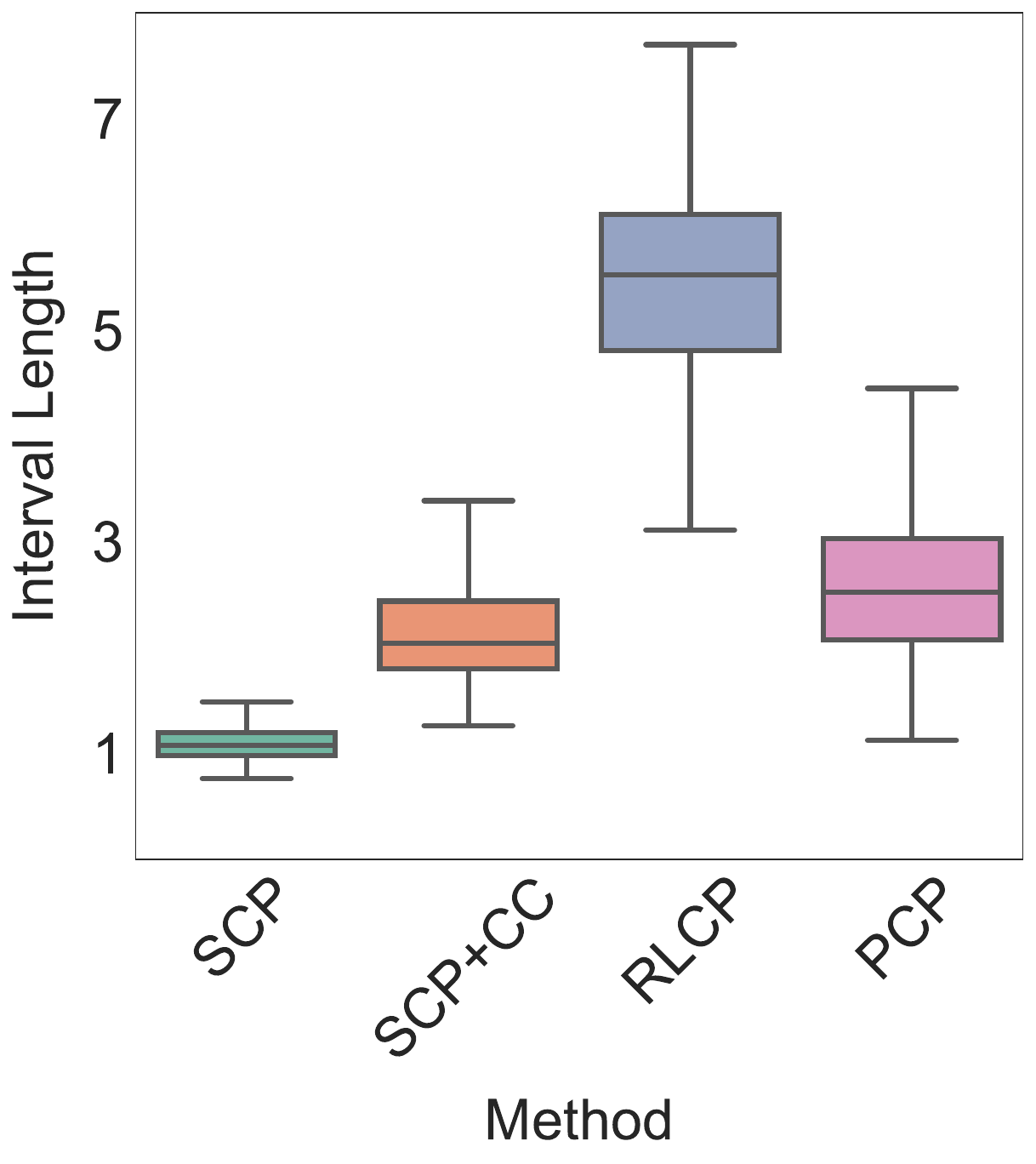}
         \caption{Interval length.}
          \label{fig:wscr3_shares}
            \end{subfigure}
        \caption{Comparison of conformal prediction methods on the popularity data. }
        \label{fig:wscr_shares}
\end{figure}

\begin{figure}[t]
\vspace{3pt}
     \centering
     \hspace{-5pt}
     \begin{subfigure}[b]{0.329\textwidth}
         \centering         \includegraphics[width=\textwidth]{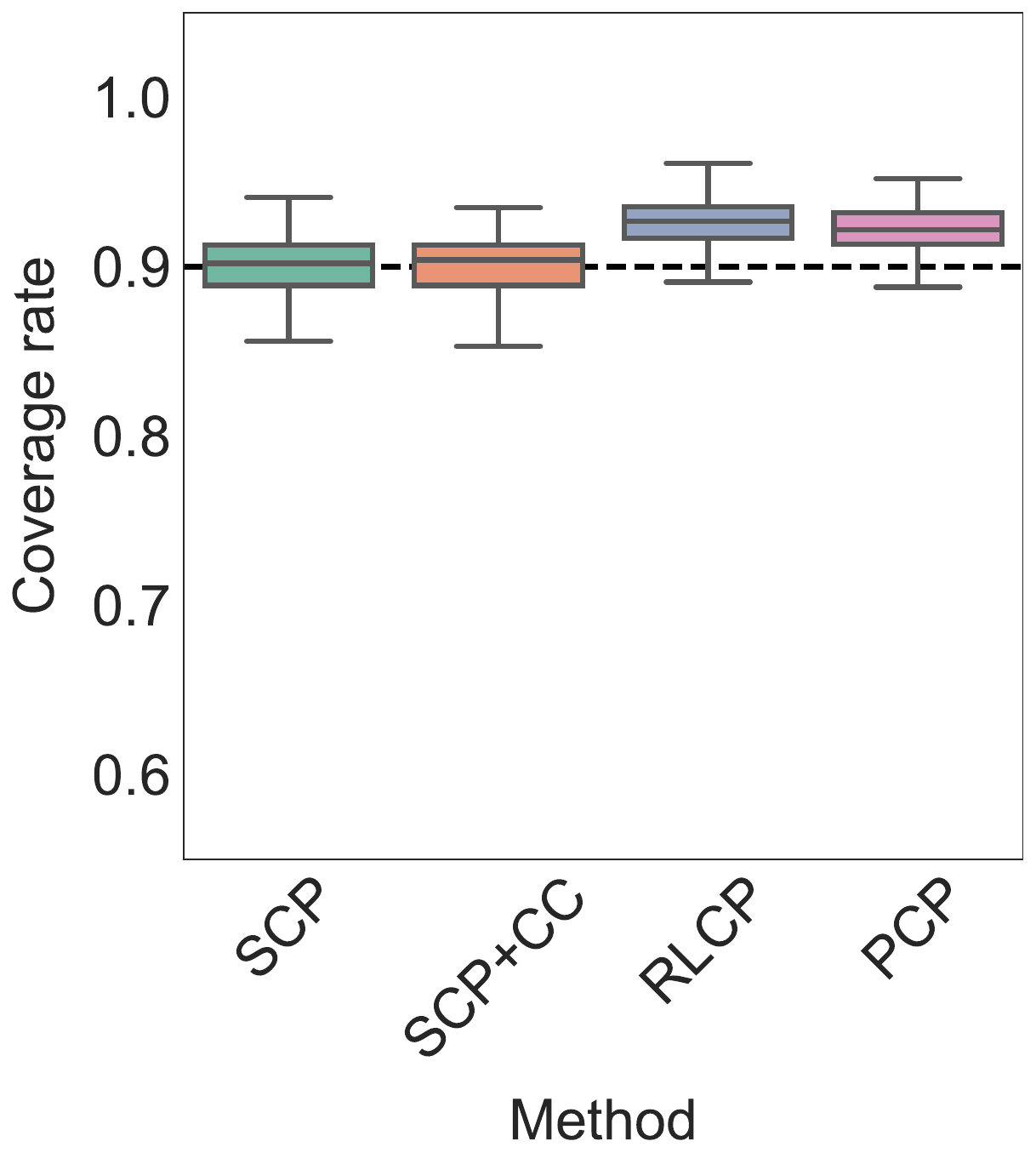}
         \caption{Marginal coverage.}
           \label{fig:wscr1_conductor}
     \end{subfigure}
     \begin{subfigure}[b]{0.329\textwidth}
         \centering
         \includegraphics[width=\textwidth]{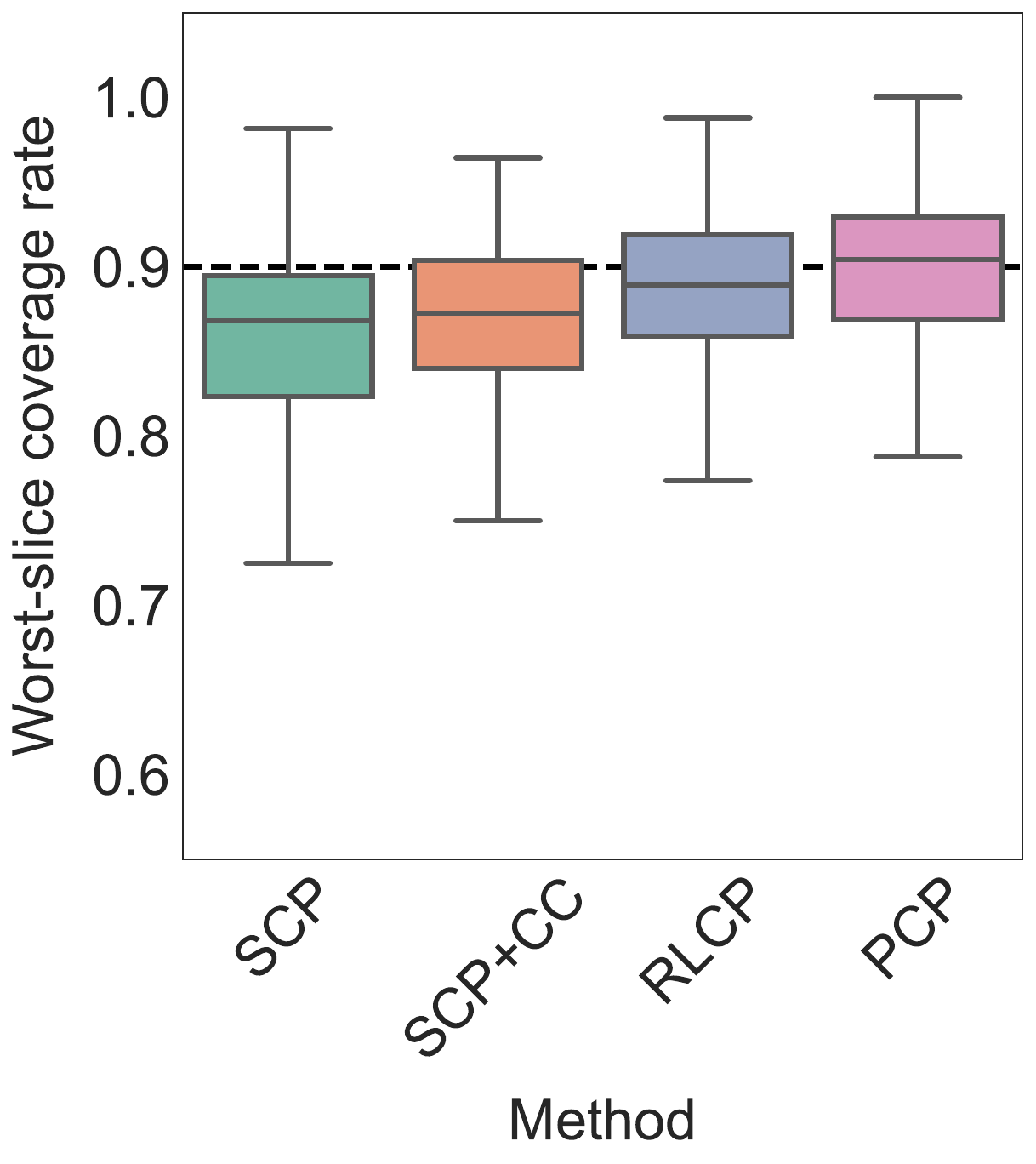}
          \caption{Worst-slice coverage.}
           \label{fig:wscr2_conductor}
     \end{subfigure}
          \begin{subfigure}[b]{0.329\textwidth}
         \centering
         \includegraphics[width=\textwidth]{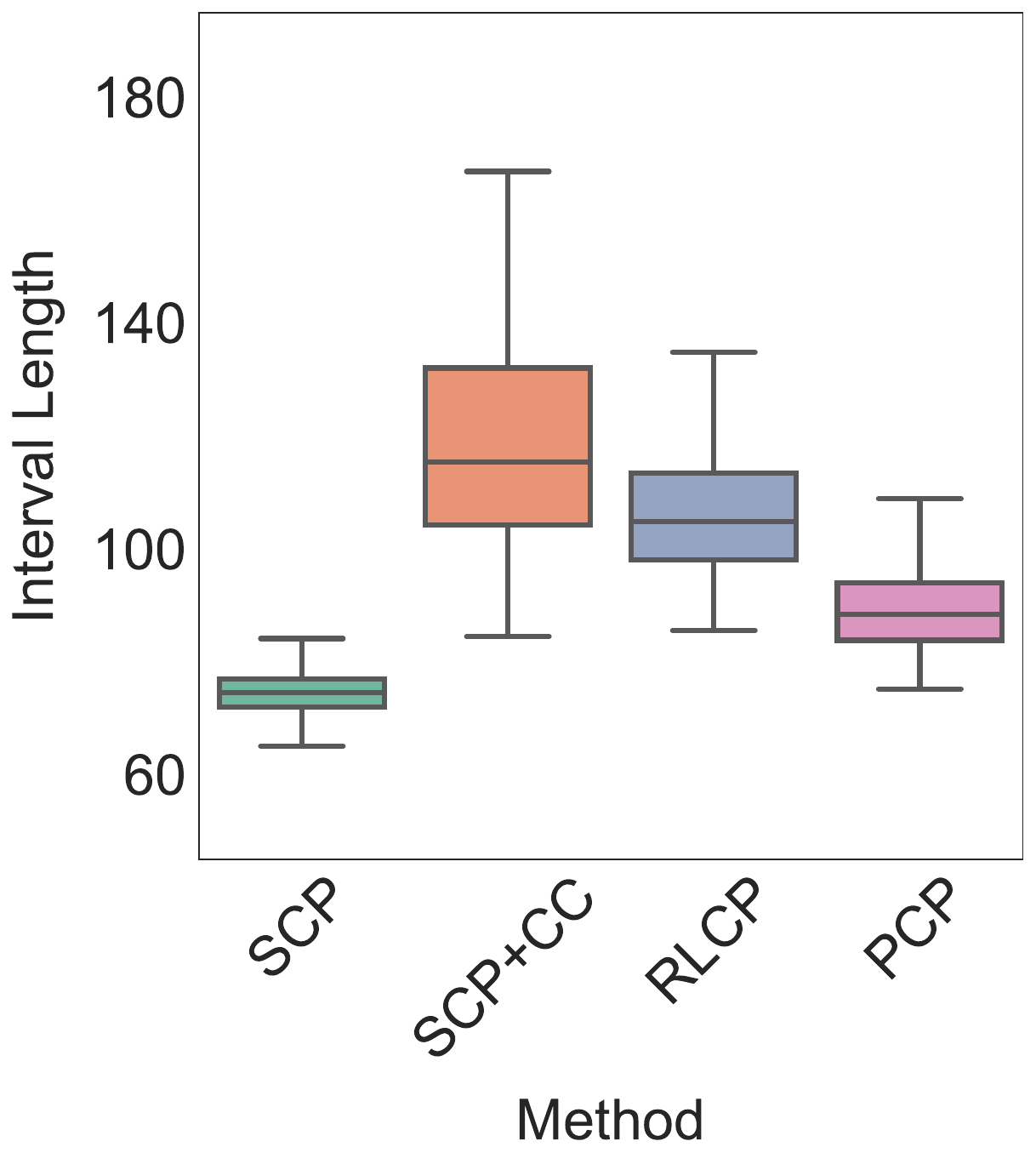}
         \caption{Interval length.}
          \label{fig:wscr3_conductor}
            \end{subfigure}
        \caption{Comparison of conformal prediction methods on the superconductivity data. }
        \label{fig:wscr_conductor}
\end{figure}

\section{Equalized conditional coverage}\label{sect:fair}

In conformal prediction, the partition method \citep{lei2014distribution,romano2020malice} is often used to generate prediction intervals with group-conditional coverage guarantees.  We next show that this partition-based approach may lead to a loss of coverage for some individuals within the subgroups.  In response, we apply PCP to achieve the same group-conditional coverage guarantees while making the interval nearly independent of the test point's subgroup membership.

\begin{figure}[t]
\hspace{-10pt}
     \centering
     \begin{subfigure}[b]{0.44\textwidth}
         \centering
         \includegraphics[width=0.9\textwidth]{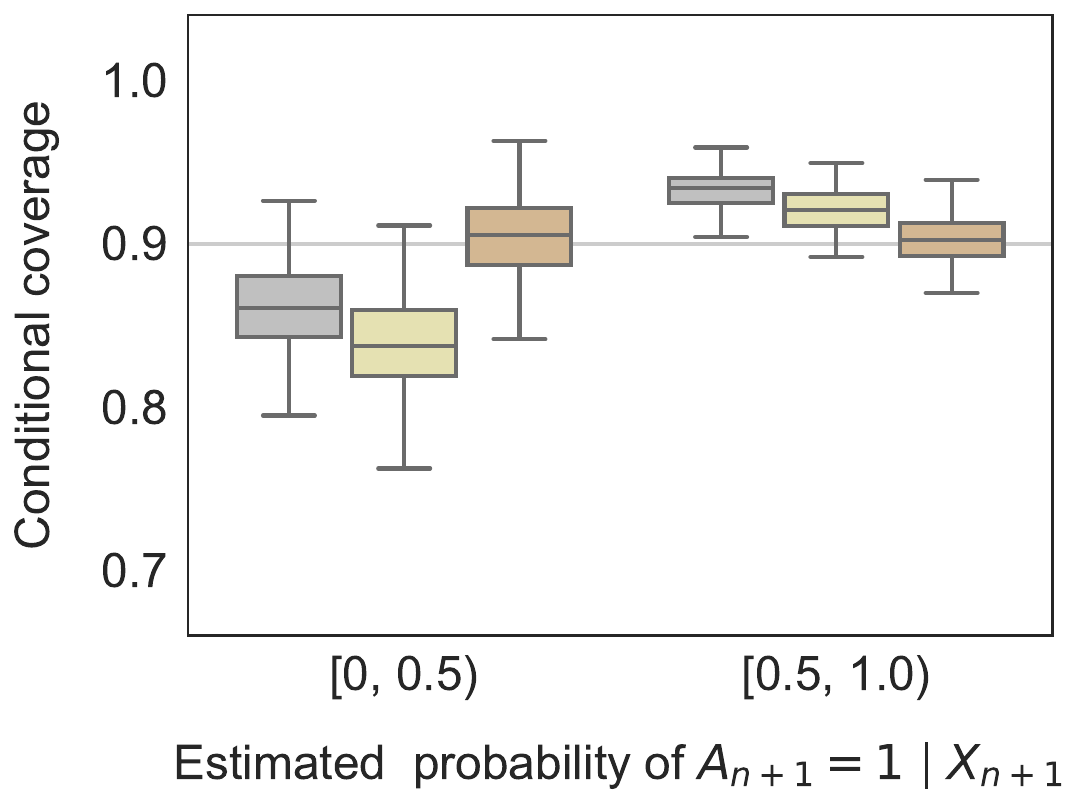}
          \caption{
           \begin{minipage}[t]{0.85\textwidth}
           Coverage rate given $A_{n+1}=1$  and 
             $\hat e^\star(X_{n+1}) <$ or $\geq 0.5$.   
        \end{minipage}
          }
           \label{fig:fair1}
     \end{subfigure}
      \vspace{10pt}
\hspace{5pt}
 \begin{subfigure}[b]{0.44\textwidth}
         \centering
         \includegraphics[width=0.9\textwidth]{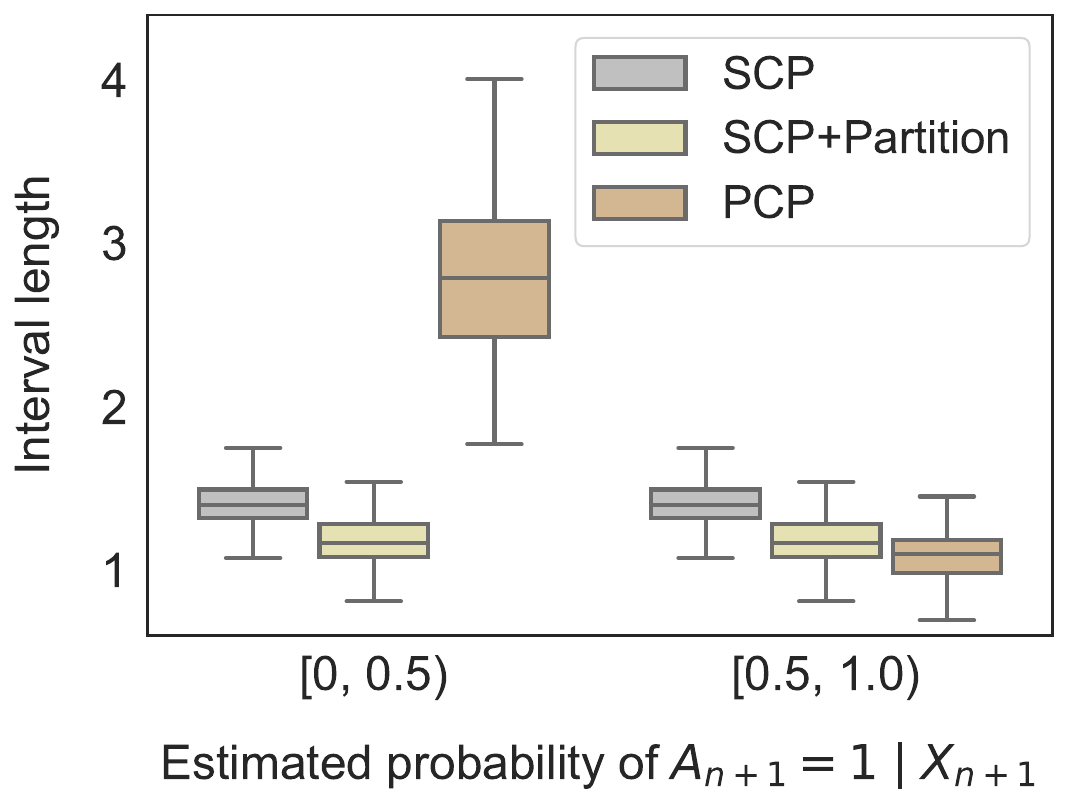}
         \caption{
             \begin{minipage}[t]{0.85\textwidth}
            Interval length given $A_{n+1}=1$  and 
              $\hat e^\star(X_{n+1}) <$ or $\geq 0.5$.   
        \end{minipage}
         }
          \label{fig:fair2}
            \end{subfigure}
       \vspace{10pt}
\hspace{-10pt}
     \begin{subfigure}[b]{0.44\textwidth}
         \centering
         \includegraphics[width=0.9\textwidth]{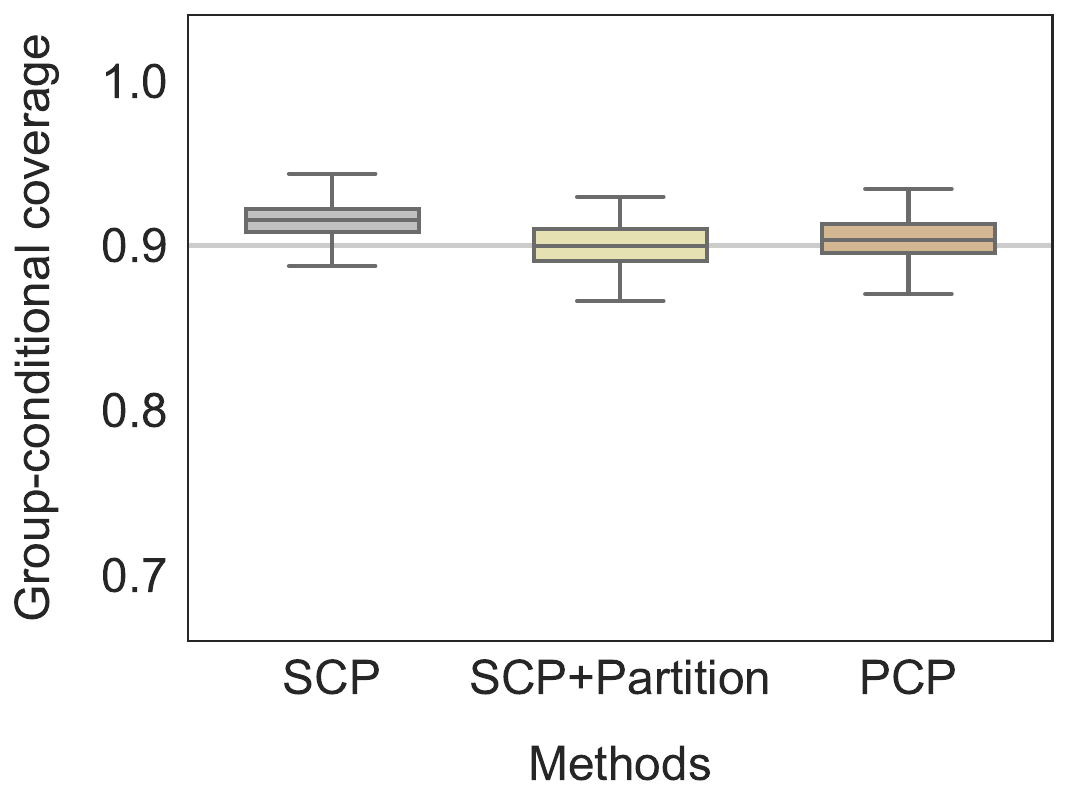}
          \caption{           \begin{minipage}[t]{0.85\textwidth}
          Group-conditional coverage rate  given $A_{n+1}=1$  (female).     
           \end{minipage}          
        } 
     \label{fig:fair3}
     \end{subfigure}
    \hspace{5pt}
	\begin{subfigure}[b]{0.44\textwidth}
         \centering
         \includegraphics[width=0.9\textwidth]{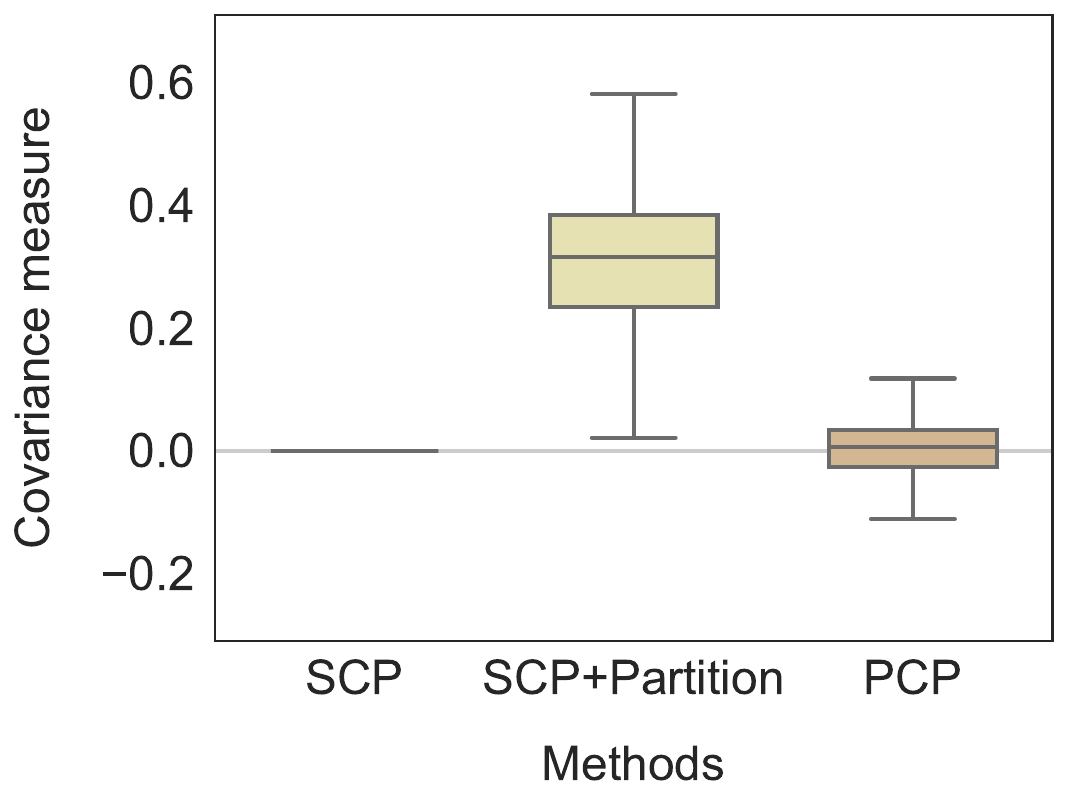}
          \caption{           \begin{minipage}[t]{0.85\textwidth}
         Measure of dependence between intervals and  $A_{n+1}$ given $X_{n+1}$. 
        \end{minipage}          
        } 
          \label{fig:fair4}
            \end{subfigure}
                \vspace{-5pt}
        \caption{Comparison of conformal prediction methods on the Medical Expenditure Panel Survey (MEPS) 19. The results are from 200 runs of the experiments. 
        {As discussed in the text, we divide the female group (\(A_{n+1}=1\)) into two subgroups: a minority subgroup with \(\hat e^\star(X_{n+1})<0.5\), and a majority subgroup with \(\hat e^\star(X_{n+1})\ge 0.5\).}
        }
        \label{fig:fair_gender}
\end{figure}

To begin with a concrete example, imagine that we have a medical dataset from a study of a disease where male patients are predominant. In this dataset, it is crucial to obtain a coverage guarantee regardless of gender $A_{n+1}\in \{0,1\}$. This is far from automatic since the predictive model may underperform for females due to limited training data. In this situation, we often run SCP for males and females separately. 
Specifically,  we let $\mathcal I_n= \{i\in [n]:A_{i} = A_{n+1} \}$ be all the individuals who have the same gender as the test point.
For this partition interval, let \(R_i=|Y_i-\hat\mu(X_i,A_i)|\).
Running SCP solely on $Z_{\mathcal I_n}$ produces the interval,
\[
\hat C_{n}^{\text{Partition}}(X_{n+1},A_{n+1}) = \left[\hat \mu(X_{n+1},A_{n+1}) \pm Q_{1-\alpha}\left(\sum_{i\in \mathcal I_n}\frac{1}{|\mathcal I_n|+1} \delta_{R_i}	+ \frac{1}{|\mathcal I_n|+1} \delta_{+\infty} \right)		\right],
\]
which satisfies the group-conditional coverage guarantee,
\begin{equation}\label{equ:weak_coverage}
\mathbb P \big\{ {Y_{n+1}\not\in  \hat C_{n}^{\text{Partition}}	(X_{n+1},A_{n+1}) }	\mid A_{n+1} \big\}\leq \alpha.
\end{equation}
Despite the guarantee in \eqref{equ:weak_coverage}, generating intervals based on a partition may decrease the coverage rate for some minorities within the subgroups. This general phenomenon can be illustrated using the probabilities of $A =1$  (female),
\[
\theta = \mathbb P\{A =1\}, \qquad e(X) = \mathbb P\{A =1 \mid X\}.
\] 
Assuming  $\theta\in (0,1)$, it holds that
\begin{equation}\label{equ:p_difference}
\underbrace{\mathbb P\{e(X)\ge \theta \mid A=1\}}_{\text{Female population}}
-
\underbrace{\mathbb P\{e(X)\ge \theta \}}_{\text{Full population}}
=
\mathbb E\!\left[\one\{e(X)\ge \theta\}\left(e(X)/\theta-1\right)\right]
\ge 0.
\end{equation}
\Cref{equ:p_difference} highlights that the female population $(A=1)$ is more likely to have $e(X)\geq \theta $ than the full population. In other words, a test point with $A_{n+1}=1$ but $e(X_{n+1})< \theta $ is less likely in the female population than in the full population.  For instance, females who are taller than the average height of the full population are even more underrepresented within the female population than in the full population. Consequently, the SCP+Partition interval  $\hat C_{n}^{\text{Partition}}(X_{n+1},A_{n+1})$ defined above will achieve a worse coverage rate for tall females, compared to the original SCP interval $\hat C_n^{\text{SCP}}(X_{n+1}) $ for $Y_{n+1}$ in \eqref{equ:interval_scp} computed using all $n$ validation samples.

To illustrate, we compare the methods on the Medical Expenditure Panel Survey (MEPS) 2019 dataset \citep{romano2019conformalized,romano2020malice}, provided by the Agency for Healthcare Research and Quality. In this experiment, we predict individuals' utilization of medical services from personal information such as age, race, poverty status, and health status. We randomly draw 6000 individuals from the MEPS dataset and divide them into three folds. We fit two random forests,   \(\hat \mu(\cdot)\) and \(\hat e(\cdot)\), using the first fold. The female proportion \(\theta\) in \eqref{equ:p_difference} is approximately \(0.5\) in the MEPS dataset. We further divide the female group (\(A_{n+1}=1\)) in the test set into two subgroups according to whether the randomized estimator \(\hat e^\star(X_{n+1})\) defined in \eqref{equ:e_tilde}  is greater than \(0.5\). \Cref{fig:fair1} shows that SCP+Partition achieves a lower coverage rate than the original SCP in the subgroup with \(A_{n+1}=1\) and \(\hat e^\star(X_{n+1})<\theta=0.5\), which is a minority subgroup within the female population.\footnote{In \Cref{sect:repeat_exp}, we repeat the same experiment, with $A_{n+1}$  indicating whether a person is white or non-white. The results there are consistent with the ones here.}

We apply PCP to remedy the coverage loss of SCP+Partition. We sample $  \hat A_{1}^\star ,\dots, \hat A_m^\star\stackrel{\text{i.i.d.} }{\sim }  \text{Ber}(\hat e(X_{n+1}) )$, and use them to define a randomized version of $\hat e(X_{n+1})$:
\begin{equation}\label{equ:e_tilde}
\hat e^\star(X_{n+1}) =  \hat L^\star/m,
\qquad
\hat  L^\star = \sum_{s=1}^m  \hat A_s^\star.
\end{equation}
After excluding gender from \(\hat\mu\), we redefine \(R_i=|Y_i-\hat\mu(X_i)|\) and modify the SCP+Partition interval as
\begin{equation}\label{equ:pcp_fair}
\hat C_n^{\textup{PCP}}(X_{n+1},A_{n+1}) = \left[\hat \mu(X_{n+1}) \pm Q_{1-\alpha}\left(\sum_{i\in \mathcal I_n}\hat w_{i}^\star \delta_{R_i}	+\hat w_{n+1}^\star \delta_{+\infty} \right)		\right],
\end{equation}
where \(\hat w_i^\star\propto
[\hat e(X_i)]^{\hat L^\star}[1-\hat e(X_i)]^{m-\hat L^\star}\),
with the normalization taken over \(i\in\mathcal I_n\cup\{n+1\}\). This interval satisfies an augmented group-conditional guarantee:
\begin{proposition}\label{prop:fairness}
\vspace{5pt}
Suppose that the data points $(X_i,A_i, Y_i), i\in [n+1]$, are i.i.d. Then the prediction interval in  \eqref{equ:pcp_fair} satisfies
\begin{equation}\label{equ:c_fairness_0}
\mathbb P \big\{ {Y_{n+1}\not\in  \hat C_{n}^{\textup{PCP}}(X_{n+1},A_{n+1}) }			\mid  A_{n+1}, \hat e^\star(X_{n+1}) \big\}\leq \alpha.
\end{equation}
Assume that, with probability 1, \(e(x)\in(0,1)\) for all \(x\in\mathcal X\), and \(\sup_{x\in\mathcal X}|\hat e(x)-e(x)|\to 0\) as \(n\to\infty\). Then, as $n\rightarrow\infty$ and $m(n)\rightarrow\infty,$ we have
\begin{equation}\label{equ:c_fairness}
\hat C_{n}^{\textup{PCP}}(X_{n+1},A_{n+1}) \independent A_{n+1} \mid X_{n+1}.
\end{equation}
\end{proposition} 
The guarantee in \eqref{equ:c_fairness_0} follows from \Cref{thm:validity_1}. Imagine that a female group mainly consists of individuals with large values of $ \hat e^\star(X_{n+1})$, while minorities in the group, such as tall females in the example above, have smaller values of $ \hat e^\star(X_{n+1})$. The smaller $ \hat e^\star(X_{n+1})$ is, the less represented an individual is within the female group. In this sense, the guarantee in \eqref{equ:c_fairness_0} means that PCP has valid coverage across individuals regardless of \emph{how represented they are in their gender group}.

This guarantee is demonstrated through the results in \Cref{fig:fair1}, where PCP has coverage of approximately 0.9 for both subgroups. We also see from \Cref{fig:fair2} that the interval length of PCP is highly adaptive: it increases significantly to close the coverage gap of SCP+Partition for the underrepresented individuals with $\hat e^\star(X_{n+1})<0.5$, while decreasing for the well-represented individuals with $\hat e^\star(X_{n+1})\geq0.5$, where $\hat \mu$ is relatively accurate due to sufficient training data.
Finally, we note that, by \eqref{equ:c_fairness_0}, PCP also has the group-conditional guarantee of SCP+Partition, as shown in \Cref{fig:fair3}.  

Asymptotically, the guarantee in \eqref{equ:c_fairness_0} leads to the conditional independence in \eqref{equ:c_fairness},  by the consistency assumption of \(\hat e(\cdot)\) and $A_{n+1}\independent X_{n+1}\mid e(X_{n+1})$. The independence means PCP will generate the same interval for two individuals who differ only in gender but share all other features, thereby eliminating the effect of partition on any individual's interval. 
Furthermore, if $Y_{n+1}\independent A_{n+1}\mid X_{n+1}$, the independence in \eqref{equ:c_fairness} implies that the coverage rate of our interval is also independent of $A_{n+1}$ given $X_{n+1}$.  

To empirically verify the independence in \eqref{equ:c_fairness}, we measure the generalized covariance \citep[Section 3]{shah2020hardness}\footnote{We additionally sample 2000 data points to estimate the conditional expectations in the covariance. The estimators are two random forest models applied to compute the empirical covariance.} between 
the upper bounds of the prediction intervals and the gender feature.  
As shown in  \Cref{fig:fair4}, the covariance for PCP is approximately 0, which is significantly smaller than the covariance for SCP+Partition. 
The covariance for SCP is 0 because its interval does not depend on gender. 
This result confirms that the dependence of our interval on gender can vanish asymptotically.

Finally, we note that the equalized coverage problem studied in this section is related to the covariate shifts problem in \citet{tibshirani2019conformal}. However, the difference of PCP is that it maintains the group-conditional guarantee of SCP+Partition and provides the new guarantees in \Cref{prop:fairness}.

\vspace{-5pt}

\section{Level-adaptive conditional coverage}\label{sect:pcc}
        
Next, we consider classification problems and apply PCP to create level-adaptive prediction sets. The adaptivity ensures that the top-class accuracy of our prediction set aligns with the classifier's top-class predictive probability. This allows us to lower the target coverage rate and obtain singleton or
small prediction sets, which can be useful in certain applications, such as recommending a more focused set of products to users when the classifier is
uncertain about their preferences.

Suppose that a pre-fitted classifier predicts $Y_{n+1} \in \mathcal{Y}=\{1,2,\dots, K\}$  based on a probability vector $\hat \eta (X_{n+1})$, where $\hat \eta_k(X_{n+1})$ is the predictive probability of $Y_{n+1}=k$.
 We define the top-class prediction and the top-class  predictive probability as
\[
\hat \mu(X_{n+1}) = \argmax_{k\in [K]} \hat \eta_k(X_{n+1}) \  \text{ and } \ \hat p(X_{n+1}) = \max_{k\in [K]} \hat \eta_k(X_{n+1}).
\]
We call the classifier top-class calibrated  \citep{gupta2021top} if it obeys
\begin{equation}\label{equ:cal}
\mathbb P \big\{ Y_{n+1}=	 	\hat \mu(X_{n+1}) \mid \hat p(X_{n+1}),\hat \mu(X_{n+1}) \big\} =  \hat p(X_{n+1}),
\end{equation}
i.e., the true probability of $Y_{n+1} =  \hat \mu(X_{n+1})$ is exactly $ \hat p(X_{n+1})$ when the classifier predicts $Y_{n+1} = \hat  \mu(X_{n+1})$ holds with probability $ \hat p(X_{n+1})$. 
In other words, the accuracy of the top-class prediction  $ \hat \mu(X_{n+1})$ stays consistent with the top-class probability $\hat p(X_{n+1})$.
Before calibration, the original classifier often overfits the training data and does not obey \eqref{equ:cal}.  Calibration methods adjust the predictive probabilities $\hat \eta (X_{n+1})$ by fitting nonparametric regression models to validation data. This can be done e.g.~via histogram binning or isotonic regression \citep{zadrozny2001obtaining}. These methods allow the top-class prediction $ \hat \mu(X_{n+1})$ to obey \eqref{equ:cal} asymptotically. Additional calibration methods and guarantees can be found in \citet{guo2017calibration,kull2019beyond,gupta2021top} and the references therein.

In comparison, existing conformal prediction sets can offer marginal and class-conditional guarantees for covering $Y_{n+1}$ in finite samples; see examples in \cite{sadinle2019least,romano2020classification,stutz2021learning,ding2023class}.
However, when the classifier has low predictive probability for the top class, these prediction sets would need to include many classes to achieve a target coverage rate of 90\%. In medical diagnostics, a large prediction set has limited utility as it offers little information on the true outcome $Y_{n+1}$ and may take a long time for experts to review. 
A calibrated classifier satisfying \eqref{equ:cal} can offer more information about $Y_{n+1}$ through its top-class prediction $\hat \mu(X_{n+1})$ and predictive probability $\hat p(X_{n+1})$, which are easier for experts to review.
In this section, we shall use PCP to generate smaller, often single label prediction sets, combining the benefits of top-class calibration and conformal prediction.

We define our prediction set using the nonconformity score proposed by \citet{romano2020classification}. For simplicity, 
we omit discussing the randomization step involved and note here that it occasionally generates an empty set to keep the coverage close to the target rate. We however include this randomization  step in both SCP and PCP in our experiments below. 

We let $\hat \eta^{()}(X_{i})$ denote the sorted version of $ \hat \eta(X_{i})$ from the largest to the smallest, and $K_i$ denote the rank of $  \hat \eta_{Y_i}(X_{i})$  in  $ \hat  \eta^{()}(X_{i})$. For example, if $Y_i = 3$ and $ \hat  \eta(X_{i}) = [0.2,0.5,0.3]$, 
\[
 \hat \eta^{()}(X_{i}) = [0.5,0.3,0.2]  \ \text{ and } \ K_i = 2.
\]
Similarly, we let 
$K_{n+1}^y$ denote the rank of $ \hat \eta_{y}(X_{n+1})$ in $ \hat \eta^{()}(X_{n+1})$  for any value $y\in\{1,2,\dots, K\}$. Suppose the example above holds with $X_i$ replaced by $X_{n+1}$. We have
\[
K_{n+1}^1=3,\ K_{n+1}^2=1  \ \text{ and } \ K_{n+1}^3=2.
\]
We next define the nonconformity score $S_i$ as the probability we need to add on top of the top-class probability $ \hat p(X_{n+1})$ for the top-$K_i$ prediction set to cover $Y_i$:
\[
	S_i =  \sum_{k\in [K_i]} \hat \eta_k^{()}(X_{i}) -  \hat  p(X_{n+1})  \quad  \text{and}\quad 
S_{n+1}^y = \sum_{k\in [K_{n+1}^y]}\hat\eta_k^{()}(X_{n+1}) -  \hat p(X_{n+1}). 
\]
Our PCP prediction set for $Y_{n+1}$ is defined as
\begin{equation}\label{equ:class_pcp}
\hat C_n^{\textup{PCP}}(X_{n+1}) = \left\{ y\in \mathcal{Y}: S_{n+1}^y \leq Q_{1-\hat \alpha^\star (X_{n+1})}\left(\sum_{i=1}^{n}\hat w_i^{\star} \delta_{S_i} + \hat w_{n+1}^{\star} \delta_{1} \right) \right\},
\end{equation}
where \(\hat L^\star \sim \textup{Multi}\big(m,\hat\eta(X_{n+1})\big)\),
\[
\hat\eta^\star(X_{n+1})=\hat L^\star/m,\qquad
\hat p^\star(X_{n+1})=\max_{k\in[K]}\hat\eta_k^\star(X_{n+1}),
\qquad
\hat\alpha^\star(X_{n+1})=1-\hat p^\star(X_{n+1}),
\]
and the weights \(\hat w_i^\star\) are defined by
\begin{equation}\label{equ:w_i_eta}
\hat w_i^\star =
\frac{\prod_{k=1}^{K}\big[ \hat \eta_k(X_i)\big]^{m\hat \eta_k^\star (X_{n+1})}}
{\sum_{j=1}^{n+1}\prod_{k=1}^{K}\big[ \hat \eta_k(X_j)\big]^{m\hat \eta_k^\star (X_{n+1})}}.
\end{equation}
Here $m$ is a user-specified precision parameter, and $\hat \eta^\star (X_{n+1})$ is the randomized predictive probability vector. If $\hat \eta(X_i)$ and $\hat \eta^\star (X_{n+1})$ are similar, the score $S_i$ receives a larger weight $\hat w_i^\star $ in the interval \eqref{equ:class_pcp}. 
In other words, PCP adjusts its interval mainly according to validation points whose predictive probabilities are similar to the randomized predictive probabilities of the test point. 
The proposition below follows directly from \Cref{thm:validity_1} after a change of notation.
\begin{proposition}\label{prop:class_pcp}
Assume that $(Z_1,\ldots,Z_{n+1})$, with $Z_i=(X_i,Y_i)$, is an exchangeable sequence.  Then the posterior conformal prediction (PCP) set \eqref{equ:class_pcp} satisfies
\begin{equation}\label{equ:prob_guarantee}
\mathbb P \big\{ Y_{n+1} \not\in \hat C_{n}^{\textup{PCP}}(X_{n+1}) \,\big|\, \hat \eta^\star (X_{n+1}) \big\}\leq \hat \alpha^\star (X_{n+1}).
\end{equation}
\end{proposition}
The adaptive level $\hat\alpha^\star (X_{n+1})$ defined above
 is determined by $\hat \eta^\star (X_{n+1})$ and can therefore be chosen adaptively when constructing the prediction set $\hat C_{n}^{\textup{PCP}}(X_{n+1})$. When the target coverage rate $\hat p^\star (X_{n+1}) = 1-\hat\alpha^\star (X_{n+1})$ is low, the prediction set tends to be small. If the classifier is well calibrated so that the accuracy of its top-class prediction for $Y_{n+1}$ matches the target coverage rate $\hat p^\star (X_{n+1})$, the prediction set is often a singleton that contains the true value of $Y_{n+1}$ with probability at least $\hat p^\star (X_{n+1})$. In this sense, \Cref{prop:class_pcp} approximately yields the top-class calibration guarantee \eqref{equ:cal}.

\begin{figure}[t]
     \centering
     \begin{subfigure}[b]{0.48\textwidth}
     \hspace{-20pt}
         \centering
         \includegraphics[width=0.9\textwidth]{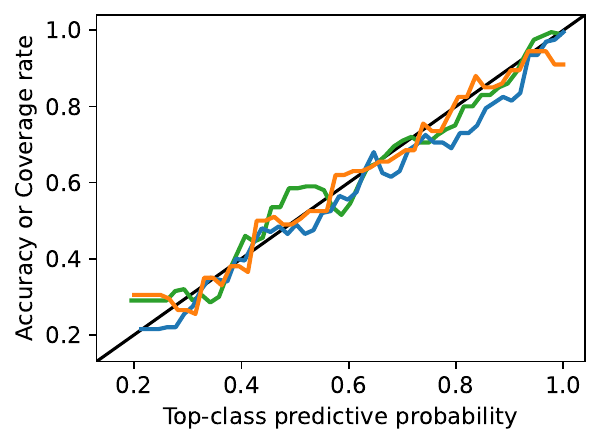}
          \caption{Calibration curves (CNN).}
           \label{fig:cal1}
    \vspace{10pt}
     \end{subfigure}
\begin{subfigure}[b]{0.48\textwidth}
         \centering
         \includegraphics[width=0.9\textwidth]{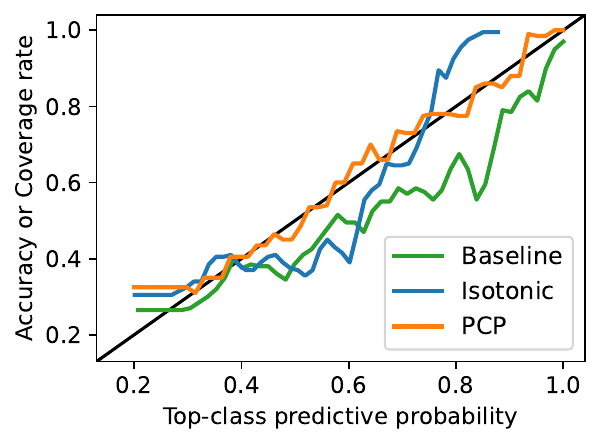}
         \caption{Calibration curves (logistic regression).}
          \label{fig:cal2}
          \vspace{10pt}
            \end{subfigure}
     \begin{subfigure}[b]{0.47\textwidth}
         \centering
                   \hspace{-20pt}
         \includegraphics[width=0.9\textwidth]{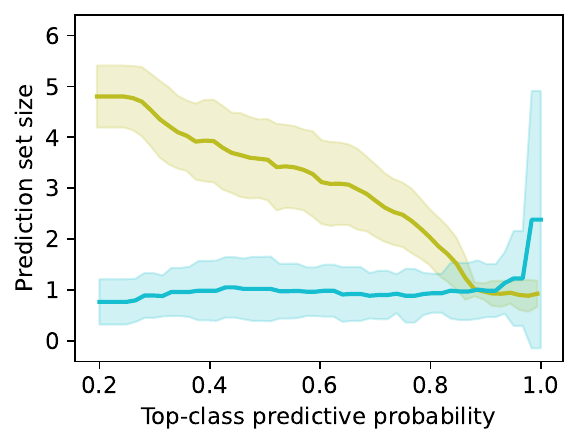}
          \caption{Prediction set size (CNN).}
     \label{fig:cal3}
     \end{subfigure}
      \hspace{0.7pt}
	\begin{subfigure}[b]{0.47\textwidth}
         \centering
         \includegraphics[width=0.9\textwidth]{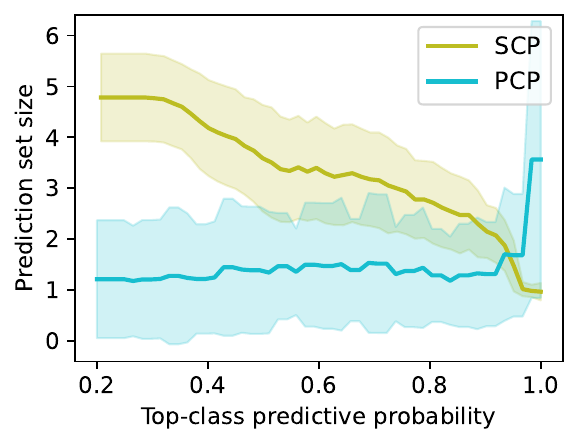}
         \caption{Prediction set size (logistic regression).}
          \label{fig:cal4}
            \end{subfigure}
        \caption{Comparison of calibration and conformal prediction methods for CNN and logistic regression models on the HAM10000 image dataset.}
        \label{fig:calibration_1}
\end{figure}

In the following experiment,  we compare PCP against the isotonic regression-based calibration method \citep{zadrozny2001obtaining} and the SCP method for classification  \citep{romano2020classification}. Our evaluation is based on the HAM10000 dataset \citep{tschandl2018ham10000}, which consists of 10,015 dermatoscopic images with seven classes of pigmented skin lesions. We consider two baseline classifiers: a convolutional neural network (CNN) and logistic regression.  We split the dataset into three folds: one for training the classifier, another for validation, and the last for testing the methods.

First, we compare calibration curves of PCP and isotonic regression. To construct each curve, we evenly place 50 points within the support of the top-class predictive probability and then calculate the local average accuracy at each point, using the 200 nearest test points. The original CNN model is relatively well-calibrated, since the green curve approximately follows the diagonal line in \Cref{fig:cal1}. The logistic regression model is poorly calibrated, as shown by the green curve in \Cref{fig:cal2}. 
We can see from the same figure that our prediction sets always keep the local coverage rate close to the top-class predictive probability $\hat p(X_{n+1})$. In contrast, the 
isotonic regression method fails to calibrate the logistic regression model.

By definition, our guarantee in \eqref{equ:prob_guarantee} is a good approximation of the calibration guarantee in \eqref{equ:cal}, if our prediction set is most often a singleton. This is the case for the CNN model when its top-class predictive probability $\hat p(X_{n+1})\leq0.9$  in \Cref{fig:cal3}. When  $\hat p(X_{n+1})>0.9$, our method enlarges the prediction set to meet the target coverage rate set to $\hat p(X_{n+1})$. \Cref{fig:cal4} shows that our prediction set is larger for the less-calibrated logistic regression model. Nevertheless, it is smaller than the prediction set SCP produces when the top-class predictive probability $\hat p(X_{n+1})\leq 0.9$.

\section{Discussion}\label{sect:end}
This paper introduced posterior conformal prediction (PCP), a general method to construct prediction intervals with conditional coverage guarantees. We showed that PCP improves conditional coverage by modeling residual distributions/nonconformity scores as mixtures, strengthens subgroup coverage by reweighting individuals within subgroups, and generates smaller prediction sets by adapting the target coverage level based on classifiers’ predictive probabilities.
In the following, we discuss a few limitations and extensions of PCP.

\textbf{PAC guarantee.} 
For large $n$, we may also want a stronger version of marginal validity, namely that guarantees such as \eqref{equ:marginal} and \eqref{equ:pcp_guarantee} hold conditional on the realized validation data $Z_{[n]}$ with high probability.
This would rule out the possibility that validity holds only on average over validation samples while failing badly for many particular realizations. Formally, we ask that
\begin{equation}\label{eqi:pac}
\mathbb P \left\{ \alpha(Z_{[n]}) := \mathbb P\big\{Y_{n+1}\not\in \hat C_n(X_{n+1}) \mid Z_{[n]}\big\} \leq  \alpha + o(1) \right\}\geq 1-o(1).
\end{equation}
A prediction set $\hat C_n$ satisfying \eqref{eqi:pac} is called \quotes{Probably Approximately Correct} (PAC) \citep{wilks1941,vovk2012conditional}.
When marginal validity fails to hold, i.e., $\mathbb E[\alpha(Z_{[n]})]> \alpha$, it is less likely that $\alpha(Z_{[n]})\leq \alpha$ can hold with high probability as in \eqref{eqi:pac}. On the contrary, when marginal validity holds, it is relatively straightforward to prove the PAC guarantee in \eqref{eqi:pac}. For PCP intervals with data-dependent weights, the proof requires a mild stability assumption on our mixture learning algorithm. We refer to \Cref{sect:pac_appendix} for details.

\textbf{Distributional shifts.} 
There has been growing interest in extending conformal prediction to applications with distributional shifts, such as election forecasting \citep{cherian2020washington,barber2022conformal} and time series modelling  \citep{xu2021conformal,gibbs2022conformal,zaffran2022adaptive}. PCP  can be extended to these applications, by fitting a mixture model to residuals, or fitting a classifier to predict the variable that causes the distributional shift. The advantage of PCP is that it guarantees marginal validity when there is no distributional shift.
When using our method in online or dynamic settings, we can let the number of cluster distributions grow as more data distributions emerge over time or space.

\textbf{Reward maximization.} In \Cref{sect:pcc}, we applied our method to generate level-adaptive prediction sets. This adaptivity is useful when increasing uncertainty is costly. For example, suppose $h(X)$ is a decision rule that determines whether recommending the movies in our prediction set $\hat C_n^{\textup{PCP}}(X)$ to a user with features $X$. If the user randomly selects one movie to watch, the expected benefit of our recommendation is
\[
\mathbb E \left\{ h(X)	\times  \bigg[ r_1(X)\frac{\one \big\{Y \in  \hat C_n^{\textup{PCP}}(X)\big\}	}{\big|\hat C_n^{\textup{PCP}}(X)\big|} 	\\
+  r_2(X)  \bigg( 1-\frac{\one \big\{Y \in  \hat C_n^{\textup{PCP}}(X)\big\}	}{\big|\hat C_n^{\textup{PCP}}(X)\big|} \bigg) \bigg]		\right\},
\]
where $r_1(X)$ is the reward for watching the favourite movie and $r_2(X)$ is the reward for watching the other movies.
Generating a large prediction set may not maximize the expected reward; in this
setting, the level-adaptive prediction sets of PCP can lower the target coverage rate to define a decision rule $h(X)$ that favors smaller recommendation sets.

\section{Acknowledgements}

Y.Z thanks John Cherian, Ying Jin, Daniel LeJeune, Jinzhou Li, and Tijana Zrnic for their valuable feedback on this article. Y.Z and E.J.C. were supported by the Office of Naval Research under Grant No. N00014-24-1-2305 and the National Science Foundation under Grant No. DMS2032014. Additionally, E.J.C. acknowledges support from the Office of Naval Research under Grant No. N00014-24-1-2305, the Simons Foundation under Award 814641, and the Army Research Office under Grant No. 2003514594.

\bibliographystyle{plainnat}
\bibliography{references}

\clearpage

\appendix
\appendixpage

\section{Randomized versus non-randomized intervals}\label{sect:random_nonrandom}

This section first presents a counterexample to prove that non-randomized intervals may lose marginal validity. Following this, we discuss the difference between the randomized and non-randomized PCP interval as the precision parameter $m$ increases.

Previously, Proposition 2 in \citet{guan2023localized} demonstrated the loss of marginal validity in localized conformal prediction intervals. In the example there, every feature $X_{ij}\in \{-1,0, 1\}$ for $j\in [d]$.
The residuals at the origin $(X_i=0 )$ are 0. When $X_i=0$ holds with high probability, the interval mainly uses zero residuals and achieves a poor coverage rate. The loss of coverage in this example is primarily due to the choice of weight function $w_i\propto\exp (-d(X_i,X_{n+1}))$ in the interval, where 
\[
d(X_i,X_{n+1})  = 
\begin{cases}
 0,  \ \   \ \ \text{ if } X_i\in \{0,X_{n+1}\},\\
+\infty,  \  \text{otherwise}.
\end{cases}
\]
Equality in the distance metric $d(\cdot,\cdot)$ is non-transitive, i.e.,
\[
d(0,X_{n+1}) = 0 \text{ and } d(0,X_{n+2}) = 0 \ \not\Rightarrow\  d(X_{n+1},X_{n+2}) =  0,
\]
whereas most distance metrics are transitive in measuring similarity between data points.
Transitivity is also essential for other metric properties, such as the triangle inequality.
In the following, we provide a new example to illustrate that non-randomized intervals using standard distance metrics may still fail to achieve marginal validity. In distribution-free settings, we assume that the feature space can be divided into two disjoint subsets with significantly different residuals.

\begin{example}\label{example_1}
Without loss of generality, we assume that the residuals for $X_i>0$ are larger than those for  $X_i \leq 0$ with probability 1. 
We also assume that $(X_1,R_1),\dotsc, (X_{n+1},R_{n+1})$ are distinct with probability 1.
We define a prediction interval with weights only depending on whether $X_i,X_{n+1}>0$ as follows:
\begin{equation}\label{equ:hat_c_n}
\hat C_{n} (X_{n+1}) = \left[\hat \mu(X_{n+1}) \pm Q_{1-\alpha} \left(\sum_{i=1}^{n}w_i\delta_{R_i} + w_{n+1}\delta_{+\infty}   \right)\right],
\end{equation}
where 
$w_i = \phi(X_i,X_{n+1})/\sum_{j=1}^{n+1}\phi(X_j,X_{n+1})$ for $\phi(X_{n+1},X_{n+1})=1+c$ and
\begin{equation}\label{equ:phi_a_b}
\phi(X_i,X_{n+1}) = 
\begin{cases}
    a, \hspace{-1pt} \text{ if } X_i \leq 0, X_{n+1} > 0, \\
    b, \text{ if } X_i > 0, X_{n+1} \leq  0, \\
    1, \text{ otherwise}.
\end{cases}
\end{equation}
where $a,b\in [0,1]$. Choosing different values for $a$ and $b$ can make the distance metric in $\phi$ asymmetric.
In \eqref{equ:phi_a_b}, the $X_i$'s with the same sign as $X_{n+1}$ receive greater weight than those with a different sign. If $c$ is positive, $X_{n+1}$ itself receives a larger weight than others. If $c=-1$, we remove the point mass $\delta_{+\infty}$  in \eqref{equ:hat_c_n},  which defines the interval $\hat C_{n} (X_{n+1})$ solely using the residuals $R_{[n]}$. 
In the proposition below,  $\rho = \mathbb P \{X_i > 0\}$ and we omit terms that vanish exponentially as the sample size $n$ increases.
\end{example}

\begin{figure}[t]
     \centering
  \begin{subfigure}[b]{1\textwidth}      
	\centering 
         \includegraphics[width=0.95\textwidth]{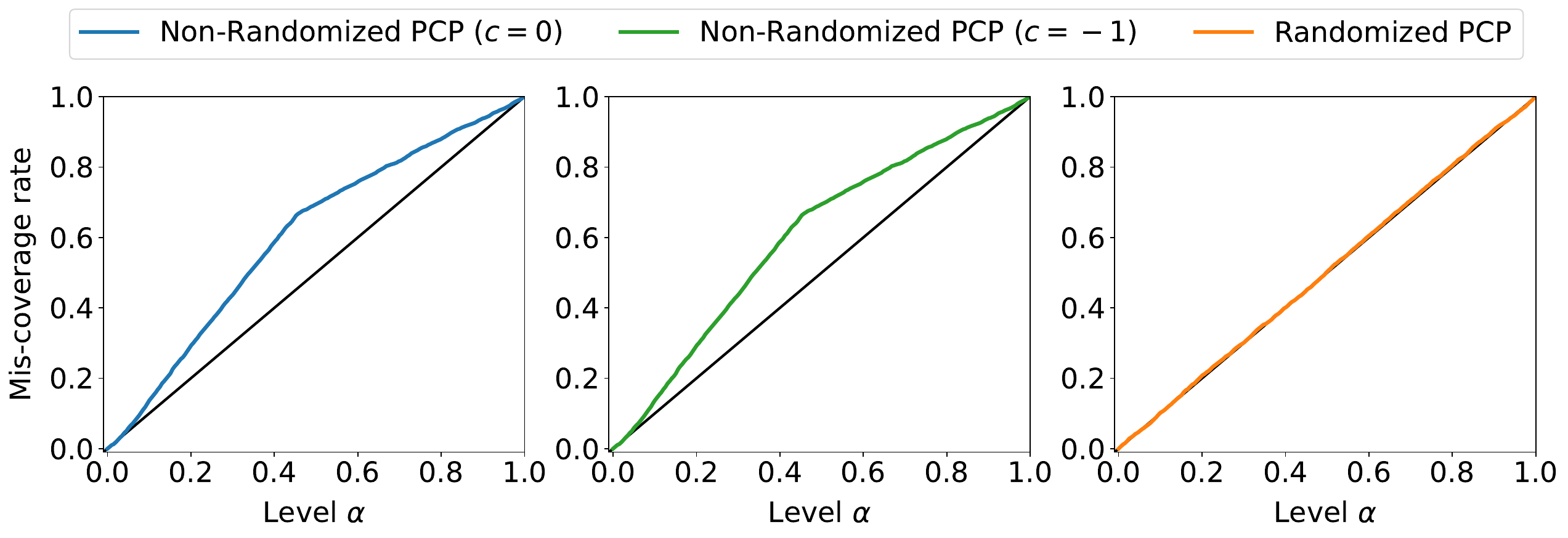}
          \caption{Asymmetric similarity measure $\phi$.}
          \label{fig:kl_1}
          \vspace{10pt}
\end{subfigure}
   \begin{subfigure}[b]{1\textwidth}    
	\centering  
         \includegraphics[width=0.95\textwidth]{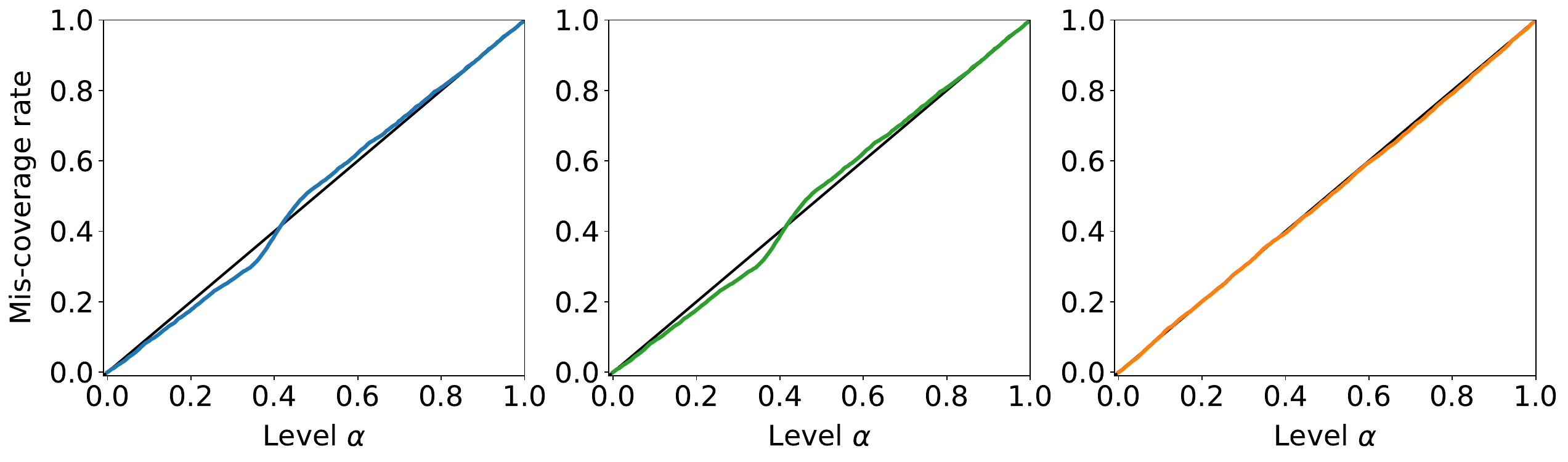}
          \caption{Symmetric similarity measure $\phi$.}
             \label{fig:kl_2}
	\end{subfigure}	
\caption{Comparison of the randomized and non-randomized PCP intervals in \Cref{sim:pcp_sim}.}
\vspace{-3pt}
\label{fig:kl}
\end{figure}

\begin{proposition}\label{prop:example}
Suppose that \(Z_i=(X_i,Y_i)\), \(i\in[n+1]\), are i.i.d.
In the setup of \Cref{example_1}, assume that \(a,b\in(0,1)\), \(c\ge -1\), and 
\(\rho=\mathbb P\{X_i>0\}\in(0,1)\).
Then the interval \(\hat C_n(X_{n+1})\) in \eqref{equ:hat_c_n}, using the weights in \eqref{equ:phi_a_b}, satisfies, up to terms that vanish exponentially fast in \(n\),
\begin{equation}\label{equ:coverage_rate}
 \mathbb P\big\{ Y_{n+1}\not\in \hat C_{n}(X_{n+1}) \big\} 
 \geq 
 \alpha + \alpha a(1-\rho) - (1-\alpha)b\rho  
 - \frac{2(1-\alpha)c+2}{n+1}.
\end{equation}
\end{proposition}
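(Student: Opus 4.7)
The plan is to exploit exchangeability of the pairs $(X_i,R_i)$ by conditioning on the unordered multiset $\{(X_i,R_i)\}_{i=1}^{n+1}$ together with the count $\tau = \sum_{i=1}^{n+1}\one\{X_i > 0\}$. Given this conditioning, the indices with $X_i > 0$ form a uniform random size-$\tau$ subset of $[n+1]$, so $X_{n+1}>0$ holds with conditional probability $\tau/(n+1)$, and within the case $X_{n+1}>0$ the value $R_{n+1}$ is uniform among the $\tau$ large residuals (and similarly uniform among the $n+1-\tau$ small residuals in the opposite case). I would compute the conditional non-coverage probability in each case and then sum.

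In the case $X_{n+1} > 0$, the weighted atoms consist of $\tau - 1$ same-sign points of weight $1$ (all large residuals), $n+1-\tau$ opposite-sign points of weight $a$ (all small residuals), and the $\delta_{+\infty}$ atom of weight $1+c$; set $S_+ = \tau + (n+1-\tau)a + c$. Because every large residual exceeds every small residual, the weighted $(1-\alpha)$-quantile coincides with the $k_+^*$-th smallest of the $\tau-1$ available large residuals, where
\[
k_+^* \;=\; \lceil (1-\alpha) S_+ - (n+1-\tau)a \rceil \;=\; \lceil (1-\alpha)\tau - \alpha(n+1-\tau)a + (1-\alpha)c \rceil.
\]
Non-coverage occurs iff the rank of $R_{n+1}$ among the $\tau$ large residuals exceeds $k_+^*$, giving conditional miscoverage at least $(\tau - k_+^*)/\tau$. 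A parallel computation for $X_{n+1} \leq 0$ yields $S_- = (n+1-\tau) + \tau b + c$, $k_-^* = \lceil (1-\alpha)S_- \rceil$, and conditional miscoverage at least $(n+1-\tau - k_-^*)/(n+1-\tau)$.

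Averaging the two cases via the conditional sign probabilities gives $P(Y_{n+1}\notin \hat C_n(X_{n+1}) \mid \tau) \geq 1 - (k_+^* + k_-^*)/(n+1)$. Applying $\lceil x \rceil \leq x + 1$ twice yields $k_+^* + k_-^* \leq (1-\alpha)(n+1) - \alpha(n+1-\tau)a + (1-\alpha)\tau b + 2(1-\alpha)c + 2$, so after rearrangement
\[
P(Y_{n+1}\notin\hat C_n(X_{n+1}) \mid \tau) \;\geq\; \alpha + \alpha a \cdot \frac{n+1-\tau}{n+1} - (1-\alpha)b\cdot \frac{\tau}{n+1} - \frac{2(1-\alpha)c+2}{n+1}.
\]
Taking expectation and using $\mathbb{E}[\tau/(n+1)] = \rho$ (by the identical distribution of the $X_i$'s under exchangeability) recovers the claimed inequality.

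The main subtle point is that the closed-form $k_+^*$ and $k_-^*$ need not correspond to the actual quantile positions in atypical regimes: when $\tau$ is unusually small, the quantile in the first case may equal $+\infty$, and when $b$ is large relative to $\alpha$, the quantile in the second case may fall among the large residuals. In both such regimes the actual conditional miscoverage equals $0$, while the formulas $(\tau - k_+^*)/\tau$ and $(n+1-\tau - k_-^*)/(n+1-\tau)$ are $\leq 0$, so the inequality ``actual $\geq$ formula'' remains valid uniformly in $\tau$ and the combined bound goes through without a separate concentration argument on $\tau$.
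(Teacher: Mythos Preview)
Your main argument is sound and actually streamlines the paper's proof: both reduce the miscoverage event to the uniform rank of $R_{n+1}$ among the same-sign residuals, but whereas the paper conditions on the sign of $X_{n+1}$ and the Binomial count in $[n]$ and then integrates via $\mathbb{E}[1/(N_0+1)]$, you condition on the full multiset (hence on $\tau$) and close with the linear identity $\mathbb{E}[\tau/(n+1)]=\rho$, which bypasses that Binomial-moment computation entirely. The arithmetic giving $k_+^* + k_-^* \le (1-\alpha)(n+1) - \alpha(n+1-\tau)a + (1-\alpha)\tau b + 2(1-\alpha)c + 2$ is correct.

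The gap is in your last paragraph. You correctly note the regimes where your per-case formulas drop to $\le 0$ and hence trivially lower-bound the true (nonnegative) conditional miscoverage. But you miss the opposite failure: when $a$ is not small and $\tau$ is small, one has $k_+^* = \lceil (1-\alpha)\tau - \alpha(n+1-\tau)a + (1-\alpha)c\rceil < 0$, which means the $(1-\alpha)$-quantile in the $X_{n+1}>0$ case lies among the \emph{small} residuals, not at $+\infty$ as you assert. In that regime the true conditional miscoverage equals $1$ while your formula $(\tau-k_+^*)/\tau$ exceeds $1$, so ``actual $\ge$ formula'' fails and the claimed uniform-in-$\tau$ bound $P(\cdot\mid\tau)\ge 1-(k_+^*+k_-^*)/(n+1)$ does not hold. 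The paper's proof has exactly the analogous defect (its per-$N_1$ lower bound can exceed $1$ for atypical $N_1$) and handles it via the caveat preceding the proposition that exponentially vanishing terms are omitted; to make your version rigorous you would likewise need a concentration step on $\tau$, contrary to your final sentence.
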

The proof of the proposition can be found in  \Cref{sect:prop_example}. In the lower bound,
the term $\alpha a(1-\rho)$ is the coverage loss for $X_{n+1}>0$ where the test residual $R_{n+1}$ is large, while $(1-\alpha)b\rho$ is the coverage gain for $X_{n+1}\leq 0$ where $R_{n+1}$ is small. The mis-coverage rate in \eqref{equ:coverage_rate} is above $\alpha$ if 
\begin{equation}\label{equ:condition}
\frac{\alpha}{1-\alpha} a - \frac{\rho}{1-\rho}b \geq \frac{2(1-\alpha)c+2 }{(n+1)(1-\alpha)(1-\rho)}.
\end{equation}
This condition holds if the interval uses an asymmetric distance metric with $a > b$ in \eqref{equ:phi_a_b}.  It becomes more difficult to satisfy if $\alpha$ is small and the distance metric is symmetric with $a=b$, for example, with Gaussian or uniform kernels.

\vspace{8pt}
\begin{simulation} \label{sim:pcp_sim}
We next verify \Cref{prop:example} by comparing the mis-coverage rates of the randomized and non-randomized PCP intervals on a synthetic dataset. We generate $X_i\sim \text{Bern}(\rho = 0.4)$ and the residual $R_i\mid X_i
=0 \sim \mathcal N (5,1)$ and $R_i\mid X_i = 1 \sim \mathcal N (10,1).$ Suppose that $\pi(0) = [0.8,0.2]$ and $ \pi(1) = [1,0].$  In \eqref{equ:phi_a_b}, we let
\[
a = \exp\{- D_{\textup{KL}}(\pi(1)\| \pi(0))\} \approx 0.8  \ \text{ and }\ b =  \exp\{- D_{\textup{KL}}(\pi(0)\| \pi(1))\}=0.
\]
We generate 10000 data points $(X_i,R_i)$ to construct the interval $\hat C_{n} (X_{n+1})$ in \eqref{equ:hat_c_n} and
the randomized PCP intervals with the precision parameter \(m=1\). Another 10000 data points are generated to evaluate the coverage rates of these intervals. As shown in \Cref{fig:kl_1}, the non-randomized intervals exhibit coverage gaps across all values of $\alpha$, while the randomized interval maintains a miscoverage rate consistent with $\alpha$.
In  \Cref{fig:kl_2}, we let $\pi(0)=[0.5, 0.5]$ to make both $a$ and $b$ close to 0.8. Using this nearly symmetric distance metric, the non-randomized PCP intervals have significantly smaller coverage gaps. The gap is approximately 0 when $\alpha \approx 0.4  = \rho$. This result aligns with our lower bound in \eqref{equ:coverage_rate} when $\alpha =\rho$ and $a = b$.
\end{simulation}

\textbf{PCP intervals.} Thus far, we have shown that non-randomized weighted intervals may fail to achieve distribution-free marginal validity. We next give a heuristic comparison between randomized and non-randomized PCP intervals as the precision parameter \(m\) increases. For any fixed \(\pi\), define the non-randomized and randomized weights by
\[
w_i \propto \phi_i := \prod_{k=1}^{K}\big[\pi_k(X_i)\big]^{m\pi_k(X_{n+1})},
\quad 
w_i^\star \propto \phi_i^\star := \prod_{k=1}^{K}\big[\pi_k(X_i)\big]^{L_k^\star},
\]
where \(L^\star\sim \mathrm{Multi}(m,\pi(X_{n+1}))\).
Let
\(
\psi=\sum_{i=1}^{n+1}\phi_i
\)
and 
\(
\psi^\star=\sum_{i=1}^{n+1}\phi_i^\star.
\)

Assuming \(\phi_i>0\) for all \(i\), define
\(
a_i:=\phi_i^\star/\phi_i,
\)
and 
\(
T:=\sum_{i=1}^{n+1}w_i a_i = \psi^\star/\psi.
\)
Then
\[
 w_i^\star = \frac{w_i a_i}{T},
\qquad
w_i- w_i^\star = \frac{w_i}{T}(T-a_i).
\]
Therefore,
\begin{align*}
\left|\sum_{i=1}^{n+1}(w_i- w_i^\star)\one\{R_i\ge R_{n+1}\}\right|
=
\frac{1}{T}\left|\sum_{i=1}^{n+1} w_i(T-a_i)\one\{R_i\ge R_{n+1}\}\right|
\le
\frac{1}{T}
\left(
\sum_{i=1}^{n+1} w_i (a_i-T)^2
\right)^{1/2},
\end{align*}
where the inequality follows from Cauchy--Schwarz and the fact that
\(\sum_{i=1}^{n+1} w_i \one\{R_i\ge R_{n+1}\}\le 1\).
The right-hand side is the weighted standard deviation of \(a_i=\phi_i^\star/\phi_i\) divided by its weighted mean \(T\). This suggests that the gap between the randomized and non-randomized weighted tail probabilities is governed by the relative fluctuation of \(\phi_i^\star/\phi_i\). Establishing a rigorous convergence rate as \(m\) increases would require additional assumptions, which we do not pursue here.

\section{Matching algorithm}\label{sect:alg}

Estimating mixture models is a nontrivial nonconvex optimization problem. Computing the PCP interval in \eqref{equ:posterior_p} requires refitting $\hat \pi^y$ for all $y\in \mathcal{Y}$.
In this section, we introduce a mixture learning algorithm to address this computational challenge. 

Our algorithm is motivated by two observations. First, given a pre-fitted model $\hat \mu $, it is easy to impute $R_{n+1}^{y} = |y-\hat \mu (X_{n+1})|$ for all $y\in \mathcal Y$. Second, to compute the weights in \eqref{equ:posterior_p}, we only need to estimate the membership probabilities, not the entire mixture model. The conditional cumulative distribution functions (CDFs) of the residuals are monotone and right-continuous. We learn the probabilities by matching the CDFs at some fixed points in the support of the marginal residual distribution.

In what follows, we introduce our algorithm $\mathcal{A}$ in three steps: gridding, fitting, and clustering. After the introduction, we provide a numerical experiment to confirm that our algorithm's complexity for computing a single PCP interval is linear in $n$.

\textbf{Gridding.}
In the gridding step, we choose a fixed number of points to match the residual CDFs. This process requires re-running the training algorithm $\mathcal{A}'$ for the predictive model $\hat \mu = \mathcal{A}'(\mathcal{D}_{[n]}')$ on the dataset
\begin{equation}\label{equ:d_prime}
\mathcal{D}_{[n]}' = \{(X_1',Y_1'),\dots,(X_n',Y_n') \}.
\end{equation}
We only need to implement the gridding step once. We will use the same grids to compute the intervals for all the test points.

We first generate $n$ residuals via a 20-fold cross-validation on $\mathcal{D}_{[n]}'$. 
More precisely, we divide $[n]$ into 20 subsets
 $\mathcal{I}^{(1)},\dots, \mathcal{I}^{(20)}$ and compute 20 models  
\[
\hat \mu_{-\mathcal{I}^{(1)}} = \mathcal{A}'(\mathcal{D}_{[n]\setminus \mathcal{I}^{(1)}}'), \dots, 
\hat \mu_{-\mathcal{I}^{(20)}} = \mathcal{A}'(\mathcal{D}_{[n]\setminus \mathcal{I}^{(20)}}').
\]
For every $(X_j',Y_j')$ in $\mathcal{D}_{[n]}',$ we let  $ \hat \mu_{-j}(X_j') = \sum_{t=1}^{20} \one \{j \in \mathcal{I}^{(t)}\}\hat \mu_{-\mathcal{I}^{(t)}}(X_j')$ denote the prediction from the model that is not fitted to $Y_j'$.
Then we compute the  residual
\begin{equation}\label{equ:r_prime}
R_j' = |Y_j'-\hat \mu_{-j}(X_j') |.
\end{equation}
Every model $\hat\mu_{-\mathcal{I}^{(t)}}$ is trained on about 5\% less data than $\hat \mu$. 
The residuals $R_{[n]}' = (R_1',\dots, R_{n}')$  have a similar support compared to $R_{[n]}$.
We choose $s$ different quantiles $\xi_{[s]}'$ of the empirical distribution of $R_{[n]}'$ to
represent the support of the distribution of $R_i$. For example, we set \(s=9\) and choose the \(q\)-quantiles for \(q\in\{0.1,\dotsc,0.9\}\).

\textbf{Fitting.}
For each $t\in [s]$ and any value $y\in \mathcal{Y}$, we want to fit a model $\tau_t^y(X)$ to estimate the conditional expectation $\mathbb P\{R\leq  \xi_t'\mid X  \}$ using the imputed data
\begin{equation}\label{equ:data_}
 (X_1,\one\{R_1 \leq \xi_t'\}),\dots, (X_n,\one\{R_n \leq \xi_t'\}),(X_{n+1},\one\{R_{n+1}^y \leq \xi_t'\}).
\end{equation}
The imputed label can only take two values, 0 or 1, so we only need to compute two estimators, $\tau_{t,0}(X_i)$ and $\tau_{t,1}(X_i)$, for all values $y$.  Then, we can define
\begin{equation}\label{equ:tau_t}
\tau_t^y(X_i)  = 
\begin{cases}
    \tau_{t,1}(X_i),\ \text{if } R_{n+1}^y \leq \xi_t',\\
    \tau_{t,0}(X_i),\ \text{otherwise.}
\end{cases}
\end{equation}
To prevent $\tau_{t}^y(X_{n+1})$ from over-fitting to $\one\{R_{n+1}^y \leq \xi_t'\}$, we implement a cross-fitting strategy: divide the imputed data in \eqref{equ:data_} into 20 folds and compute every estimate $\tau_{t}^y(X_{i})$ using the model fitted to the folds that do not contain $(X_{i},\one\{R_{i} \leq \xi_t'\})$.
To ensure our algorithm is efficient,  we construct the estimators using linear models and the least-squares method proposed by \citet{kanamori2009least}. This method is different from logistic regression in the Bregman divergences they minimize; see Section 3.1 in \citet{sugiyama2012density} for more details.  Here, we briefly describe the idea.

For every $t\in [s]$, we follow  Bayes' rule to estimate  $\mathbb P\{R\leq  \xi_t'\mid X=x  \}$ as 
\begin{equation}\label{equ:tau_t_hat}
\hat \tau_t(x) =  [n_t \hat g_t(x)]/[ n-n_t+n_t\hat g_t(x)], 
\end{equation}
where \(n_t=\sum_{i=1}^{n}\one\{R_i\leq \xi_t'\}\), 
\(g_t(x)=f_X(x\mid R\leq \xi_t')/f_X(x\mid R>\xi_t')\) is the conditional density ratio, and \(\hat g_t\) is its estimator.
The mean squared loss  of $\hat g_t$ can be written as 
\begin{align*}
 &\ \int  [\hat g_t(x) -  g_t(x)]^2 f_X(x\mid R>  \xi_t'  )dx	\\
=	&\   \int \hat g_t^2(x)f_X(x\mid R>  \xi_t'  )dx -  2\int \hat g_t(x)f_X(x\mid R\leq   \xi_t'  )dx + \int g_t^2(x)f_X(x\mid R>  \xi_t'  )dx.
\end{align*}
Observe that the third term does not depend on the estimator $\hat g_t$. In other words, we can construct $\hat g_t$ by minimizing the empirical version of the first two terms. When $\hat g_t$ is linear, the objective function on the first $n$ data points in \eqref{equ:data_} is given by 
\begin{equation}\label{equ:least_squares}
\min_{\beta}\frac{1}{n-n_t}\sum_{i=1}^{n}\one\{R_i > \xi_t'\}(X_i^{\top}\beta)^2 - \frac{2}{n_t}\sum_{i=1}^{n}\one\{R_i\leq  \xi_t'\}X_i^{\top}\beta,
\end{equation}
The solution of \eqref{equ:least_squares} can be expressed as
\[
\hat \beta = \frac{n-n_t}{n_t}\left(\sum_{i=1}^{n}\one\{R_i > \xi_t'\}X_i X_i^{\top}\right)^{-1}\sum_{i=1}^{n}X_i\one\{R_{i} \leq \xi_t'\}.
\]
Finally, we define $\hat \tau_t (x)$ in \eqref{equ:tau_t_hat} using $\hat g_t(x) = \max\{\hat \beta ^{\top}x,0\}$.
Given the solution $\hat \beta$, we can efficiently compute the least-squares solutions for all the test points and $y\in \mathcal{Y}$ using algorithms developed for full conformal prediction \citep{vovk2005algorithmic}. The key idea is to update $\hat \beta$ with the last point $(X_{n+1},\one\{R_{n+1}^y \leq \xi_t'\})$, using the Sherman--Morrison--Woodbury (SMW) identity. 
The complexity of updating all the estimators for one test point is 
$O(nd^2+ndc),$ where $d$ is the dimension of $\mathcal{X}$ and $c$ is the total number of linear models to update. When we use 20-fold cross-fitting, $c=  20\times 2\times 9 = 360$. 
As we will explain later, 
the constant $c$ can be reduced by computing our interval as a union when $\alpha$ is small, e.g., 0.1.  Using the identity, we can also efficiently update solutions in 
kernel regression \citep{burnaev2014efficiency} and Lasso \citep{lei2019fast}, as well as in our method with regularization. We can select the regularization hyperparameter by cross-validating the linear models fitted to the residuals from \eqref{equ:r_prime}:
\begin{equation}\label{equ:data_prime}
(X_1',\one\{R_{1}' \leq \xi_t'\}),\dots,(X_{n}',\one\{R_{n}' \leq \xi_t'\}).
\end{equation}

\textbf{Clustering.}
Let $( \xi_{0}',\xi_{s+1}') = (0,\infty)$. The response space $\mathcal{Y}$ is covered by 
\[
\mathcal{Y}^{(j)}
=
\left\{y\in \mathcal{Y}: R_{n+1}^y\in \big[\xi_{j-1}',\xi_j'\big)\right\},
\qquad j\in [s+1].
\]
By definition, 
$\xi_{0}'<\xi_1'< \dots <\xi_s'<\xi_{s+1}'$.
The estimators $\tau_1^y,\dots, \tau_s^y$ are unchanged across $y\in \mathcal{ Y}^{(j)}$ because the binary responses in  
\eqref{equ:data_} remain invariant across all $y\in \mathcal{Y}^{(j)}$. Thus, we let 
$\tau_{[s]}^{(j)}(x) = [\tau_1^y(x),\dots, \tau_s^y(x)]$ for all $y\in \mathcal{Y}^{(j)}$ and $j\in [s+1]$.

We learn the membership probabilities by solving a reconstruction problem,
\begin{equation}\label{equ:opt_2}
\hat\pi_{[n+1],[J]}^{(j)}, \hat\gamma_{[J]}^{(j)}
\in
\argmin_{\pi_{i,[J]}\in \Delta^{J-1},\,\gamma_k\in \mathbb{R}^{s}}
\sum_{i=1}^{n+1} 
\Big\| 
\tau_{[s]}^{(j)}(X_i)	
- \sum_{k=1}^{J}\pi_{i,k}\gamma_{k}	
\Big\|^2,
\end{equation}
where the choice of \(J\) will be discussed in \Cref{sect:hyper}.
For every \(k\in[J]\), \(\hat\gamma_k^{(j)}\) estimates the values of an unconditional residual CDF at the points \(\xi_1',\dots,\xi_s'\). Then, 
\(\hat\pi_{i,[J]}^{(j)}=(\hat\pi_{i,1}^{(j)},\dots,\hat\pi_{i,J}^{(j)})\) combines
\(\hat\gamma_{[J]}^{(j)}=(\hat\gamma_1^{(j)},\dots,\hat\gamma_J^{(j)})\)
to reconstruct the estimators \(\tau_{[s]}^{(j)}(X_i)\).
For \(y\in\mathcal Y^{(j)}\), we set 
\(\hat\pi^y(X_i)=\hat\pi_{i,[J]}^{(j)}\), \(i\in[n+1]\).
To warm start, we initialize $(\gamma_1,\dots, \gamma_{J})$ as the $J$ cluster centers from the $k$-means$++$ algorithm \citep{arthur2007k} applied to the estimators  $\tau_{[s]}^{(j)}(X_1),\dots, \tau_{[s]}^{(j)}(X_{n+1})$. 
The algorithm has complexity $O(Jns)$.
Then, we improve the fit by alternating optimization.  Fixing $\gamma$'s allows us to update each $\pi_{i,[J]}$ via mirror descent. 
Fixing the \(\pi\)'s enables us to update \(\gamma_1,\dots,\gamma_J\) with a least-squares fit, which has complexity $O(Jns^2+J s^3)$. Alternating between the two convex problems will converge to a local minimum of the problem in \eqref{equ:opt_2}.

\textbf{PCP interval.} Finally, we compute our PCP interval by taking a union:
\[
\hat C_{n}^{\textup{PCP}}(X_{n+1})
=
\bigcup_{j=1}^{s+1}\hat C_{n}^{(j)}
\equiv
\bigcup_{j=1}^{s+1}
\left\{
y\in \mathcal{Y}^{(j)}:
P_n^{(j)}(y)
=
\sum_{i=1}^{n+1}
\hat w_{i}^{\star,(j)}
\one {\left\{R_i\geq R_{n+1}^{y}\right\}}
>\alpha
\right\},
\]
where the weights \(\hat w_i^{\star,(j)}\) are computed from \(\hat\pi_{[n+1],[J]}^{(j)}\) using the randomized multinomial construction in \eqref{equ:posterior_p}.
In any implementation, we can compute $\hat C_{n}^{(1)},\dots, \hat C_{n}^{(s+1)}$  in a backward fashion. 
Once we reach \(j^*\) satisfying
\(\hat C_n^{(j^*)}=\mathcal Y^{(j^*)}\),
we can set the remaining $\hat C_{n}^{(j)}= \mathcal{Y}^{(j)} $ for all $j=1,\dots, j^*-1$.
This simplification never yields a smaller interval and often returns the same interval that we want to compute. Indeed, if \(j<j^*\), it often holds that
\[
P_n^{(j)}(y)> P_n^{(j^*)}(y'), \forall y\in \mathcal Y^{(j)} \text{ and } y'\in \mathcal Y^{(j^*)}.
\]
When \(\alpha=0.1\), \(j^*\) is often close to \(s\). Suppose that \(j^*=s=9\). To compute the interval, we only need to compute $\tau_{t,1}(x)$ for all $t\in [9]$ and $\tau_{9,0}(x)$ in \eqref{equ:tau_t}, i.e., update $10\times 20 = 200$ cross-fitted linear models using the solution $\hat \beta$ and the SMW identity.

\textbf{Numerical experiment.}
We verify the computational complexity of our algorithm $\mathcal A$ using the MEPS 19 and 20 datasets \citep{romano2019conformalized} with 139 features. We see from \Cref{fig:compute} that the time required to compute 1000 PCP intervals increases linearly as the size $n$ of the validation set grows from 2000 to 10000.
This verifies our calculation above, which shows that the computational complexity of $\mathcal A$ is linear in $n$. 
When the validation set size $n=10000$, our algorithm $\mathcal A$ can generate 1000 intervals in less than 10 minutes on a MacBook Pro laptop with an Intel Core i7 processor. Given this linearly increasing complexity, we can apply PCP to large datasets, and evaluate its coverage rate and interval length through hundreds of repeated experiments.

\begin{figure}[t]
\centering
         \includegraphics[width=0.8\textwidth]{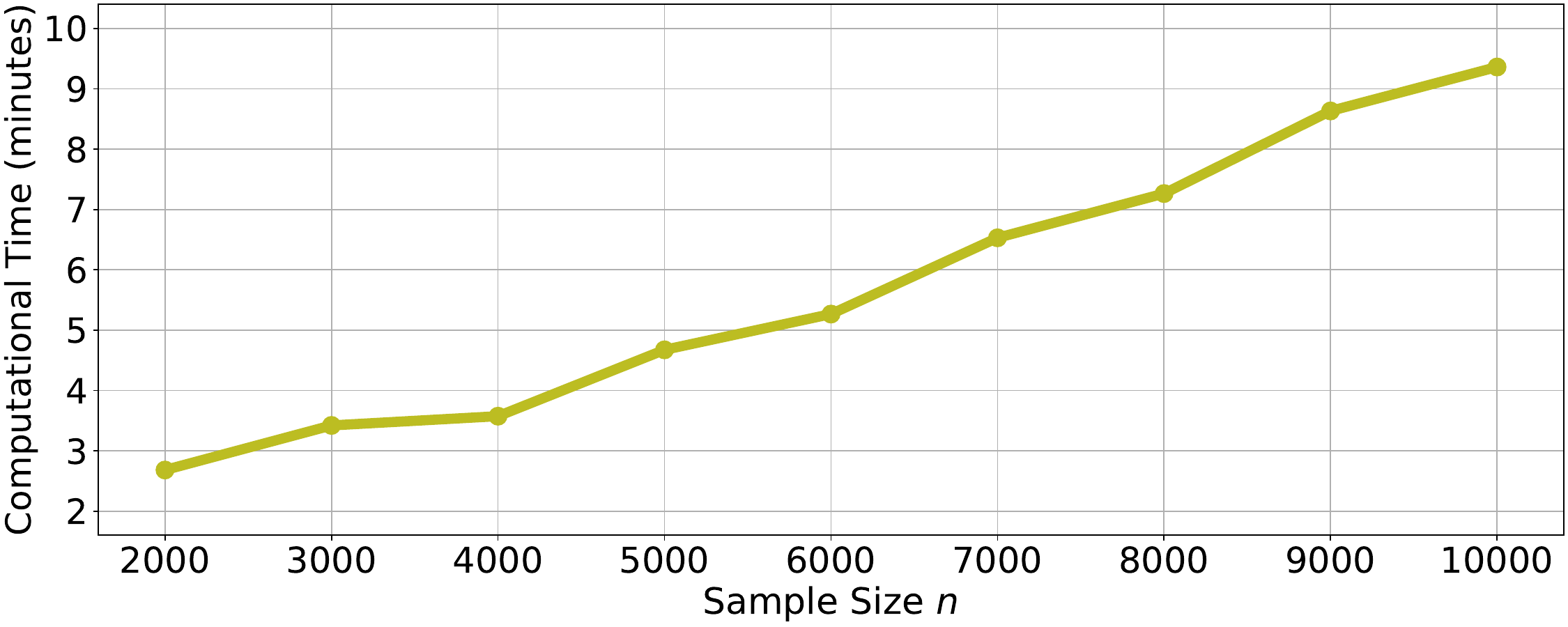}
        \caption{Computational time of generating 1000 PCP intervals.}
        \label{fig:compute}
\end{figure}

\subsection{Hyperparameter selection}\label{sect:hyper}

\textbf{Number of cluster distributions $J$.} We choose the number of cluster distributions $J$ using the dataset $\mathcal{D}_{[n]}'$ in \eqref{equ:d_prime}
	used to pre-fit $\hat \mu$.  As described in \eqref{equ:r_prime}, we can use $\mathcal{D}_{[n]}'$ to generate $n$ extra residuals  $R_{[n]}'$. We apply the algorithm $\mathcal{A}$ on $\mathcal{D}_{[n]}'$
	to learn a membership probabilities estimator. More precisely, using the data in 
\eqref{equ:data_prime}, we fit the same model $\tau_t'$ 
to estimate $\mathbb P\{R\leq  \xi_t'\mid X  \}$ for all $t\in [s]$.
Let $\tau_{[s]}' = (\tau_1',\dots, \tau_s')$. 
We learn the membership probabilities by solving a reconstruction problem,
\[
\min_{\pi_{i,[J]}\in \Delta^{J-1},\,\gamma_k\in \mathbb{R}^{s}}\sum_{i=1}^{n} \Big\| \tau_{[s]}'(X_i')	
- \sum_{k=1}^{J}\pi_{i,k}\gamma_{k}	\Big\|^2.
\]
The loss is difficult to minimize when $J$ is too small. We increment \(J=1,2,\dots\) until the coefficient of determination improves by less than 0.05 from \(J\) to \(J+1\).

\textbf{Precision parameter $m$.} Let \(\hat\pi'\) denote the membership probability estimator using the selected \(J\). For each point $(X_i',Y_i')$ in $\mathcal{D}_{[n]}'$, we compute its PCP interval for \(Y_i'\) using \(\hat\pi'\) and the other points in $\mathcal{D}_{[n]}'$ as a validation set.
More specifically, we let 
\[
\hat L_i^{\star\prime}\sim \textup{Multi}\big(m,\hat\pi'(X_i')\big),
\qquad
\hat\pi^{\star\prime}(X_i')=\hat L_i^{\star\prime}/m.
\]
The PCP interval for the response $Y_i'$ is defined as
\[
\hat C_{n-1}^{\textup{PCP}}(X_i')  =
\left\{y\in \mathcal{Y}:
\sum_{j\in[n]\setminus\{i\}} \hat w_{ij}^\star
\one {\left\{ R_j'\geq R_i^{\prime,y} \right\}}
+\hat w_{ii}^\star
>\alpha
\right\}.
\]
where \(R_i^{\prime,y}=|y-\hat \mu_{-i}(X_i')|\), and the weights
\[
\hat w_{ij}^\star \propto 
\prod_{k=1}^{J}
\big[\hat\pi_k'(X_j')\big]^{m\hat\pi_k^{\star\prime}(X_i')}.
\quad j\in[n],
\]
Let  \(\|\hat w_i^\star\|_2^2 = \sum_{j=1}^{n}(\hat w_{ij}^\star)^2\). We define the \quotes{effective sample size} of \(\hat C_{n-1}^{\textup{PCP}}(X_i')\) as \(1/\|\hat w_i^\star\|_2^2\), 
which is a heuristic formula used in the literature of distributional shifts \citep{gretton2009covariate,reddi2015doubly,tibshirani2019conformal,barber2022conformal}. When the weights $\hat  w_{ij}^\star=1/n$ for all $j\in [n]$, the effective sample size is $n$. When the weights $\hat w_{ij}^\star$ are unequal, the effective sample size decreases as the $\ell^2$-norm increases. 
We use the bisection method to search for the largest \(m\in [5,500]\) that achieves an effective sample size greater than \(100\) and \(n^{-1}\sum_{i=1}^n \hat w_{ii}^\star \leq 1/30\), ensuring both the effective sample size and the precision parameter \(m\) are reasonably large to compute our interval in \eqref{equ:posterior_p}.

Allowing for a smaller sample size may improve conditional coverage at the cost of wider intervals. While this trade-off is present in both our method and localized conformal prediction methods, it tends to be more severe for the latter in high-dimensional feature spaces. For instance, in a 100-dimensional space with a sample size of $n = 3000$, it may be difficult to find a sufficient number of neighbors (e.g., 100) near the test point $X_{n+1}$. In contrast, our method may assign these 3000 data points to a small number of clusters. Then regardless of the cluster membership of $X_{n+1}$, we would have 
 a sufficient number of data points to construct the PCP interval for $Y_{n+1}$.

\section{Technical proofs}\label{sect:proof_B}

\subsection{Proof of Proposition \ref{prop:example}}\label{sect:prop_example}

\begin{proof}
We denote the cluster membership of \(X_i\) by
\[
U_i:=\one\{X_i>0\}\sim \textup{Bern}(\rho),
\]
and let the subgroup sample sizes be
\[
N_1:=\sum_{i=1}^n U_i\sim \textup{Binomial}(n,\rho),
\qquad
N_0:=n-N_1.
\]
When $X_{n+1}\leq 0,$ we have
\begin{align*}
\sum_{i=1}^{n+1}w_i\one {\left\{R_i\geq R_{n+1}\right\}}  & =  \frac{\sum_{i=1}^{n+1}[(1-U_i)+bU_i]\one {\left\{R_i\geq R_{n+1}\right\}}  + c}{N_0+N_1 b +1+c} \\
& = \frac{N_0+1}{N_0+N_1 b +1+c} \times \frac{\sum_{i=1}^{n+1}(1-U_i)\one {\left\{R_i\geq R_{n+1}\right\}} + bN_1+c }{N_0+1}.
\end{align*}
The second equality uses the assumption that, when \(X_{n+1}\le 0\), all residuals from observations with \(X_i>0\) are larger than \(R_{n+1}\) with probability one.
Then, we have
\begin{equation}\label{equ:p_cover_0}
    \begin{split}
&\ \mathbb P\{ Y_{n+1}\not\in \hat C_{n} (X_{n+1}) \mid X_{n+1}\leq 0,N_0 \} \\
= &\  \mathbb P \left\{  \sum_{i=1}^{n+1}w_i\one {\left\{R_i\geq R_{n+1}\right\}}\leq  \alpha \ \Big| \  X_{n+1}\leq 0,N_0 \right\}  \\
= &\   \mathbb P \left\{  \frac{\sum_{i=1}^{n+1}(1-U_i)\one {\left\{R_i\geq R_{n+1}\right\}}  }{N_0+1}  \leq \alpha - \frac{(1-\alpha)(N_1 b+c) }{ N_0+1} \ \Big| \ X_{n+1}\leq 0,N_0 \right\} \\
\geq  &\ \alpha - \frac{(1-\alpha)(N_1 b+c) +1 }{ N_0+1} \\
=   &\ \alpha + (1-\alpha)b - \frac{(1-\alpha)[(n+1)b+c] + 1}{N_0+1}.
    \end{split}
\end{equation}
The inequality is obtained by conditioning on which of the \(N_0\) units satisfy \(X_i\le 0\).
For example, given that these \(N_0\) validation points and the test point all satisfy \(X_i\le 0\), the corresponding residuals
$R_1,\dotsc, R_{N_0},R_{n+1}$ are drawn from the same distribution. 
Then, the empirical average in the second equality is a valid $p$-value, which satisfies the inequality in the last line when all the residuals are distinct \citep[Theorem 2]{lei2018distribution}.  
We also note that
\[
\mathbb P \left\{ \alpha - \frac{(1-\alpha)(N_1 b+c) }{ N_0+1} \geq 0      \right\} = \mathbb P \left\{ N_1 \leq \frac{\alpha(n+1) -(1-\alpha)c}{\alpha + (1-\alpha)b}      \right\}.
\]
Since \(N_1/n\to \rho\), this probability tends to one whenever we choose the example parameters so that
$\rho < \alpha/[\alpha+(1-\alpha)b]$.
Under this choice, the lower bound in \eqref{equ:p_cover_0} is asymptotically tight up to the \(1/(N_0+1)\) discretization term.

Since $N_1=\sum_{i=1}^n U_i\sim \textup{Binomial}(n,\rho)$ and $N_0=n-N_1\sim \textup{Binomial}(n,1-\rho)$
we have
\[
%\mathbb E\left[\frac{1}{M+1}\right]
%=
%\frac{1-(1-p)^{n+1}}{(n+1)p},
%~
\mathbb E\left[\frac{1}{N_0+1}\right]
=
\frac{1-\rho^{n+1}}{(n+1)(1-\rho)},
\qquad
\mathbb E\left[\frac{1}{N_1+1}\right]
=
\frac{1-(1-\rho)^{n+1}}{(n+1)\rho}.
\]
Then it follows from \eqref{equ:p_cover_0}  that
\begin{equation}\label{equ:x_minus}
\begin{split}
 &\ \mathbb P\{ Y_{n+1}\not\in \hat C_{n} (X_{n+1})  \text{ and } X_{n+1}\leq 0\} \\
\geq &\  (1-\rho) \times  \left(   \alpha + (1-\alpha)b - 
\left[(1-\alpha)[(n+1)b+c] + 1\right]\times \mathbb E\left[\frac{1}{N_0+1}\right]\right) \\
= &\ (1-\rho)\alpha - (1-\alpha) b\rho \left[1-\rho^n\right]  - \frac{(1-\alpha)c+1}{n+1}\left[1-\rho^{n+1}\right].
\end{split}
\end{equation}
Similarly, when \(X_{n+1}>0\), we have
\begin{align*}
\sum_{i=1}^{n+1}w_i\one {\left\{R_i\geq R_{n+1}\right\}}  & =  \frac{\sum_{i=1}^{n+1}[a(1-U_i)+U_i]\one {\left\{R_i\geq R_{n+1}\right\}}  + c}{N_0 a +N_1 +1+c} \\
& = \frac{N_1+1}{N_0 a+N_1 +1+c} \times \frac{\sum_{i=1}^{n+1}U_i\one {\left\{R_i\geq R_{n+1}\right\}} +c }{N_1+1}.
\end{align*}
The mis-coverage rate of $\hat C_{n} (X_{n+1})$ conditional on $X_{n+1}>0$ can be expressed as
\begin{equation}\label{equ:p_cover_1}
    \begin{split}
&\ \mathbb P\{ Y_{n+1}\not\in \hat C_{n} (X_{n+1}) \mid X_{n+1}>0, N_1 \} \\
= &\  \mathbb P \left\{  \sum_{i=1}^{n+1}w_i\one {\left\{R_i\geq R_{n+1}\right\}}\leq  \alpha \ \Big| \  X_{n+1}> 0, N_1 \right\}  \\
= &\   \mathbb P \left\{  \frac{\sum_{i=1}^{n+1}U_i\one {\left\{R_i\geq R_{n+1}\right\}}  }{N_1+1}  \leq  \alpha + \frac{ \alpha N_0 a - (1-\alpha)c }{ N_1+1} \ \Big| \ X_{n+1}> 0, N_1 \right\} \\
\geq  &\ \alpha + \frac{ \alpha N_0 a - (1-\alpha)c -1}{ N_1+1} =
\alpha -\alpha a + \frac{\alpha (n+1)a - (1-\alpha)c-1}{N_1+1}.
    \end{split}
\end{equation}
As explained in \eqref{equ:p_cover_0}, the inequality is tight when the probability
\[
\mathbb P \left\{  \alpha + \frac{ \alpha N_0 a - (1-\alpha)c }{ N_1+1}\geq 0      \right\} = \mathbb P \left\{  N_1 \geq \frac{-\alpha (n a+1) + (1-\alpha)c  }{\alpha (1-a)}   \right\},
\]
goes to 1 as $n$ increases.
Next,
\begin{equation}\label{equ:x_plus}
\begin{split}
 &\ \mathbb P\{ Y_{n+1}\not\in \hat C_{n} (X_{n+1})  \text{ and } X_{n+1}> 0\} 
 \\
\geq  &\ \rho \times  \left(  \alpha -\alpha a + \left[\alpha (n+1)a - (1-\alpha)c-1\right] \times \mathbb E \left[\frac{1}{N_1+1}\right] \right)    \\
= &\ \rho\alpha + \alpha a (1-\rho) \left[1- (1-\rho)^{n}\right]   - \frac{(1-\alpha)c+1}{n+1}\left[1-(1-\rho)^{n+1}\right].
\end{split}
\end{equation}
Using \eqref{equ:x_minus} and \eqref{equ:x_plus}, we can lower bound the mis-coverage rate of $\hat C_{n} (X_{n+1})$ as follows:
\begin{align*}
 \mathbb P\{ Y_{n+1}\not\in \hat C_{n} (X_{n+1}) \} 
\geq  &\ \alpha + \alpha a(1-\rho)\left[1-(1-\rho)^n\right] - 
(1-\alpha)b\rho\left[1-\rho^n\right] \\
& - 2[(1-\alpha)c+1 ]/[n+1].
\end{align*}
Up to terms that vanish exponentially fast in \(n\), this proves the claimed bound for the interval with the point mass included. We next verify the case \(c=-1\), where the point mass is removed.

When $c=-1,$ the interval  $\hat C_{n}(X_{n+1})$ is defined using the first $n$ data points. In other words, we have
\[
\hat C_{n}^{-} (X_{n+1}) = \left[\hat \mu(X_{n+1}) \pm Q_{1-\alpha} \left(\sum_{i=1}^{n}w_i^{-}\delta_{R_i}   \right)\right],
\]
where  \(w_i\) denotes the corresponding weight with \(c=0\).
Let \(D_0=N_0+N_1b\). When \(X_{n+1}\le0\), \(w_i^-=w_i(D_0+1)/D_0\).
Then
\begin{align*}
&\ \mathbb P\{ Y_{n+1}\not\in \hat C_{n}^- (X_{n+1}) \mid X_{n+1}\leq 0,N_0,N_1 \} \\
= &\  \mathbb P \left\{
\sum_{i=1}^{n}w_i^-\one {\left\{R_i\geq R_{n+1}\right\}}\leq  \alpha
\ \bigg| \  X_{n+1}\leq 0,N_0,N_1 \right\}  \\
= &\   \mathbb P \left\{
\sum_{i=1}^{n+1}w_i\one {\left\{R_i\geq R_{n+1}\right\}}\leq
\alpha' :=
\left[\alpha  +  \frac{1}{D_0}\right]\frac{D_0}{D_0+1}
\ \bigg| \  X_{n+1}\leq 0,N_0,N_1 \right\}  \\
\geq &\ \alpha'  - \frac{(1- \alpha')N_1 b +1}{ N_0+1}   \\
= &\ \alpha + (1-\alpha) b -  \frac{(1-\alpha)(n+1)b+\alpha }{N_0+1}.
\end{align*}
The inequality follows from \eqref{equ:p_cover_0} with \(c=0\) and \(\alpha\) replaced by \(\alpha'\).
Similarly, let \(D_1=N_0a+N_1\). When \(X_{n+1}>0\),
\(w_i^-=w_i(D_1+1)/D_1\). Then
\[
\mathbb P\{ Y_{n+1}\not\in \hat C_{n}^- (X_{n+1}) \mid X_{n+1}>0,N_0,N_1 \}
\geq
 \alpha - \alpha a + \frac{\alpha (n+1) a -\alpha }{N_1 + 1},
\]
using \eqref{equ:p_cover_1} with \(c=0\) and \(\alpha\) replaced by
$\alpha'=\left[\alpha+\frac{1}{D_1}\right]\frac{D_1}{D_1+1}.$
Then,
\begin{align*}
&\mathbb P\{ Y_{n+1}\not\in \hat C_{n}^- (X_{n+1}) \} 
= \ \mathbb E\left[
\mathbb P\{Y_{n+1}\not\in \hat C_n^-(X_{n+1})\mid X_{n+1},N_0,N_1\}
\right] \\
\geq &\ (1-\rho)\left[
\alpha+(1-\alpha)b
-
\big\{(1-\alpha)(n+1)b+\alpha\big\}
\mathbb E\left\{\frac{1}{N_0+1}\right\}
\right] \\
&\quad
+\rho\left[
\alpha-\alpha a
+
\{\alpha(n+1)a-\alpha\}
\mathbb E\left\{\frac{1}{N_1+1}\right\}
\right] \\
= &\ \alpha
+\alpha a(1-\rho)\left[1-(1-\rho)^n\right]
-(1-\alpha)b\rho\left[1-\rho^n\right] 
-\frac{\alpha}{n+1}\left[2-\rho^{n+1}-(1-\rho)^{n+1}\right].
\end{align*}
This is at least
\[
\alpha
+\alpha a(1-\rho)\left[1-(1-\rho)^n\right]
-(1-\alpha)b\rho\left[1-\rho^n\right]
-\frac{2\alpha}{n+1}.
\]
Up to terms that vanish exponentially fast in \(n\), this confirms that the proposition holds with \(c=-1\).
\end{proof}

\subsection{Proof of Proposition \ref{prop:oracle_coverage_2}}\label{sect:oracle_coverage_2}

\begin{proof}
By the definition \(w_i\), we have  $w_1,\dots, w_{n+1} \independent X_{n+1}\mid  \pi (X_{n+1})$. By the definition of $\hat C_{n}(X_{n+1} )$ in \eqref{equ:oracle}, we have
\[
Y_{n+1}\not\in \hat C_{n}(X_{n+1} )  \ \Longleftrightarrow\  \hat P_n:= \sum_{i=1}^{n+1} w_{i}\one {\left\{R_i\geq R_{n+1}\right\}}\leq \alpha.
\]
which shows that the miscoverage event $\{Y_{n+1}\not\in \hat C_{n}(X_{n+1} )\}$ only depends on $X_{n+1}$ through $R_{n+1}$ and $ \pi(X_{n+1})$ in the weights $w_1,\dots, w_{n+1}$. 
Under the model \eqref{equ:mixture},
\[
R_{n+1}\independent X_{n+1}\mid \pi(X_{n+1}).
\]
The probability of miscoverage given $\pi(X_{n+1})$ and $X_{n+1} $ is independent of $X_{n+1}:$
\[
\mathbb P \big\{ \hat P_n\leq \alpha
\,\big|\, X_{n+1},\pi(X_{n+1}) \big\}
=
\mathbb P \big\{ \hat P_n\leq \alpha
\,\big|\, \pi(X_{n+1}) \big\}.
\]
Since \(\pi(X_{n+1})\) is a function of \(X_{n+1}\), conditioning on \(X_{n+1}\) already conditions on \(\pi(X_{n+1})\). Hence,
\[
\mathbb P \big\{ \hat P_n\leq \alpha
\,\big|\, X_{n+1},\pi(X_{n+1}) \big\}
=
\mathbb P \big\{ \hat P_n\leq \alpha
\,\big|\, X_{n+1} \big\}.
\]
\end{proof}

\subsection{Proof of Theorem \ref{thm:validity_1} and Proposition  \ref{thm:pcp_cover}}

\begin{proof}
Below, we prove Proposition \ref{thm:pcp_cover}; Theorem \ref{thm:validity_1} follows by removing the hats and the imputation superscript \(y\), and by replacing \(J\) with \(K\).
Let   $z_i = (x_i,y_i)$ and $r_i = |y_i- \hat \mu (x_i)|.$ 
 Let $E_{\bm z}$ denote the event $\{Z_{1},\dotsc, Z_{n+1}\} = \mathcal{D}_{\bm z}:=  \{z_1,\dots, z_{n+1}\}  $.
Let \(l=(l_1,\dots,l_J)\in\mathbb N^J\) satisfy \(\sum_{k=1}^J l_k=m\).
Conditional on \(E_{\bm z}\), we write \(\hat\pi_k(x_i)\) for the membership probability assigned to \(x_i\) by the symmetric algorithm \(\mathcal A\) applied to the unordered dataset \(\mathcal D_{\bm z}\); this is the common value of \(\hat\pi_k^{y_j}(x_i)\) under the event \(Z_{n+1}=z_j\).

For any $j\in [n+1]$,
\begin{align*}
   &\  \mathbb P\{ Z_{n+1}=z_j \mid E_{\bm z}, \hat  \pi^{\star, Y_{n+1}}(X_{n+1}) =  l/m \}  \\
= &\ \frac{\sum_{\sigma :\sigma  (n+1) = j } p_{Z_{[n+1]}}(z_{\sigma (1)},\dots, z_{\sigma (n+1)})\prod_{k=1}^{J}\big[ \hat \pi_k^{y_j}(x_{j})\big]^{ l_k }  }
{ \sum_{i=1}^{n+1}\sum_{\sigma ':\sigma ' (n+1) = i} p_{Z_{[n+1]}}(z_{\sigma '(1)},\dots, z_{\sigma '(n+1)}) \prod_{k'=1}^{J}\big[\hat \pi_{k'}^{y_i}(x_i)\big]^{ l_{k'} }} \quad\quad(\text{\small By Bayes' rule}) \\
= &\  \frac{\sum_{\sigma :\sigma  (n+1) = j } \prod_{k=1}^{J}\big[\hat \pi_k^{y_j}(x_{j})\big]^{ l_k }}
{ \sum_{i=1}^{n+1}\sum_{\sigma':\sigma' (n+1) = i } \prod_{k'=1}^{J}\big[\hat \pi_{k'}^{y_i}(x_i)\big]^{ l_{k'} }}\hspace{25pt} (\text{\small By the exchangeability of  $Z_{[n+1]}$}) \\
 = &\  \frac{\prod_{k=1}^{J}\big[\hat \pi_k(x_{j})\big]^{ l_k }}
{ \sum_{i=1}^{n+1} \prod_{k'=1}^{J}\big[\hat \pi_{k'}(x_i)\big]^{ l_{k'} }}\hspace{40pt} (\text{\small By the symmetry of  $\mathcal{A}$}) \\
=  &\ \hat  w_{j}^\star(\{z_1,\dots, z_{n+1}\})   \hspace{35pt} (\text{\small By the definition of 
$\hat  w_{j}^\star$)}.
\end{align*}
The probability of mis-coverage can be rewritten as
\begin{equation}\label{equ:marginal_validity_step}
    \begin{split}
  &\  \mathbb P\{Y_{n+1}\not\in \hat C_n^{\textup{PCP}}(X_{n+1}) \mid	  E_{\bm z} , \hat  \pi^{\star, Y_{n+1}}(X_{n+1}) =  l/m	\}		\\
 = &\  \sum_{j=1}^{n+1} \mathbb P\{Y_{n+1}\not\in \hat C_n^{\textup{PCP}}(X_{n+1}) \mid	E_{\bm z}, \hat  \pi^{\star, Y_{n+1}}(X_{n+1}) = l/m, Z_{n+1}=z_j	\}		\\
 &\ \hspace{20pt}  \times	\mathbb P\{ Z_{n+1}=z_j \mid  E_{\bm z},	\hat  \pi^{\star, Y_{n+1}}(X_{n+1}) =  l/m \} 	\\
= &\  \sum_{j=1}^{n+1} \hat w_{j}^\star(\{z_1,\dots, z_{n+1}\}) \one \left\{\sum_{i=1}^{n+1}\hat w_{i}^\star ( \{z_1,\dots, z_{n+1}\})\one{ \{r_{i}\geq r_{j}\} } \leq \alpha  \right\} \leq \alpha.
    \end{split}
\end{equation}
The last equality uses the weight expression above and that $Y_{n+1}\not\in \hat C_n^{\textup{PCP}}(X_{n+1})$ is deterministic under the conditional event. 
More precisely, the non-random estimated membership probabilities \(\hat\pi=\hat\pi^{Y_{n+1}}\) are functions of \(\mathcal D_{\bm z}\).
As functions of \(\hat\pi^{\star,Y_{n+1}}(X_{n+1})=l/m\) and \(\mathcal D_{\bm z}\), the weights in our interval are also fixed.
Finally, $ Z_{n+1}=z_j$ fixes the test point, which does not affect the weights of our interval given $\hat  \pi^{\star,Y_{n+1}}(X_{n+1}) =  l/m$.
The last step is achieved by the deterministic inequality in Lemma A.1 in  \citet{harrison2012conservative}.

Following the proof of the same Lemma A.1, we lower bound the probability of miscoverage. 
Without loss of generality, we assume that $r_{1}<\cdots < r_{n+1}$. Otherwise, we can sort them from the smallest to the largest. Let $j^*$ be the smallest $j\in [n+1]$ such that $\sum_{i=1}^{n+1}\hat w_{i}^\star \one{ \{r_{i}\geq r_{j}\} } \leq \alpha $. If $j^*=1$, then
\[
\sum_{j=1}^{n+1} \hat w_j^\star \one \left\{\sum_{i=1}^{n+1}\hat w_{i}^\star \one{ \{r_{i}\geq r_{j}\} } \leq \alpha  \right\} = 1 > \alpha,
\]
so the desired lower bound is immediate. Otherwise, $j^*\geq 2$, and we have
\begin{align*}
&\ \sum_{j=1}^{n+1} \hat w_{j}^\star \one \left\{\sum_{i=1}^{n+1}\hat w_{i}^\star \one{ \{r_{i}\geq r_{j}\} } \leq \alpha  \right\} = \sum_{j=j^*}^{n+1}\hat w_j^\star 	= \sum_{j=j^*-1}^{n+1}\hat w_j^\star
- \hat w_{j^*-1}^\star >  \alpha -  \hat w_{j^*-1}^\star. 
\end{align*}
By the definition of $j^*$, 
$\sum_{j=j^*-1}^{n+1}\hat w_j^\star  > \alpha,$
which gives the inequality above.  Upper bounding $ \hat w_{j^*-1}^\star$ by $\max_{i\in [n+1]}\hat w_{i}^\star$ proves the claim as required. 
Averaging the upper bound in \eqref{equ:marginal_validity_step} over \(E_{\bm z}\) conditional on \(\hat\pi^\star(X_{n+1})=\hat\pi^{\star,Y_{n+1}}(X_{n+1})\) gives \eqref{equ:pcp_guarantee}. Similarly, averaging the deterministic lower bound over \(E_{\bm z}\) conditional on \(\hat\pi^\star(X_{n+1})\) gives \eqref{equ:pcp_guarantee_2}.
\end{proof}

\subsection{Proof of Theorem \ref{thm:length}}\label{sect:length}

The interval \(\hat C_n(X_{n+1})\) in \eqref{equ:oracle} is infinite when the weight assigned to \(\delta_{+\infty}\) satisfies \(w_{n+1}^\star \geq \alpha\). Writing the non-normalized weights as
\(
\phi(X_i,X_{n+1})
=
\prod_{k=1}^{K}\big[\pi_k(X_i)\big]^{L_k^\star}.
\)
This is equivalent to
\[
\phi(X_{n+1},X_{n+1})
\geq
\frac{\alpha}{1-\alpha}
\sum_{i=1}^{n}\phi(X_i,X_{n+1}).
\]
By the expression of $\phi(X_{i},X_{n+1})$ in \eqref{equ:likelihood}, that is when
\begin{equation}\label{equ:equiv}
\hspace{-5pt}
H := -\sum_{k=1}^{K} L_k^\star \log \pi_{k}(X_{n+1})
\leq 
-\log\left(
\frac{\alpha}{1-\alpha}
\sum_{i=1}^{n}
\prod_{k=1}^{K}[\pi_k(X_i)]^{L_k^\star}
\right)
:= \epsilon(\alpha,X_{[n]}, L^\star).
\end{equation}
When $ L^\star$ is fixed, the random variable $H$ does not have a known distribution due to the mutual dependence between $\pi_j(X_{n+1})$ and $\pi_k(X_{n+1})$ for $j\neq k$. Nevertheless, we can compute the moment-generating function (MGF) of 
$H$ as follows:
\begin{align*}
  \mathbb E[e^{tH}\mid  L^\star ]   = &\  \mathbb E\left[\prod_{k=1}^{K} [\pi_{k}(X_{n+1})]^{-t L_k^\star} \ \Bigg| \  L^\star \right]  \\
  = &\    \int       \prod_{k=1}^{K} \pi_k^{-t L_k^\star}            \textup{Dir}( \pi ;\Lambda +  L^\star ) d \pi   \\
= &\ \frac{1}{ \textup{B}(\Lambda + L^\star ) } \int       \prod_{k=1}^{K} \pi_{k}^{-t L_k^\star}           \prod_{k=1}^{K} \pi_{k}^{ \Lambda_k +  L_k^\star-1}   d \pi          \\
= &\  	\textup{B}(\Lambda +[1-t] L^\star )	/\textup{B}(\Lambda + L^\star )		\\
  =&\  \frac{\Gamma( \bar \Lambda + m )}{\Gamma( \bar \Lambda + [1-t]m )}\prod_{k=1}^{K}\frac{\Gamma( \Lambda_k + [1-t] L_k^\star )}{\Gamma( \Lambda_k +  L_k^\star )},
\end{align*}
where $\bar \Lambda =  \sum_{k=1}^{K}\Lambda_k.$ By minimizing  $H$ in terms of $\pi(X_{n+1})$ subject to the constraints of the probability simplex, we can show that
$H$ is lower bounded by 
\begin{equation}\label{equ:h_min}
H_{\min}  := -\sum_{k=1}^{K} L_k^\star \log \pi_{k}^\star(X_{n+1}) =  - m \sum_{k=1}^{K} \pi_{k}^\star(X_{n+1}) \log  \pi_{k}^\star(X_{n+1}),
\end{equation}
which is non-random given $ L^\star$. The MGF of $H - H_{\min}$ is given by
\begin{equation}\label{equ:mgf_h}
\begin{split}
M_{H - H_{\min}}(t) & = \mathbb E\big[e^{t[H-H_{\min}]}\mid  L^\star \big]  \\
& = 
\prod_{k=1}^{K}[ \pi_k^\star(X_{n+1})]^{t L_k^\star}
 \frac{\Gamma( \bar \Lambda + m )}{\Gamma( \bar \Lambda + [1-t]m )}\prod_{k=1}^{K}\frac{\Gamma( \Lambda_k + [1-t] L_k^\star )}{\Gamma( \Lambda_k +  L_k^\star )}.
 \end{split}
\end{equation}
We let $M^{(j)}$ denote the $j$-th order derivative of  $M_{H - H_{\min}}(t) $ as follows:
\[
M^{(0)} \equiv M_{H-H_{\min}}(t) \quad  \text{and} \quad M^{(j)} := d^j M_{H-H_{\min}}(t)/dt^j \  \text{ for } j\in \mathbb N^+. 
\]
The following is the key lemma used to prove Theorem \ref{thm:length}. We defer its proof to \Cref{sect:lemma1}.
Ignoring the error term, the lemma shows that the moments of  $H-H_{\min}$ match the moments of the exponential distribution $\text{Exp}(1+\bar \Lambda/m)$. 
\begin{lemma}\label{lemma:thm_2}
	For $b\in \mathbb N^+$, the $b$-th moment of $H-H_{\min}$ can be written as 
\[
\mathbb E\left[(H-H_{\min})^b\mid  \pi^\star (X_{n+1}) = \Lambda/\bar \Lambda  \right] = \frac{b!}{\left(1+\bar \Lambda/m \right)^{b} } + O(1/m).
\]
\end{lemma}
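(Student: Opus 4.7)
The plan is to compute the moments by repeatedly differentiating the moment generating function $M_{H-H_{\min}}(t)$ in \eqref{equ:mgf_h} at $t=0$, and to extract the leading-order behavior via Stirling's asymptotic expansion of the log-Gamma function. Because the conditioning $\tilde\pi(X_{n+1})=\Lambda/\bar\Lambda$ deterministically fixes $\tilde L_k = m\Lambda_k/\bar\Lambda$, introducing $\pi_k^0 := \Lambda_k/\bar\Lambda$ and $N := \bar\Lambda + m$ and using the identities $\Lambda_k + \tilde L_k = \pi_k^0 N$ and $\Lambda_k + (1-t)\tilde L_k = \pi_k^0(N-tm)$ collapses the MGF to
\[
M_{H-H_{\min}}(t) = \prod_{k=1}^{K^*}(\pi_k^0)^{tm\pi_k^0} \cdot \frac{\Gamma(N)}{\Gamma(N-tm)} \cdot \prod_{k=1}^{K^*}\frac{\Gamma(\pi_k^0(N-tm))}{\Gamma(\pi_k^0 N)},
\]
in which $t$ enters only through the single shift $tm$. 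This reduction is what makes the subsequent asymptotic analysis tractable.

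Next, I would take logarithms and apply the Stirling-type expansion
\[
\log\Gamma(x+a) - \log\Gamma(x+b) = (a-b)\log x + \frac{(a-b)(a+b-1)}{2x} + O(1/x^2)
\]
to each Gamma ratio, using $x = N-tm$ for the first quotient and $x = \pi_k^0(N-tm)$ for the $k$-th factor of the product. The key simplification is that the leading logarithmic contributions cancel exactly: the $tm\log(N-tm)$ coming from the first ratio is offset by $-tm\log(N-tm)$ arising from $\sum_k \pi_k^0 \cdot \pi_k^0 tm \log(\pi_k^0(N-tm))$ after invoking $\sum_k\pi_k^0 = 1$, while the residual $-tm\sum_k\pi_k^0 \log\pi_k^0$ is absorbed by the explicit prefactor $\log\prod_k(\pi_k^0)^{tm\pi_k^0}$. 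What remains after these cancellations is a rational function of $tm/(N-tm)$, which I would expand as a geometric series in $tm/N$ and compare against the MGF $N/(N-tm)$ of $\text{Exp}(1+\bar\Lambda/m)$ to read off the $b$-th moment.

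The final step is to extract the moment via $\mathbb{E}[(H-H_{\min})^b \mid \tilde\pi(X_{n+1})=\Lambda/\bar\Lambda] = M^{(b)}_{H-H_{\min}}(0)$, and to propagate the truncation error in the Stirling series into a single additive $O(1/m)$ correction in the final identity. The main obstacle will be controlling the Stirling remainders uniformly: the expansion above requires $\pi_k^0 N$ to be large for \emph{every} $k$, which degrades when some $\pi_k^0$ is close to zero, so obtaining a clean $O(1/m)$ error needs either an implicit lower bound on $\min_k \pi_k^0$ or a component-wise refinement that dominates the small-$\pi_k^0$ coordinates by the remaining large ones. A secondary technical nuisance is the combinatorial bookkeeping when differentiating $M_{H-H_{\min}}(t)$ up to $b$ times via Fa\`a di Bruno's formula: one must show that the cross terms between the polygamma values $\psi^{(j)}(N)$ and $\psi^{(j)}(\pi_k^0 N)$ telescope and leave the compact closed-form $b!/(1+\bar\Lambda/m)^b$, rather than a more complicated polynomial depending on $K^*$ and on the $\pi_k^0$.
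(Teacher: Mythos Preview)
Your overall plan---collapse the Gamma arguments to $\pi_k^0(N-tm)$ under the conditioning, expand asymptotically, and differentiate at $t=0$---is the paper's strategy. The concrete gap is in the Stirling-ratio formula you invoke: the expansion $\log\Gamma(x+a)-\log\Gamma(x+b)=(a-b)\log x+\tfrac{(a-b)(a+b-1)}{2x}+O(1/x^2)$ is valid for bounded $a,b$ with $x\to\infty$, but you apply it with $a=tm$ and $x=N-tm$, both of order $m$. The nominal $O(1/x^2)$ remainder then hides terms of size $a^3/x^2=O(m)$, and these feed an $O(1)$ (not $O(1/m)$) contribution into $(\log M)^{(b)}(0)$ for every $b\geq 2$. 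Concretely, your two-term expansion gives $\log M(t)\approx \tfrac{K^*-1}{2}\cdot\tfrac{tm}{N-tm}$, whereas applying Stirling to each $\log\Gamma$ separately gives $\log M(t)\approx \tfrac{K^*-1}{2}\log\tfrac{N}{N-tm}$; these agree to first order in $t$ but already differ at second order, so the variance and all higher moments come out wrong by an $O(1)$ amount, and you would not recover the Exp MGF $N/(N-tm)$.

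The paper sidesteps this by differentiating \emph{before} approximating: it writes $M'=G^{(0)}M$ with $G^{(0)}(t)=m\psi(\bar\Lambda+(1-t)m)-\sum_k\tilde L_k\psi(\Lambda_k+(1-t)\tilde L_k)+\text{const}$ exactly, and only then expands $\psi(v)=\log v-\tfrac{1}{2v}+O(1/v^2)$ termwise at arguments $v\sim m$, which is legitimate. Under the conditioning this collapses to $G^{(0)}(t)=\tfrac{(K^*-1)m}{2(N-tm)}+O(1/m)$, and (taking $K^*=3$) the paper differentiates to get $G^{(b)}(t)=\tfrac{b!\,m^{b+1}}{(N-tm)^{b+1}}+O(1/m)$ and runs the Leibniz recursion $M^{(b)}=\sum_{c=0}^{b-1}\binom{b-1}{c}G^{(c)}M^{(b-1-c)}$ by induction to reach $M^{(b)}(0)=b!\,(m/N)^b+O(1/m)$. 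This is exactly the telescoping you anticipated under Fa\`a di Bruno, but organized so that each approximation step carries a genuine $O(1/m)$ error.
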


Using \Cref{lemma:thm_2}, we can write the characteristic function of $H-H_{\min}$ as 
\begin{align*}
\varphi_{H-H_{\min}}(t) = 1 + \sum_{b=1}^{\infty} \frac{i^b \mathbb E\left[(H-H_{\min})^b\mid \pi^\star (X_{n+1}) = \Lambda/\bar \Lambda \right]}{b!}t^b  = \frac{m+\bar \Lambda }{m+\bar \Lambda - m i t} + O (1/m).
\end{align*}
The CDF of $H-H_{\min}$ is given by 
\[
F_{H-H_{\min}}(h)  =\int_{0}^{h} \frac{1}{2\pi }\int_{-\infty}^{\infty}\exp(-ith')	\varphi_{H-H_{\min}}(t)	dtdh' = 1-\exp \{-(1+\bar \Lambda/m)h\} + O(1/m).
\]
Then, by the equivalence and the definition of \(\epsilon(\alpha,X_{[n]}, L^\star)\) in \eqref{equ:equiv}, we have
\begin{equation}\label{equ:exp_approx}
\begin{split}
&\ \mathbb P \big\{  |\hat C_n (X_{n+1})|  < \infty  	\ \big|	\	X_{[n]}, \pi^\star(X_{n+1}) = \Lambda/\bar \Lambda    \big\} \\
= &\  \mathbb P \left\{  H - H_{\min} \geq  \epsilon(\alpha,X_{[n]}, L^\star )	- H_{\min} \mid 	X_{[n]}, \pi^\star(X_{n+1}) = \Lambda/\bar \Lambda   \right\}			\\
= &\  \exp\left\{ (1+\bar \Lambda/m) \log\left(\frac{\alpha}{1-\alpha}\sum_{i=1}^{n}\prod_{k=1}^{K}[\pi_k(X_{i})]^{ L_k^\star} \right) + (1+\bar \Lambda/m)H_{\min}\right\}\wedge 1 + O(1/m)	\\
= &\ \left\{\frac{\alpha}{1-\alpha}\sum_{i=1}^{n}\prod_{k=1}^{K}[\pi_k(X_{i})]^{ L_k^\star} /\prod_{k=1}^{K}[ \pi_k^\star(X_{n+1})]^{ L_k^\star }\right\}^{1+\bar \Lambda/m }\wedge 1+ O(1/m)	\\
= &  \left\{\frac{\alpha}{1-\alpha}\sum_{i=1}^{n}\exp \left(\sum_{k=1}^{K} L_k^\star \log \frac{\pi_k(X_{i})}{ \pi_{k}^\star(X_{n+1})}\right)\right\}^{1+\bar \Lambda/m }\wedge 1+ O(1/m)\\
= &  \min \left\{\frac{\alpha}{1-\alpha}\sum_{i=1}^{n}\exp \left(-m D_{\textup{KL}}\big( \pi^\star(X_{n+1})\| \pi(X_{i})  \big)  \right),1\right\}^{(m+\bar \Lambda)/m }+ O(1/m),
\end{split}
\end{equation}
where the third equality is obtained by the definition of $H_{\min}$ in \eqref{equ:h_min}.

\subsubsection{Numerical experiments for Theorem \ref{thm:length}}\label{sect:sim4}

The proof of Theorem \ref{thm:length} is based on \Cref{lemma:thm_2}, which shows that $H-H_{\min}$ (defined in \eqref{equ:equiv} and \eqref{equ:h_min}) approximately follows an exponential distribution $\text{Exp}(1+\bar \Lambda/m).$
We next verify our theory on three different Dirichlet distributions. When $ \pi^\star (X_{n+1}) = \Lambda/\bar\Lambda,$
the posterior distribution of $\pi (X_{n+1})$ given $ \pi^\star (X_{n+1})$ can be expressed as $\text{Dir}(\Lambda + m\Lambda/\bar\Lambda)$. 
We draw 5000 random samples from the posterior to construct a Monte-Carlo approximation of the distribution of  $H-H_{\min}$. \Cref{fig:hist} shows that the Monte-Carlo approximation aligns well with $\text{Exp}(1+\bar \Lambda/m)$ for $m=20$ and three different values of $\Lambda.$ This result confirms that the first term in \eqref{equ:exp_approx} gives an accurate approximation of the probability that our interval has a finite length. 

\begin{figure}[t]
     \centering
     \begin{subfigure}[b]{0.4\textwidth}
         \centering         \includegraphics[width=\textwidth]{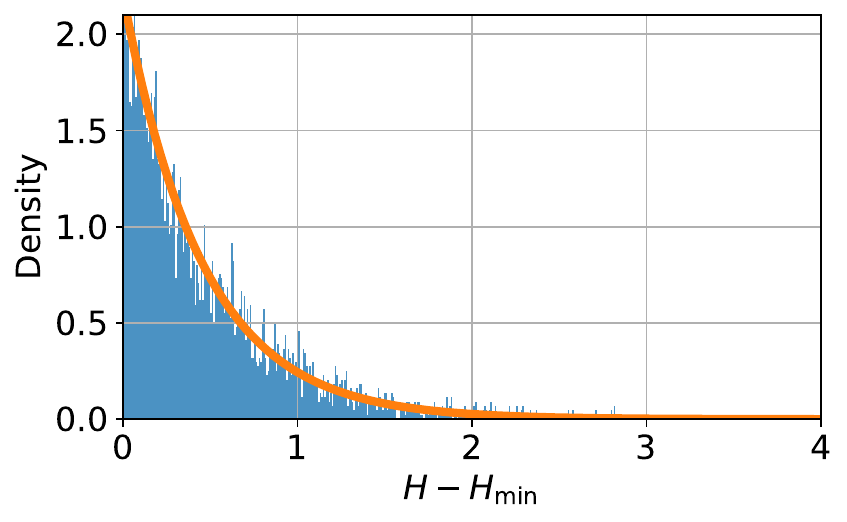}
         \caption{$\Lambda  = (2,2,20).$
         }
           \label{fig:hist0}
            %\vspace{18pt}
     \end{subfigure}
     \vspace{5pt}
     \hspace{15pt}
     \begin{subfigure}[b]{0.4\textwidth}
         \centering
         \includegraphics[width=\textwidth]{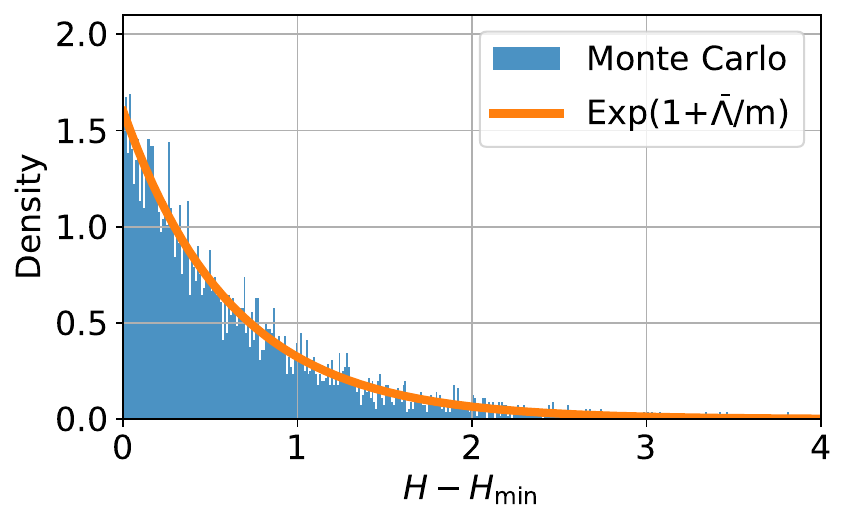}
          \caption{$ \Lambda = (2,5,5).$}
           \label{fig:hist1}
            %\vspace{18pt}
     \end{subfigure}
      \vspace{5pt}
          \begin{subfigure}[b]{0.4\textwidth}
         \centering
         \includegraphics[width=\textwidth]{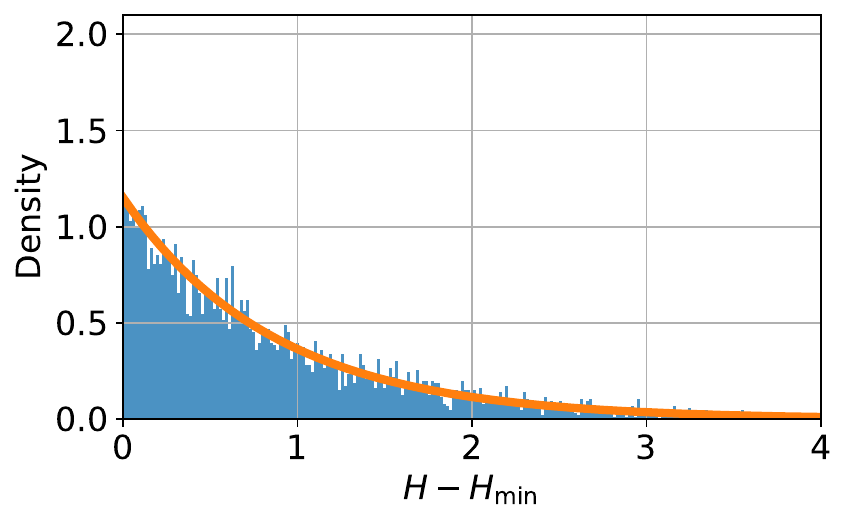}
        \caption{$\Lambda = (1,1,1).$
         }
          \label{fig:hist2}
           %\vspace{8pt}
            \end{subfigure}
        \caption{The exponential distribution $\text{Exp}(1+\bar \Lambda/m)$ and    
        the Monte-Carlo simulations of $H-H_{\min}$  under the Dirichlet posterior $\text{Dir}(\Lambda+ m\Lambda/\bar\Lambda )$ with $m=20$.
         }
        \label{fig:hist}
\end{figure}

\subsubsection{Proof of Lemma \ref{lemma:thm_2}}\label{sect:lemma1}

\begin{proof}
The derivative of each term in $M_{H - H_{\min}}(t) $  in \eqref{equ:mgf_h} involves itself. First,
\[
\frac{d}{dt}\left( [\pi_k^\star(X_{n+1})]^{t L_k^\star}\right) = 
[ \pi_k^\star(X_{n+1})] ^{t L_k^\star} \times  L_k^\star\ln \pi_k^\star(X_{n+1}). 
\]
The derivative of the gamma-ratio term involving \(\bar\Lambda\) is
\[
\frac{d}{dt}\left( \frac{\Gamma( \bar \Lambda + m )}{\Gamma( \bar \Lambda + [1-t]m )}\right) = 
\frac{\Gamma( \bar \Lambda + m )}{\Gamma( \bar \Lambda + [1-t]m )} \times m\psi(\bar \Lambda  + [1-t]m ),
\]
where $\psi(v) = d\ln \Gamma (v)/dv$ is  the digamma function. Similarly, we have
\[
\frac{d}{dt}\left( \frac{\Gamma( \Lambda_k + [1-t] L_k^\star )}{\Gamma( \Lambda_k +  L_k^\star )}\right) = 
-\frac{\Gamma( \Lambda_k + [1-t] L_k^\star )}{\Gamma( \Lambda_k +  L_k^\star )}\times  L_k^\star \psi( \Lambda_k  + [1-t] L_k^\star ).
\]
Taking the logarithmic derivative of \(M(t)=M_{H-H_{\min}}(t)\) in \eqref{equ:mgf_h}, and using the three derivative identities above, gives
\begin{equation}\label{equ:first_dev}
M^{(1)}(t)=G^{(0)}(t)M^{(0)}(t),
\end{equation}
where the function $G^{(0)}$ is defined as
\[
G^{(0)}(t) =\sum_{k=1}^{K} L_k^\star \log \pi_{k}^\star(X_{n+1}) + m \psi(\bar \Lambda + [1 - t]m)- \sum_{k=1}^{K} L_k^\star \psi(\Lambda_k +[1-t]L_k^\star).
\]
It is well-known that $\psi$ can be approximated as follows:
\[
\psi(v) = \ln (v) - \frac{1}{2v} - \frac{1}{12v^2} + \frac{1}{120v^4}-\frac{1}{252v^6}+\cdots \approx \ln (v)-\frac{1}{2v}.
\]
Then, we approximate the above expression of $G^{(0)}$ as 
\begin{equation}\label{equ:g_0}
\begin{split}
G^{(0)}  &   \approx \sum_{k=1}^{K} L_k^\star \log  \pi_{k}^\star (X_{n+1})  + \sum_{k=1}^{K} L_k^\star \ln\left(	\frac{\bar \Lambda + (1-t)m}{\Lambda_k + (1-t)L_k^\star}	\right)	\\	
& \hspace{15pt} +\sum_{k=1}^{K}\left(\frac{ L_k^\star}{2[ \Lambda_k + [1 - t] L_k^\star]} \right) -\frac{m}{2[\bar \Lambda + (1-t)m]} \\
& =  \sum_{k=1}^{K} L_k^\star \log \pi_{k}^\star (X_{n+1})  + \sum_{k=1}^{K} L_k^\star \ln 
\left(
\frac{\bar \Lambda/m + (1-t)}{\Lambda_k/L_k^\star + (1-t)}
\times \frac{1}{\pi_k^\star(X_{n+1})}
\right)
 \\
&\hspace{15pt}  + \frac{(K-1)m}{2\bar \Lambda + 2[1 - t]m}  = \frac{(K-1)m}{2\bar \Lambda + 2[1 - t]m}.
\end{split}
\end{equation}
The last equality is due to $\bar \Lambda/m=\Lambda_k/ L_k^\star$, which follows from the condition that  $\pi_k^\star(X_{n+1}) =   L_k^\star/m =  \Lambda_k/\bar\Lambda$.

The approximation error can be written as
\begin{align*}
G^{(0)} -  \frac{(K-1)m}{2\bar \Lambda + 2[1 - t]m} =& \sum_{j=1}^\infty C_j	 \left\{\sum_{k=1}^{K}\left(\frac{ L_k^\star}{[ \Lambda_k + [1 - t] L_k^\star]^{2j}} \right) -\frac{m}{[\bar \Lambda + (1-t)m]^{2j}} \right\}\\
= & \sum_{j=1}^\infty C_j	 \left\{ \sum_{k=1}^{K}\frac{1}{ (L_{k}^\star)^{2j-1}} - \frac{1}{m^{2j-1}}  \right\}\frac{m^{2j}}{[\bar \Lambda + (1-t)m]^{2j}}\\
= & \sum_{j=1}^\infty C_j	 \left\{ \sum_{k=1}^{K}(\bar \Lambda/\Lambda_k)^{2j-1}- 1  \right\}\frac{m }{[\bar \Lambda + (1-t)m]^{2j}}		=  O(1/m),
\end{align*}
where \(C_1=-1/12, C_2=1/120,\dots\) are the coefficients of the \(v^{-2j}\) terms in the expansion of \(\psi\). The  second equality is obtained by $\bar \Lambda/ m= \Lambda_k/ L_k^\star$:
\begin{align*}
 \frac{ L_k^\star}{[ \Lambda_k + [1 - t] L_k^\star]^{2j}} & =  \frac{1}{ (L_k^\star)^{2j-1}}\times \left[\frac{m / \bar \Lambda}{ 1 + [1 - t]m/ \bar \Lambda}\right]^{2j}  = \frac{1}{ (L_k^\star)^{2j-1}}\times\frac{m^{2j}}{[\bar \Lambda + (1-t)m]^{2j}}.
\end{align*}
When $K=3$, using \eqref{equ:g_0}, the $b$-th order derivative of $G^{(0)} $ is
\begin{equation}\label{equ:G_b}
\begin{split}
G^{(b)} & =    \sum_{j=1}^\infty C_j	 \left\{ \sum_{k=1}^{K}\frac{1}{ (L_{k}^\star)^{2j-1}} - \frac{1}{m^{2j-1}}  \right\}\frac{(2j+b-1)!m^{b+1}}{(2j-1)![\bar \Lambda + (1-t)m]^{2j+b}} \\
& \hspace{15pt} + \frac{d^b}{dt^b}\left( \frac{m}{\bar \Lambda +  [1 - t]m}\right)	\\
& = \frac{b! m^{b+1}}{(\bar \Lambda + [1 - t]m)^{b+1}} + O (1/m).
\end{split}
\end{equation}
Then, $M^{(1)}(t)$ in \eqref{equ:first_dev} can be written as
\begin{equation}\label{equ:m_1}
M^{(1)}(t) = \frac{mM^{(0)} }{\bar \Lambda + [1 - t]m} + O(1/m).
\end{equation}
By $M^{(0)}(0) = M_{H-H_{\min}}(0)=1,$ we can express the first moment of $H-H_{\min}$ as
\[
\mathbb E\left[H-H_{\min}\mid X_{[n]},\pi^\star(X_{n+1}) = \Lambda/\bar \Lambda  \right]= M^{(1)}(0) =  \frac{m }{\bar \Lambda + m} + O(1/m),
\]
which proves the claim for $b=1.$ We next use induction to prove the claim for $b=2,3,\dots.$ 
More precisely, we will show that for any $b\in \mathbb N^+,$
\begin{equation}\label{equ:M_b}
M^{(b)} = b! \left(\frac{m }{\bar \Lambda + [1-t]m}\right)^{b}M^{(0)}  + O(1/m).
\end{equation}
We first compute the second, third and fourth-order derivative of $M_{H-H_{\min }}(t)$ to illustrate the idea.
As shown in  the first and second row in \Cref{fig:gradient}, we differentiate $M^{(0)}$ and $G^{(0)}$ respectively to obtain the expression of $M^{(2)}$ as follows:
\[
M^{(2)}  = G^{(1)}M^{(0)} +  G^{(0)}M^{(1)}   = 2 \left(\frac{m }{\bar \Lambda + [1-t]m}\right)^{2}M^{(0)}  + O(1/m),
\]
by the expressions of $G^{(b)}$ in \eqref{equ:G_b} and $M^{(1)}$ in \eqref{equ:m_1}. Similarly, in the third and fourth rows of the same figure, further differentiating $M^{(b)}$ gives that 
\begin{align*}
M^{(3)} & = G^{(2)}M^{(0)} +  2 G^{(1)}M^{(1)} + G^{(0)}M^{(2)}, \\
M^{(4)} &  = G^{(3)}M^{(0)} + 3G^{(2)}M^{(1)} + 3G^{(1)}M^{(2)} + G^{(0)}M^{(3)}. 
\end{align*}
\begin{figure}[t]
\centering
         \includegraphics[width=0.97\textwidth]{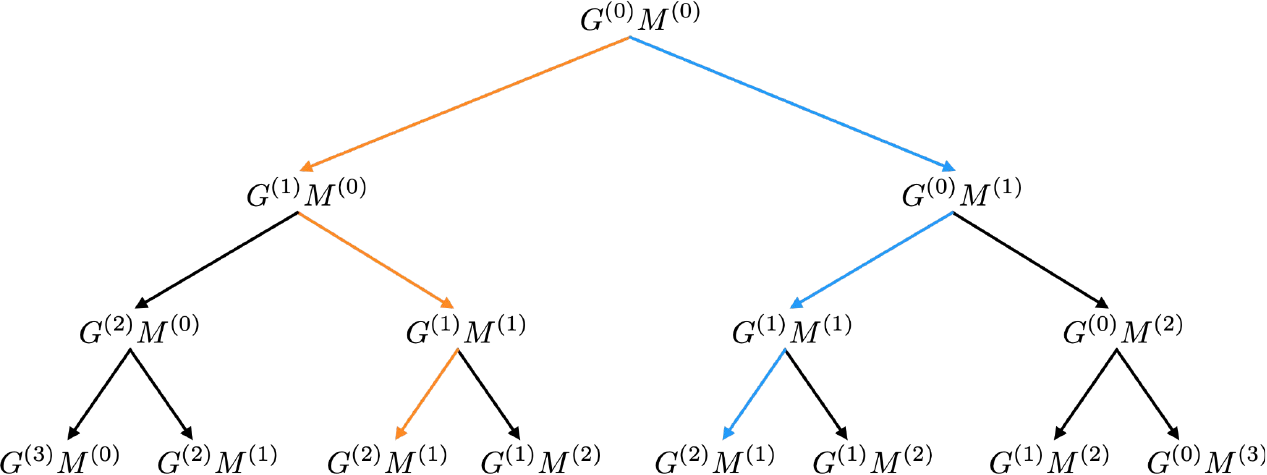}
                  \vspace{7pt}
        \caption{The paths of computing the $j$-th order derivatives of $M_{H-H_{\min}}{(t)}$ for $j\in [4].$}
            \vspace{2pt}
        \label{fig:gradient}
\end{figure}
Because of the  multiplicative structure in \eqref{equ:first_dev}, the expression of $M^{(b)} $ involves terms 
\[
G^{(c)}M^{(d)}, \forall c,d\in \{0,1,2,\dots\} \text{ such that } c+d=b-1.
\]
In other words, we can write $G^{(c)}M^{(d)} = G^{(c)}M^{(b-1-c)}$ in the expression of $M^{(b)}.$ From $G^{(0)}M^{(0)}$ in the first row to one of the $ G^{(c)}M^{(b-1-c)}$'s in the $b$-th row in \Cref{fig:gradient}, we need to
 go through the derivatives $G^{(1)},G^{(2)},\dots, G^{(c-1)}$ in some previous rows. For example, the two coloured paths in the figure highlight two of the paths for obtaining the term   $ G^{(2)}M^{(1)}$ in the expression of $M^{(4)}.$ In the yellow path, $G^{(1)}$ appears in the second row, and $G^{(2)}$ appears in the fourth row, while in the blue path, $G^{(1)}$ appears in the second row and $G^{(2)}$ appears in the fourth row. In general, there are  $\binom{b-1}{c}$ paths to obtain the term $ G^{(c)}M^{(b-1-c)}$ in the $b$-th row. Therefore, we have the expression,
\begin{equation}\label{equ:express_m_b}
 M^{(b)} = \sum_{c=0}^{b-1} \binom{b-1}{c} G^{(c)}M^{(b-1-c)}.
\end{equation}
It is straightforward to show that \eqref{equ:express_m_b} holds for the expression of $ M^{(b)} $ above for $b\in [4].$ Suppose that the expression holds for $M^{(b)}.$ 
Differentiating the expression  gives 
\begin{align*}
 M^{(b+1)}  =  & \sum_{c=0}^{b-1} \binom{b-1}{c} G^{(c+1)}M^{(b-1-c)} +  \sum_{c=0}^{b-1} \binom{b-1}{c} G^{(c)}M^{(b-c)} \\
=  & \sum_{c=1}^{b} \binom{b-1}{c-1} G^{(c)}M^{(b-c)} +  \sum_{c=0}^{b-1} \binom{b-1}{c} G^{(c)}M^{(b-c)} 	\\
=  & \sum_{c=1}^{b-1}\left[ \binom{b-1}{c-1}+  \binom{b-1}{c} \right]G^{(c)}M^{(b-c)} + G^{(b)}M^{(0)}+ G^{(0)}M^{(b)} \\
=  & \sum_{c=0}^{b}\binom{b}{c}G^{(c)}M^{(b-c)}, 
\end{align*}
which proves the expression \eqref{equ:express_m_b} holds for $b+1$. 

Similarly, substituting  $G^{(c)}$
in \eqref{equ:G_b} and $M^{(b-c)}$ in \eqref{equ:M_b}
into \eqref{equ:express_m_b}, we have
\begin{align*}
 M^{(b+1)}  & =  \sum_{c=0}^{b}\binom{b}{c}\Bigg[\frac{c! m^{c+1}}{(\bar \Lambda + [1 - t]m)^{c+1}} \Bigg] \left[(b-c)! \left(\frac{m }{\bar \Lambda + [1-t]m}\right)^{b-c} M^{(0)} \right] + O (1/m) \\
 & = \sum_{c=0}^{b}\binom{b}{c}c!(b-c)!\left(\frac{m }{\bar \Lambda + [1-t]m}\right)^{b+1} M^{(0)} + O (1/m) \\
& = (b+1)!\left(\frac{m }{\bar \Lambda + [1-t]m}\right)^{b+1} M^{(0)} + O (1/m),
\end{align*}
which proves that the expression \eqref{equ:M_b} holds for $b+1$ by induction.
Taking \(t=0\) in \eqref{equ:M_b} and using \(M^{(0)}(0)=1\), we obtain
\[
M^{(b)}(0)
=
b!\left(\frac{m}{m+\bar \Lambda}\right)^b
+O(1/m)
=
\frac{b!}{(1+\bar \Lambda/m)^b}+O(1/m),
\]
which proves the lemma.
\end{proof}

\subsection{Proof of Theorem \ref{thm:approximate_valid_2}}\label{sect:valid_2}

Below, we prove the theorem for the weighted interval
\[
\hat C_n(X_{n+1}) =
\left\{y:\sum_{i=1}^{n+1}\hat w_i \one \{ R_i\geq R_{n+1}^{y}\}>\alpha \right\},
\]
where the weights \(\hat w_i=\hat w_i(X_{[n+1]})\) are nonnegative, sum to one, and are fixed conditional on \(X_{[n+1]}\).
This covers the KL-based weights in Theorem \ref{thm:approximate_valid_2}, 
since the map \(\hat\pi\) used to construct the weights is fixed and does not depend on \(R_{[n+1]}\).
We also use that \(\hat w_{n+1}\ge \hat w_j\) for all \(j\in[n]\), which holds for the KL-based weights used in the theorem.

\begin{proof} By the definition of $\hat C_n(X_{n+1})$ above, we have
\begin{align*}
Y_{n+1}\not\in \hat C_n(X_{n+1})  ~ \Leftrightarrow~  \sum_{i=1}^{n+1} \hat w_i\one\left\{R_i\geq R_{n+1} \right\}	\leq  \alpha.
\end{align*}
By the deterministic inequality in Lemma A.1 in \citet{harrison2012conservative},
we can upper bound the coverage gap $\mathbb P \big\{Y_{n+1}\not\in \hat C_n(X_{n+1}) \ \big|	\ X_{[n+1]}	\big\}-\alpha	$ by
\begin{equation*}
\begin{split}
	&~ \mathbb{E}\Bigg[ \one\left\{\sum_{i=1}^{n+1} \hat w_i\one\left\{R_i\geq R_{n+1} \right\}	\leq  \alpha \right\}  - \sum_{j=1}^{n+1} \hat w_j \one \left\{	\sum_{i'=1}^{n+1} \hat w_{i'}\one\left\{R_{i'}\geq R_{j}\right\}	\leq \alpha 	\right\}	~\Bigg|~ X_{[n+1]}	\Bigg]		\\
= &~ \sum_{j=1}^{n+1}\mathbb{E}\Bigg[ \hat w_j \Bigg(\one\left\{\sum_{i=1}^{n+1} \hat w_i\one\left\{R_i\geq R_{n+1} \right\}	\leq  \alpha \right\}  -  \one \left\{	\sum_{i'=1}^{n+1} \hat w_{i'}\one\left\{R_{i'}\geq R_{j}\right\}	\leq \alpha 	\right\}	\Bigg)~\Bigg|~ X_{[n+1]}	\Bigg]	\\
\equiv&~
\sum_{j=1}^{n+1}
\mathbb E\Big[
\hat w_j(X_{[n+1]})
\epsilon_j(X_{[n+1]},R_{[n]\setminus\{j\}})
\mid X_{[n+1]}
\Big],
\end{split}
\end{equation*}
For \(j\in[n]\), the term $\epsilon_j(X_{[n+1]},R_{[n]\setminus\{j\}})$ can be rewritten as
\begin{equation}\label{equ:gap}
\begin{split}
\epsilon_j(X_{[n+1]},R_{[n]\setminus\{j\}})
& :=
\mathbb E\Bigg[
\one\left\{\sum_{i=1}^{n+1} \hat w_i\one\{R_i\ge R_{n+1}\}\le \alpha\right\}
-
\one\left\{\sum_{i'=1}^{n+1} \hat w_{i'}\one\{R_{i'}\ge R_j\}\le \alpha\right\}
\\
&\hspace{28pt} \,\Bigg|\,
X_{[n+1]},R_{[n]\setminus\{j\}}
\Bigg].
\end{split}
\end{equation}
For \(j=n+1\), we have \(\epsilon_{n+1}=0\).
Under the mixture model \eqref{equ:mixture}, we can rewrite each residual $R_j$ as
\begin{equation}\label{equ:r_representation}
R_j \stackrel{d}{=} \sum_{k=1}^{K}\one\{U_j=k\}\tilde R_{j,k},
\end{equation}
where \(U_j\sim \sum_{k=1}^{K}\pi_k(X_j)\delta_k\) and
\(\tilde R_{j,k}\sim f_k\) independently across \(j\) and \(k\).
Below, \(\tilde R_k\) denotes a generic independent draw from \(f_k\), independent of \(R_{[n]\setminus\{j\}}\).
Denote
\begin{align*}
&\Delta(R_{[n]\setminus\{j\}},r) := \alpha-\sum_{i'\in[n]\setminus\{j\}}\hat w_{i'}\one\{R_{i'}\ge r\}.\\
&   \mathbb{P}_{k',k}^{-j}\left\{\cdot\right\}  = \mathbb{P}\left\{\cdot \hspace{-3pt}\mid U_{n+1}=k',U_{j}=k, X_{[n+1]},R_{[n]	\setminus	\{j\}}\right\}.
\end{align*}
In the following display, we use the convention that
\(\mathbb P_{k',k}^{-j}\) conditions on \(U_{n+1}=k'\) and \(U_j=k\).
The term $\epsilon_j(X_{[n+1]},R_{[n]\setminus\{j\}})$ in \eqref{equ:gap} can be rewritten as 
\begin{align*}
 &	\sum_{k=1}^{K}\pi_k(X_{n+1}) \sum_{k'=1}^{K}\pi_{k'}(X_{j})\mathbb P_{k,k'}^{-j}\left\{ \hat w_{j}\one \big\{R_{j}\geq R_{n+1} \big\} + \hat w_{n+1}	\leq \Delta(R_{[n]\setminus\{j\}},R_{n+1})	\right\}										\\
&	- \sum_{k=1}^{K}\pi_k(X_{j}) \sum_{k'=1}^{K}\pi_{k'}(X_{n+1}) \mathbb P_{k',k}^{-j}\left\{   \hat w_{j} +  \hat w_{n+1}	\one \big\{ R_{n+1} \geq R_j \big\} \leq 		\Delta(R_{[n]\setminus\{j\}},R_j) 		\right\}		\\
\leq &\sum_{k=1}^{K}\sum_{k'=1}^{K}
\big[\pi_k(X_{n+1})\pi_{k'}(X_j)
-\pi_k(X_j)\pi_{k'}(X_{n+1})\big] \\
&\hspace{45pt}\times
\mathbb P_{k',k}^{-j}\left\{
\hat w_j+\hat w_{n+1}\one\{R_{n+1}\ge R_j\}
\leq \Delta(R_{[n]\setminus\{j\}},R_j)
\right\}
\\
\leq & \sum_{k'=1}^{K} | \pi_{k'}(X_{j})-\pi_{k'}(X_{n+1}) |
\mathbb P \left\{
\hat w_j + \hat w_{n+1}\one\{\tilde R_{k'} \ge R_{n+1}\}
\leq \Delta(R_{[n]\setminus\{j\}},R_{n+1})
\ \Big|\ X_{[n+1]},R_{[n]\setminus\{j\}}
\right\} \\
& + \sum_{k=1}^{K} |\pi_k(X_{n+1})-\pi_k(X_j)|
\mathbb P \left\{
\hat w_j + \hat w_{n+1}\one\{R_{n+1}\ge \tilde R_k\}
\leq \Delta(R_{[n]\setminus\{j\}},\tilde R_k)
\ \Big|\ X_{[n+1]},R_{[n]\setminus\{j\}}
\right\}\\
\leq &\ 2\sum_{k=1}^{K} |\pi_k(X_{n+1})	 -\pi_k(X_{j})|  \mathbb P\left\{ 	\hat w_{j}  \leq 		\Delta(R_{[n]\setminus\{j\}},\max\{R_{n+1},\tilde R_k\}) 	\ \Big| \  X_{[n+1]},R_{[n] \setminus \{j\}} \right\} \\
 \leq  &\ 2  \sum_{k=1}^{K} |\pi_k(X_{n+1})	 -\pi_k(X_{j})|.
\end{align*}
The first inequality uses \(\hat w_{n+1}\ge \hat w_j\) for all \(j\), after relabeling the two residual draws according to the convention for \(\mathbb P_{k',k}^{-j}\):
\begin{align*}
&\ \big( \hat w_{n+1}- \hat w_j\big)\big(	1-   \one\big\{ R_{n+1} \geq R_j \big\}    \big) \geq 0  \\
  \Leftrightarrow &\   \hat w_j \one\big\{ R_{n+1} \geq R_j \big\}+  \hat w_{n+1} \geq   \hat w_j+ \hat w_{n+1} \one\big\{ R_{n+1} \geq R_j \big\}.
\end{align*}
The second inequality is achieved by applying
\begin{align*}
&\ \pi_k(X_{n+1}) \pi_{k'}(X_{j})	 -\pi_k(X_{j}) \pi_{k'}(X_{n+1})	\\
\leq &\  |\pi_{k'}(X_{j}) - \pi_{k'}(X_{n+1})|  \pi_k(X_{n+1})  + |\pi_k(X_{n+1})	 -\pi_k(X_{j})|  \pi_{k'}(X_{n+1}), 
\end{align*}
along with \eqref{equ:r_representation}  and the definition of $ \mathbb P_{k',k}^{-j}.$
Finally, substituting the upper bound on $\epsilon_j(X_{[n+1]},R_{[n]\setminus\{j\}})$ into the decomposition above proves the claim.
\end{proof}

\subsection{Proof of Theorem \ref{thm:approximate_valid_3}}

\begin{proof}
The proof below follows the proof of Theorem 6 in \cite{barber2022conformal}.
Under the mixture model \eqref{equ:mixture}, we have
\[
R_j\stackrel{\textup{d}}{=} \sum_{k=1}^{K}\one\{U_j=k\}\tilde R_{j,k},
\]
where \(U_j\sim \sum_{k=1}^{K}\pi_k(X_j)\delta_k\), and
\(\tilde R_{j,k}\sim f_k\) independently across \(j\) and \(k\), independently of the \(U_j\)'s.

Let $B_{i} = \one\{U_i=U_{n+1}\}$.  We have
\begin{equation}\label{equ:def_phi}
\begin{split}
 Y_{n+1}\not\in \hat C_n(X_{n+1})  								
~ \Rightarrow~  & \sum_{i=1}^{n+1} \hat w_{\textup{KL},i}B_i\one\left\{R_i\geq R_{n+1} \right\}	\leq  \alpha	\\							
 ~ \Leftrightarrow~& \sum_{i=1}^{n+1}  w_i B_i\one\left\{R_i\geq R_{n+1} \right\}   \\
 &	\leq  \alpha  	\frac{\sum_{i=1}^{n}\phi_i + 1}{\sum_{j=1}^{n}\phi_j B_j + 1}
 \equiv  \alpha \delta(X_{[n+1]},U_{[n+1]}),
\end{split}
\end{equation}
where
$\phi_i=\phi(X_i,X_{n+1})
=
\exp\{-mD_{\textup{KL}}(\hat \pi(X_{n+1})\|\hat \pi(X_i))\},i\in[n],
$
and the renormalized weights over the points with \(U_i=U_{n+1}\) are
\[
w_i=\frac{\phi_i}{\sum_{j=1}^{n}\phi_jB_j+1},\quad i\in[n],
\qquad
w_{n+1}=\frac{1}{\sum_{j=1}^{n}\phi_jB_j+1}.
\]
Let $\mathcal{U}_k = \{i\in [n+1]:U_i=k\}.$ All $R_i$'s for $i\in \mathcal U_k$ are i.i.d. drawn from $f_k$. Then,
\begin{align*}
 &\ \mathbb P\{		Y_{n+1}\not\in \hat C_n(X_{n+1}) \mid X_{[n+1]},U_{[n]},U_{n+1}=k\}\\
\leq  &\ \mathbb{P}\left\{\sum_{i\in \mathcal{U}_k} w_{i}\one\left\{R_i\geq R_{n+1} \right\}\leq  \alpha \delta(X_{[n+1]},U_{[n+1]}) \ \bigg| \	X_{[n+1]},U_{[n]},U_{n+1}=k \right\} \\
=  &\ \sum_{j\in \mathcal U_k }w_j\mathbb{P}\left\{\sum_{i\in \mathcal{U}_k} w_{i}\one\left\{R_i\geq R_{j} \right\}\leq  \alpha \delta(X_{[n+1]},U_{[n+1]}) \ \bigg| \	X_{[n+1]},U_{[n]},U_{n+1}=k \right\} \\
=  &\ \mathbb{E}\Bigg[ \sum_{j\in \mathcal U_k }w_j \one\left\{\sum_{i\in \mathcal{U}_k} w_{i}\one\left\{R_i\geq R_{j} \right\}\leq  \alpha \delta(X_{[n+1]},U_{[n+1]}) \right\} \ \Bigg| \	X_{[n+1]},U_{[n]},U_{n+1}=k \Bigg] \\
\leq &\ \alpha \delta(X_{[n+1]},U_{[n]},k),
\end{align*}
by the deterministic inequality in Lemma A.1 in \citet{harrison2012conservative}. Then,
\begin{equation}\label{equ:exp_bound}
\mathbb P\{Y_{n+1}\not\in \hat C_n(X_{n+1}) \mid X_{[n+1]},U_{n+1}\}
\leq
\alpha \mathbb{E}\left\{ \delta(X_{[n+1]},U_{[n+1]}) \mid X_{[n+1]},U_{n+1} \right\}.
\end{equation}
Next, we use induction to prove an upper bound for the expectation on the RHS of \eqref{equ:exp_bound}.
Without loss of generality, we assume that the probabilities $\pi_k(X_{i})$ are ordered such that 
$\pi_k(X_{1})\leq \dots \leq \pi_k(X_{n}).$
When $n=1$, for any $b\geq 0,$ 
\begin{equation}\label{equ:case_n_1}
\begin{split}
 \mathbb E\left\{\frac{\phi_1 + b+1 }{\phi_1B_1+  b+1} \ \Bigg| \  X_{[n+1]},U_{n+1} = k \right\}
= &\    \pi_{k}(X_1)+ [1- \pi_{k}(X_1)]\frac{\phi_1+b+1}{b+1} \\
= &\  \frac{[1-\pi_{k}(X_1)]\phi_1+b+1}{b+1}  \\
 \leq &\ \frac{[1-\pi_{k}(X_1)]\phi_1+b+\pi_{k}(X_1)\phi_1}{b+\pi_{k}(X_1)\phi_1} \\
= &\  \frac{\phi_1+b}{\phi_1\pi_{k}(X_1)+b}.
\end{split}
\end{equation}
The inequality is obtained by $\pi_{k}(X_1)\phi_1\leq 1$ and that $f(x) = [x+a]/[x+b]$ is a decreasing function of $x$, with a negative derivative $(b-a)/(b+x)^2$ for any $a>b\geq 0.$

When \(n\geq 2\) and \(U_{n+1}=k\), by the strict positivity assumption and the ordering above,
$0<\pi_k(X_i)/\pi_k(X_n)\le 1, i\in[n-1]$.
Thus the Bernoulli variables \(A_i\) below are well-defined. We have
\(B_i \stackrel{\textup{d}}{=} A_iB_n^{(i)}\) for all \(i\in[n-1]\), where
\[
A_i\sim \textup{Bern}\!\left(\frac{\pi_k(X_i)}{\pi_k(X_n)}\right),
\qquad
B_n^{(1)},\dots,B_n^{(n-1)},B_n
\stackrel{\textup{i.i.d.}}{\sim}
\textup{Bern}(\pi_k(X_n)),
\]
with all these variables independent.
For any $b\geq0 $, 
\begin{align}
&\ \mathbb E\left\{\frac{ \sum_{j=1}^{n}\phi_j  +b+1}{\sum_{i=1}^{n}\phi_i B_i +b+1} \ \Bigg| \  X_{[n+1]},U_{n+1} = k \right\}  \notag \notag \\
= &\  \mathbb E\left\{\frac{ \sum_{j=1}^{n}\phi_j  +b+1}{\sum_{i=1}^{n-1}\phi_i A_i B_n^{(i)} + \phi_n B_n +b+1} \ \Bigg| \  X_{[n+1]},U_{n+1} = k \right\} \notag\\
= &\ \mathbb E\Bigg\{   
\mathbb E\left[ \frac{ \sum_{j=1}^{n-1}\phi_j A_j + \phi_n +b+1}{\sum_{i=1}^{n-1}\phi_i A_i B_n^{(i)} + \phi_n B_n +b+1}  \ \Bigg|\ X_{[n+1]},A_{[n-1]}\right] \notag  \\
&\hspace{135pt} \times \frac{ \sum_{j=1}^{n}\phi_j +b+1}{\sum_{j=1}^{n-1}\phi_j A_j + \phi_n +b+1} \ \Bigg|\ X_{[n+1]},U_{n+1}=k\Bigg\} \notag \\
\leq  &\ \mathbb E\Bigg\{   \frac{ \sum_{j=1}^{n-1}\phi_j A_j + \phi_n +b}{ (\sum_{i=1}^{n-1}\phi_i A_i + \phi_n )\pi_k(X_{n})+b} \times \frac{ \sum_{j=1}^{n}\phi_j +b+1}{\sum_{j=1}^{n-1}\phi_j A_j + \phi_n +b+1} \ \Bigg|\ X_{[n+1]},U_{n+1}=k\Bigg\} \notag \\
\leq  &\ \mathbb E\Bigg\{   \frac{ \sum_{j=1}^{n-1}\phi_j A_j + \phi_n +b}{ (\sum_{i=1}^{n-1}\phi_i A_i + \phi_n )\pi_k(X_{n})+b} \ \Bigg|\ X_{[n+1]},U_{n+1}=k\Bigg\}		\notag \\
&\hspace{135pt} \times \mathbb E\Bigg\{  \frac{ \sum_{j=1}^{n}\phi_j +b+1}{\sum_{j=1}^{n-1}\phi_j A_j + \phi_n +b+1} \ \Bigg|\ X_{[n+1]},U_{n+1}=k\Bigg\}									\notag		\\
\leq  &    \frac{ \sum_{j=1}^{n-1}\phi_j \pi_k(X_j)/\pi_k(X_{n}) + \phi_n +b}{ (\sum_{i=1}^{n-1}\phi_i \pi_k(X_i)/\pi_k(X_{n}) + \phi_n )\pi_k(X_{n})+b}  								 \times \frac{ \sum_{j=1}^{n}\phi_j+b}{\sum_{j=1}^{n-1}\phi_j \pi_{k}(X_j)/\pi_{k}(X_n) + \phi_n+b}			\notag			\\
= &\ \frac{ \sum_{j=1}^{n}\phi_j +b}{\sum_{i=1}^{n}\phi_i \pi_{k}(X_i) +b}.	\label{equ:63}
	\end{align}
In the second equality, only the i.i.d. Bernoulli random variables
\(B_n^{(1)},\dots,B_n^{(n-1)},B_n\) are random given
\(X_{[n+1]}\), \(U_{n+1}=k\), and \(A_{[n-1]}\).
We then apply case 2 of Lemma 3 in \cite{barber2022conformal} (i.e. the inverse binomial lemma in \citet{ramdas2019unified}) to obtain the first inequality. 
In the fourth line, the first term is an increasing function of $\sum_{j=1}^{n-1}\phi_j A_j $ if $\pi_k(X_n) <1 $, otherwise a constant function $1$ if $\pi_k(X_n) =1.$ The second term is a decreasing function of $\sum_{j=1}^{n-1}\phi_j A_j $. The covariance between the first and second terms is zero or negative, which gives the second inequality above. In the last inequality, the first term is obtained by Jensen's inequality. 
When $n=2,$ the second term is obtained by \eqref{equ:case_n_1} with $B_1$ changed to $A_1$ and 
 $b$ changed to $b+\phi_2$. 
 
Once eq. \eqref{equ:63} holds for $n=2$, we can apply it to bound the second term in the last inequality for $n=3$, with $(B_1,B_2)$ changed to $(A_1,A_2)$ and  $b$ changed to $b+\phi_3.$ 
By induction, we can prove that eq. \eqref{equ:63} holds for any positive integer $n$. 
Then using the definition of $ \delta(X_{[n+1]},U_{[n+1]})$ in \eqref{equ:def_phi},
\eqref{equ:exp_bound} and \eqref{equ:63} for $b=0$, we obtain
\begin{align*}
\mathbb{E}\left\{ \delta(X_{[n+1]},U_{[n+1]}) \mid X_{[n+1]},U_{n+1} \right\}  \leq \frac{ \sum_{j=1}^{n}\phi_j}{\sum_{i=1}^{n}\phi_i \pi_{U_{n+1}}(X_i)} \leq  \frac{1}{\sum_{i=1}^n \hat w_{\textup{KL},i} \pi_{U_{n+1}}(X_i)}.
\end{align*}
Marginalizing out $U_{n+1}$ given $X_{n+1}$ gives the bound in the theorem. 
\end{proof}

\section{PAC guarantees}\label{sect:pac_appendix}

In this section, we first review the PAC guarantee for the unweighted SCP interval, and extend the proof to weighted prediction intervals. 
In general, the proofs of PAC guarantees rely on concentration inequalities without requiring any algorithm to be a symmetric function of the data $Z_{[n+1]}$. In the following, we mainly use the notation $Z_{[n+1]}$ instead of  $\mathcal D_{[n+1]}$.  
For the intervals \(\hat C_n(X_{n+1})\) considered in this paper, the miscoverage event 
\(\{Y_{n+1}\not\in \hat C_n(X_{n+1})\}\) is equivalent to \(p(Z_{[n+1]})\leq \alpha\), where
\begin{equation}\label{equ:p_n}
p(Z_{[n+1]}) := \sum_{i=1}^{n+1}w_i(Z_{[n+1]}) \one \{ R_i\geq R_{n+1}\}.   
\end{equation}
We define a slightly different p-value which assigns zero weight to $R_{n+1},$ 
\begin{equation}\label{equ:p_n_2}
p'(Z_{[n+1]}) := \sum_{i=1}^{n}w_i'(Z_{[n+1]})\one \{ R_i\geq R_{n+1}\},
\end{equation}
where  $w_i'$ are renormalized over the first $n$ residuals. 

For the SCP interval  in \eqref{equ:interval_scp}, $w_i(Z_{[n+1]})=1/(n+1)$ and $w_i'(Z_{[n+1]})=1/n$.
We define
\begin{equation}\label{equ:p_n_3}
 \alpha (Z_{[n]}) = \mathbb P\big\{p(Z_{[n+1]})\leq \alpha  \mid Z_{[n]}\big\} 
 \ \text{ and }  \   \alpha' (Z_{[n]}) = \mathbb P\big\{p'(Z_{[n+1]})\leq \alpha  \mid Z_{[n]}\big\}.
\end{equation}
We begin with a simple lemma that will be used in the results below.
\begin{lemma}\label{lemma:zero_gap}
	With probability 1, $p(Z_{[n+1]})\geq p'(Z_{[n+1]}).$ 
\end{lemma}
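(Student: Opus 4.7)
The plan is to separate the contribution of the test index from each p-value and then verify the inequality by a direct algebraic manipulation. First, I would observe that $\one\{R_{n+1}\geq R_{n+1}\}=1$ deterministically, so if we write $S := \sum_{i=1}^{n} w_i(Z_{[n+1]}) \one\{R_i\geq R_{n+1}\}$, then
\[
p(Z_{[n+1]}) = S + w_{n+1}(Z_{[n+1]}).
\]
Next, since the renormalized weights satisfy $w_i'(Z_{[n+1]}) = w_i(Z_{[n+1]})/(1-w_{n+1}(Z_{[n+1]}))$ for $i\in [n]$, we have
\[
p'(Z_{[n+1]}) = \frac{S}{1-w_{n+1}(Z_{[n+1]})}.
\]

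The claim $p\geq p'$ then reduces to checking that $(S+w_{n+1})(1-w_{n+1}) \geq S$, i.e.\ $w_{n+1}(1-S-w_{n+1})\geq 0$, i.e.\ $w_{n+1}(1-p)\geq 0$. Since the weights are nonnegative and $p\leq 1$ (as it is a convex combination of indicators), this is immediate. The only edge case to address is $w_{n+1}=1$, where $p'$ is a $0/0$ expression; I would either assume the weighting scheme satisfies $w_{n+1}<1$ almost surely (which holds for the KL-based PCP weights used in the paper, since $w_i\propto\phi_i>0$ for all $i$), or simply define $p'=0$ on this null set, in which case the inequality is trivially preserved.

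I don't expect a real obstacle here — the lemma is essentially a bookkeeping statement capturing the intuition that excluding the always-true term $\one\{R_{n+1}\geq R_{n+1}\}$ and renormalizing can only shrink the p-value. The only care needed is to handle the degenerate case $w_{n+1}=1$ consistently, which is why the statement is qualified with \emph{with probability 1}.
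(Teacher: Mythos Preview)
Your proof is correct and follows essentially the same idea as the paper's: both exploit the normalization of the weights together with the bound $\one\{R_i\geq R_{n+1}\}\leq 1$. The paper writes $p-p'=w_{n+1}+\sum_{i=1}^n(w_i-w_i')\one\{R_i\geq R_{n+1}\}$ and lower-bounds the sum by replacing each indicator with $1$ (the coefficients $w_i-w_i'$ being nonpositive), obtaining $p-p'\geq w_{n+1}+\sum_{i=1}^n(w_i-w_i')=0$; your reduction to $w_{n+1}(1-p)\geq 0$ is just an algebraic repackaging of the same computation.
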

\begin{proof}
Since $w_i(Z_{[n+1]})$ is normalized over all the data points while $w_i'(Z_{[n+1]})$ is only normalized  over first $n$ points, we have 
$w_i(Z_{[n+1]})\leq w_i'(Z_{[n+1]}), \forall i\in [n]. $ Then, 
\begin{align*}
 p(Z_{[n+1]}) - p'(Z_{[n+1]})  &   =  w_{n+1}(Z_{[n+1]}) +   \sum_{i=1}^{n}\big[w_i(Z_{[n+1]}) - w_i'(Z_{[n+1]}) \big] \one \{ R_i\geq R_{n+1}\}            	\\ 
& \geq w_{n+1}(Z_{[n+1]}) + \sum_{i=1}^{n}\big[w_i(Z_{[n+1]}) - w_i'(Z_{[n+1]}) \big] \\
& = 0,
\end{align*}
which proves the claim as required. 
\end{proof}

\subsection{Recap}\label{sect:recap}

\begin{proposition}[Proposition 2a in \citet{vovk2012conditional}]\label{prop3}
If the data points $Z_{[n+1]}$ are i.i.d., for any $\delta \in (0,0.5],$ with probability at least $1-\delta,$ 
	the SCP interval in \eqref{equ:interval_scp} satisfies
\begin{equation}\label{equ:vovk_condition}
\alpha(Z_{[n]}) = \mathbb P\big\{Y_{n+1}\not\in \hat C_n^{\textup{SCP}}(X_{n+1}) \mid Z_{[n]}\big\} \leq \alpha +\sqrt{\frac{\log (1/\delta )}{2n}}.
\end{equation}
\end{proposition}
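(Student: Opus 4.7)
The plan is to reformulate the bound as a concentration statement about a single order statistic and then apply Hoeffding's inequality. First, specializing the p-value $p(Z_{[n+1]})$ in \eqref{equ:p_n} to the SCP weights $w_i(Z_{[n+1]})=1/(n+1)$, the miscoverage event $\{Y_{n+1}\not\in \hat C_n^{\textup{SCP}}(X_{n+1})\}$ becomes $\{R_{n+1} > \hat q\}$, where $\hat q = R_{(m)}$ is the $m$-th order statistic of the calibration residuals with $m = \lceil (1-\alpha)(n+1)\rceil$. Because $\hat\mu$ is pre-fitted on separate data, $R_1,\dotsc,R_{n+1}$ are i.i.d., and in particular $R_{n+1}$ is independent of $Z_{[n]}$.

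Next, I would apply the probability integral transform: letting $F$ be the common CDF of the $R_i$'s and $U_i = F(R_i)\sim \textup{Uniform}(0,1)$, the conditional miscoverage rate simplifies to
\[
\alpha(Z_{[n]}) = \mathbb P\{R_{n+1} > \hat q \mid Z_{[n]}\} = 1 - F(\hat q) = 1 - U_{(m)}.
\]
Hence the target bound $\alpha(Z_{[n]})\leq \alpha + \epsilon$ with $\epsilon = \sqrt{\log(1/\delta)/(2n)}$ is equivalent to $U_{(m)}\geq 1-\alpha-\epsilon$. The complementary event $\{U_{(m)}<1-\alpha-\epsilon\}$ is exactly $\{\sum_{i=1}^n \one\{U_i<1-\alpha-\epsilon\}\geq m\}$, a tail event on a $\textup{Binomial}(n,1-\alpha-\epsilon)$ sum of i.i.d.\ indicators.

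Finally, since $m=\lceil(1-\alpha)(n+1)\rceil\geq (1-\alpha)n$, the empirical mean exceeds its expectation by at least $m/n-(1-\alpha-\epsilon)\geq \epsilon$, so Hoeffding's inequality yields
\[
\mathbb P\{U_{(m)} < 1-\alpha-\epsilon\}\leq \exp\{-2n(m/n-(1-\alpha-\epsilon))^2\}\leq e^{-2n\epsilon^2}=\delta,
\]
which is the claim. I do not foresee a substantive obstacle; the only bookkeeping is the inequality $m/n\geq 1-\alpha$, which is immediate from $m\geq (1-\alpha)(n+1)$. The restriction $\delta \leq 1/2$ is not strictly needed for the argument above and appears to be a convention inherited from the original reference.
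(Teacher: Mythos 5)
Your argument is correct and follows essentially the same route as the paper's first proof: reduce the conditional miscoverage rate to an order statistic of i.i.d.\ uniforms (the paper phrases this as a $\mathrm{Beta}(n_\alpha, n+1-n_\alpha)$ random variable), rewrite the tail event as a binomial tail, and apply Hoeffding's inequality. The only implicit assumption in both your version and the paper's is that the residuals are almost surely distinct so that the probability integral transform yields exact uniforms, and your observation that $\delta\le 1/2$ is not needed for this argument is accurate.
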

The guarantee in \eqref{equ:vovk_condition} means that
the coverage rate over many test points is close to \(1-\alpha\) for large sample size $n$. We next revisit two standard proofs of \Cref{prop3}.
\begin{proof} 
By \Cref{lemma:zero_gap},  it suffices to prove \eqref{equ:vovk_condition} holds for 
$ \alpha' (Z_{[n]})$ in \eqref{equ:p_n_3},
where $\ p'(Z_{[n+1]}) = n^{-1}\sum_{i=1}^{n}\one \{R_i\geq R_{n+1}\}	\leq \alpha.$
Proposition 2a in \citet{vovk2012conditional} shows  that
\begin{equation}\label{equ:beta}
 \alpha' (Z_{[n]}) \sim \text{Beta}\big(n_{\alpha},n+1-n_{\alpha}\big) \text{ for } n_{\alpha} =  \floor{\alpha(n+1)},
\end{equation}
which  is the distribution of the $n_{\alpha}$-smallest sample in $n+1$ i.i.d. standard uniform random variables. 
The event $\{ \alpha' (Z_{[n]}) \leq\alpha\}$ occurs when the sample quantile of $R_{[n]}$ is larger than the true marginal quantile of $R_{n+1}$. We have
\[
F_{ \alpha' (Z_{[n]})}\big(\alpha+\Delta  ;n_{\alpha},n+1-n_{\alpha}\big) = 1-\text{Binomial}(n_{\alpha}-1;n,\alpha+\Delta).
\]
The LHS is the probability that the $n_{\alpha}$-smallest standard uniform random variable $ \alpha' (Z_{[n]}) \leq  \alpha+\Delta$. The RHS is the probability that at least $n_{\alpha}$ standard uniform random variables fall into $[0,\alpha+\Delta]$. The two events are equivalent, so they have the same probability.
For $B_n\sim \text{Binomial}(\cdot ;n,\alpha+\Delta)$ and $\Delta = \sqrt{\log (1/\delta )/(2n)},$  
\begin{align*}
\mathbb P\{ \alpha' (Z_{[n]}) \leq   \alpha +\Delta    \} 
= \mathbb P\{B_n \geq n_{\alpha}   \} 			
 \geq   1- \mathbb P\{B_n -(\alpha+\Delta )n \leq -\Delta n   \} 
 \geq  1- \delta.
\end{align*}
The last step is obtained by  Hoeffding's inequality. 
\end{proof}
\begin{proof}
Denote the marginal CDF $F_{R}(\cdot)$ and $G_{R}(\cdot ) = 1- F_{R}(\cdot)$. Then,
\begin{align*}
 \alpha' (Z_{[n]})  =&\ \mathbb P\big\{	G_{R}(R_{n+1} )		\leq \alpha + G_{R}(R_{n+1}) -p'(Z_{[n+1]}) \mid Z_{[n]} \big\}	\\
 \leq &\ \alpha  + \sup_{(x,y)} \Big\{ G_{R}(|y-\hat \mu(x)|) -p'(Z_{[n]},(x,y) ) \Big\},	
\end{align*}
by the probability integral transform. Then, we have
\[
\mathbb P\big\{ \alpha' (Z_{[n]}) \leq   \alpha +\Delta    \big\} \geq  \mathbb P\bigg\{   \sup_{(x,y)} \big\{ G_{R}(|y-\hat \mu(x)|) -p' (Z_{[n]},(x,y) ) \big\}   \leq  \Delta    \bigg\}	 \geq 1-\delta,
\] 
by the $\Delta$ above and the Dvoretzky-Kiefer-Wolfowitz inequality \citep{massart1990tight}.
\end{proof}
\subsection{PCP guarantee of weighted intervals}
For the weighted p-value in \eqref{equ:p_n_2}, the conditional probability $\alpha' (Z_{[n]})$ in \eqref{equ:p_n_3} does not have a tractable distribution as in \eqref{equ:beta} and $p'(Z_{[n]},(x,y)) $ is not an unbiased estimator of $G_{R}(|y-\hat \mu(x)|)$. The proofs in \Cref{sect:recap} are non-applicable to weighted p-values.

Given marginal validity $\mathbb E[\alpha(Z_{[n]})]\leq \alpha$, the PAC guarantee in 
\eqref{eqi:pac} is implied by 
\begin{equation}\label{equ:pac_2}
\mathbb P \left\{ \alpha(Z_{[n]})- \mathbb E[\alpha(Z_{[n]})]   \leq  o(1) \right\}\geq 1-o(1).
\end{equation}
It is known \citep{talagrand1995concentration} that
a random variable $\alpha (Z_{[n]})$ that smoothly depends on many independent random variables $Z_{[n]}$ concentrates at its expectation.  Following this principle,
we write $\alpha(Z_{[n]})- \mathbb E[\alpha(Z_{[n]})]$  as 
\[
\alpha(Z_{[n]})-\mathbb E[\alpha(Z_{[n]})]
=
\sum_{i=1}^n
\left(
\mathbb E[\alpha(Z_{[n]})\mid Z_{[i]}]
-
\mathbb E[\alpha(Z_{[n]})\mid Z_{[i-1]}]
\right).
\]
The martingale differences in the display above are bounded if we make a stability assumption on the weights in the p-values. Then, we can apply the bounded difference inequality to prove a weighted prediction interval is PAC.

\subsubsection{PAC guarantee of covariate weighted intervals}

To illustrate, we first prove a result  by assuming the weights only depend on the features, e.g.,
$w_i(Z_{[n+1]})   \propto  \phi (X_i,X_{n+1}) = \exp  (-\lambda\|X_i-X_{n+1}\|)$. If $\lambda$ is small,
the sum of distances $\psi(X_{-j},X_{n+1})$ defined below increases at a fast rate $n^{\rho_1}$, which can lead to a PAC guarantee. For example, $\rho_1=1$ for the SCP interval using $\lambda =0.$ 

\begin{theorem}\label{prop:3_x}
Assume that $Z_{[n+1]}$ are i.i.d. and the p-value $p(Z_{[n+1]})$ in \eqref{equ:p_n}, based on $\phi(X_i,X_{n+1})$, has a distribution function that is locally Lipschitz in a neighborhood of $\alpha$ given $Z_{[n]}$ a.s. 
Assume that \(0\le \phi(X_i,X_{n+1})\le 1\) for all \(i\in[n+1]\) a.s.
Then, for \(\alpha (Z_{[n]})\) in  \eqref{equ:p_n_3} and any \(\delta\in(0,1)\), there exists a constant \(C>0\) such that
\[
\mathbb P \left\{\alpha (Z_{[n]})  \leq  \alpha +  C\sqrt{ n^{1-2\rho_1  }\log(1/\delta)}\right \}\geq 1-\delta,
\]
if $\psi(X_{-j},X_{n+1}) :=\sum_{i\neq j }\phi (X_i,X_{n+1}) \gtrsim n^{\rho_1} $ for all $j\in [n]$ and   some $\rho_1>1/2$ a.s.\footnote{We use $\sum_{i\neq j }$ to denote the sum from 1 to $n+1$ without including $j$-th element. We let $ a\lesssim b$ indicate that $a\leq Cb$ for some universal constant $C$.}
\end{theorem}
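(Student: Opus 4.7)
The plan is to apply the bounded-differences (McDiarmid) inequality to the random variable $\alpha(Z_{[n]}) = \mathbb P(p(Z_{[n+1]}) \leq \alpha \mid Z_{[n]})$ and combine it with a marginal-validity bound $\mathbb E[\alpha(Z_{[n]})] \leq \alpha$ (which holds here since each $w_i$ is a symmetric function of $X_{[n+1]}$ and $Z_{[n+1]}$ are exchangeable, via the standard p-value argument in Lemma~A.1 of \citet{harrison2012conservative}). If I can verify an a.s.\ bounded-difference condition $|\alpha(Z_{[n]}) - \alpha(Z_{[n]}^{(i)})| \leq c_i$ with $c_i \lesssim 1/n^{\rho_1}$, where $Z_{[n]}^{(i)}$ differs from $Z_{[n]}$ only in coordinate $i$, then $\sum_{i=1}^n c_i^2 \lesssim n\cdot n^{-2\rho_1} = n^{1-2\rho_1}$, and McDiarmid yields
\[
\mathbb P\{\alpha(Z_{[n]}) - \mathbb E[\alpha(Z_{[n]})] \geq t\} \leq \exp\bigl(-2 t^2/(Cn^{1-2\rho_1})\bigr),
\]
which on setting $t \asymp \sqrt{n^{1-2\rho_1}\log(1/\delta)}$ gives exactly the claimed rate.

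The first concrete ingredient is weight stability: replacing $Z_i$ by an independent copy $Z_i'$ changes only $\phi_i := \phi(X_i, X_{n+1})$ and shifts the normalizer $\psi = \sum_{j=1}^{n+1}\phi(X_j, X_{n+1})$ by $|\phi_i - \phi_i'|$, while leaving the other $\phi_j$ unchanged. Using the hypothesis $\psi(X_{-i}, X_{n+1}) \gtrsim n^{\rho_1}$ together with boundedness of $\phi$, a direct calculation gives $|w_i - w_i'| + \sum_{j\neq i} |w_j - w_j'| \lesssim 1/n^{\rho_1}$. Since each indicator $\one\{R_i \geq R_{n+1}\}$ lies in $[0,1]$, combining this with the single flipped indicator at position $i$ yields the deterministic p-value perturbation
\[
\eta_i := \sup|p(Z_{[n+1]}) - p(Z_{[n+1]}^{(i)})| \lesssim 1/n^{\rho_1}.
\]

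The hardest step is transferring the bounded-difference property from $p$ to $\alpha(Z_{[n]})$. Integrating over the independent $Z_{n+1}$ and bounding $|\one\{p \leq \alpha\} - \one\{p' \leq \alpha\}| \leq \one\{|p-\alpha|\leq \eta_i\}$ whenever $|p-p'| \leq \eta_i$ gives
\[
|\alpha(Z_{[n]}) - \alpha(Z_{[n]}^{(i)})| \leq \mathbb P\bigl(|p(Z_{[n+1]}) - \alpha| \leq \eta_i \mid Z_{[n]}\bigr).
\]
Continuity of the conditional law of $p(Z_{[n+1]})$ given $Z_{[n]}$ forces this to vanish as $\eta_i \to 0$, but quantifying it at rate $O(\eta_i)$ — so that $c_i \lesssim 1/n^{\rho_1}$ holds almost surely — is the main obstacle. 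I would strengthen the hypothesis to a uniform bound on the conditional density of $p$ near $\alpha$ (implicit in the "continuous distribution" assumption); alternatively, one could try to invoke a variant of McDiarmid that tolerates random bounded differences controlled only in expectation. With either strengthening, McDiarmid supplies concentration at the advertised rate, and combining with marginal validity completes the proof.
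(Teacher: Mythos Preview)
Your approach is essentially the paper's: establish a bounded-difference property for $\alpha(Z_{[n]})$, apply McDiarmid, and combine with marginal validity. The paper executes the bounded-difference step slightly differently. Rather than perturbing $Z_j$ and tracking $|p - p^{(j)}|$, it introduces the \emph{leave-one-out} p-value $p_{-j}(Z_{[n+1]}) = \sum_{i\neq j} w_{-j,i}\one\{R_i \geq R_{n+1}\}$ (weights renormalized over $i\neq j$) and shows deterministically that
\[
\Big\{p_{-j} \leq \alpha + \tfrac{\alpha-1}{\psi(X_{-j},X_{n+1})}\Big\} \subseteq \{p \leq \alpha\} \subseteq \Big\{p_{-j} \leq \alpha + \tfrac{\alpha}{\psi(X_{-j},X_{n+1})}\Big\}.
\]
Because $p_{-j}$ does not depend on $Z_j$ at all, the sandwich immediately yields $\alpha(Z_{[n]}) \in [\mathbb P(E_- \mid Z_{[n]\setminus\{j\}}), \mathbb P(E_+ \mid Z_{[n]\setminus\{j\}})]$, so the bounded difference is exactly the width of this interval---a function of $Z_{[n]\setminus\{j\}}$ alone. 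This is a marginally cleaner route than your direct perturbation bound $|\alpha(Z_{[n]}) - \alpha(Z_{[n]}^{(i)})| \leq \mathbb P(|p-\alpha|\leq \eta_i \mid Z_{[n]})$, whose right-hand side still depends on $Z_i$ and therefore needs to be controlled uniformly; but the two arguments are equivalent in strength.

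On the obstacle you flag: the paper simply asserts $\mathbb P\{p_{-j}\in (\alpha - Cn^{-\rho_1}, \alpha + Cn^{-\rho_1}]\} \lesssim n^{-\rho_1}$ ``using the rate condition on $\psi$,'' without elaborating. You are right that continuity of the conditional law of $p$ alone does not deliver this rate; an implicit bounded-density (or anti-concentration) condition near $\alpha$ is needed, and your proposed strengthening is exactly what closes this step. So you have identified a point the paper glosses over, not a gap unique to your argument.
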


\begin{proof}
For any $j\in [n],$ we can define a leave-one-out p-value 
\[
p_{-j}(Z_{[n+1]}) :=  \sum_{i\neq j }w_{-j}(X_i,X_{n+1})\one \{ R_i\geq R_{n+1}\},
\]
where $w_{-j}$ is the renormalized weights. By 
\begin{equation}\label{equ:w_expression}
\begin{split}
\hspace{-5pt}        w(X_i,X_{n+1})  &  = \frac{\phi(X_{i},X_{n+1})}{\phi(X_{j},X_{n+1}) + \psi(X_{-j},X_{n+1})}    \\
 		&    = \frac{w_{-j}(X_i,X_{n+1}) }{1+ \phi(X_{j},X_{n+1})/\psi(X_{-j},X_{n+1}) },
\end{split}
\end{equation}
we  can rewrite the event in \eqref{equ:p_n} as 
\begin{equation}\label{equ:p_minus_j}
\begin{split}
p_{-j}(Z_{[n+1]})  \leq  &\ \big[\alpha - w (X_j,X_{n+1})\one\{ R_j\geq R_{n+1}\}\big] \times \left[1+ \frac{\phi(X_{j},X_{n+1})}{\psi(X_{-j},X_{n+1})} \right].
\end{split}
\end{equation}
Observe that  the event in  \eqref{equ:p_minus_j}  implies the following event $E_+:$ 
\begin{align*}
p_{-j}(Z_{[n+1]})   & \leq [\alpha -0]\times \left[1+ 1/\psi(X_{-j},X_{n+1}) \right]  = \alpha + \alpha / \psi(X_{-j},X_{n+1}).	
\end{align*}
Conversely, the event in  \eqref{equ:p_minus_j} is implied by 
\[
 p_{-j}(Z_{[n+1]})   
   \leq \alpha  \times \left[1+ \frac{\phi(X_{j},X_{n+1})}{\psi(X_{-j},X_{n+1})} \right]- \frac{\phi(X_{j},X_{n+1})}{\psi(X_{-j},X_{n+1})},
\]
using $\one\{ R_j\geq R_{n+1}\}\leq 1$ and 
the expression of $ w(X_j,X_{n+1})$ in the first equality in \eqref{equ:w_expression}.
Hence,  the event in  \eqref{equ:p_minus_j} is also implied by the following event $E_-$:
\[
 p_{-j}(Z_{[n+1]})  \leq\alpha  +(\alpha-1) /\psi(X_{-j},X_{n+1}).
\]
Then, we know from the equivalence in \eqref{equ:p_minus_j} that
\[
\mathbb P \big\{ E_- \mid Z_{[n]\setminus\{j\}} \big\} \leq \alpha (Z_{[n]}) \leq \mathbb P \big\{ E_+ \mid Z_{[n]\setminus\{j\}} \big\},
\]
where the upper and lower bound hold because $p_{-j}(Z_{[n+1]})$  does not depend on $Z_j.$
Their difference can be bounded using the rate condition on $\psi(X_{-j},X_{n+1})$,
\begin{align*}
 \mathbb P \big\{ \alpha  +(\alpha-1) /\psi(X_{-j},X_{n+1}) <  p_{-j}(Z_{[n+1]})\leq	 \alpha  +\alpha /\psi(X_{-j},X_{n+1})		 \big\} 
\lesssim  n^{-\rho_1}.
\end{align*}
This shows that $\alpha(Z_{[n]})$ satisfies the bounded difference condition in McDiarmid's inequality.
That is, modifying the value of one argument of $\alpha(Z_{[n]})$ changes the value of $\alpha(Z_{[n]})$ by at most $c n^{-\rho_1}$ for some universal constant $c$. Here we also use the local Lipschitz assumption near $\alpha$ to bound the probability mass in the interval of width $O(n^{-\rho_1})$ above.
Applying McDiarmid’s inequality, we prove the claim as required.
\end{proof}

\subsubsection{PAC guarantee of PCP}\label{sect:main_pac}

We next prove that our PCP interval has a PAC guarantee when the membership probabilities in our mixture model satisfy a stability assumption. Our theorem is inspired by the following series of works on full conformal prediction (FCP) \citep{vovk2005algorithmic}. First, \citet{lei2018distribution}  demonstrate that if the predictive model $\hat{\mu}$ is stable
against pertubation of one response data point in the training set, FCP can achieve an interval length comparable to the interval that uses the true conditional quantile of the residual.
More recently, \citet{bian2023training} shows that FCP is impossible to obtain the PAC guarantee of SCP in distribution-free settings. \citet{liang2023algorithmic} then proves that FCP can have a PAC guarantee if $\hat \mu$ is stable against the change of adding multiple data points into the training set. 

We note that the setup of our problem is different from that of FCP. Regarding the assumption, we will require the weights in our p-values to be stable rather than the predictive model $\hat \mu$. Unlike the results above, we now allow the weights to depend on any data in $Z_{[n+1]}$.
In what follows, we use the general notation introduced in (\ref{equ:p_n}-\ref{equ:p_n_3}). 

Conditional on \(\hat L^{\star,Y_{n+1}}=l\), where
\(l=(l_1,\dots,l_J)\in\mathbb N^J\) and \(\sum_{k=1}^J l_k=m\), we write
\[
\phi_i(Z_{[n+1]})
=
\prod_{k=1}^{J}
\big[\hat\pi_k^{Y_{n+1}}(X_i)\big]^{l_k}.
\]
All arguments below are conditional on the event \(\hat L^{\star,Y_{n+1}}=l\); since the bounds are uniform in \(l\), averaging over \(\hat L^{\star,Y_{n+1}}\) gives the stated result.

For $j\in [n]$, we also write
\[
\psi_{-j} := n^{-1}\sum_{i\in [n]\setminus \{j\}}\phi_i(Z_{[n+1]}),
\qquad
\psi_{-(n+1)} := n^{-1}\sum_{i=1}^{n}\phi_i(Z_{[n+1]}).
\]
\clearpage
\begin{assumption} \label{assume:stable3}
The terms \(\psi_{-j}\), \(j\in[n]\), and \(\psi_{-(n+1)}\) belong to \([C_1,C_2]\) for some constants \(C_1,C_2>0\) a.s.
\end{assumption}

\begin{assumption}(Leave-two-out stability). \label{assume:stable}
	For sufficiently large $n$ and some $\rho > 1/2,$
\[
\mathbb P \left\{ 
\| \hat\pi^{-(j,n+1)}(X_i)  -\hat\pi(X_i) \|_1 
\lesssim  n^{-\rho },
\ \forall i,j\in [n] \text{ and } i \neq j
\right\}  = 1- o(1).
\]
\end{assumption}
The notation $\lesssim$ is introduced in \Cref{prop:3_x}.
Here \(\hat\pi=\hat\pi^{Y_{n+1}}\) denotes the probabilities fitted to \(Z_{[n+1]}\), while \(\hat\pi^{-(j,n+1)}\) denotes the probabilities fitted to \(Z_{[n]\setminus\{j\}}\).
Both of them are fitted to $Z_i$, so they have a small difference when evaluated at $X_i.$ 
\begin{theorem}\label{prop:3}
Suppose \Cref{assume:stable,assume:stable3} holds and the data points $Z_{[n+1]}$ are i.i.d.. Suppose the p-value $p(Z_{[n+1]})$ in \eqref{equ:p_n}, computed using the PCP weights, has a distribution function that is locally Lipschitz in a neighborhood of $\alpha$ given $Z_{[n]}$ a.s.
Then the conditional miscoverage probability
\[
\alpha(Z_{[n]})
=
\mathbb P\big\{Y_{n+1}\not\in \hat C_n^{\textup{PCP}}(X_{n+1})\mid Z_{[n]}\big\}
\]
satisfies the following: for every $\varepsilon>0$, there exists a constant $M_\varepsilon>0$ such that
\[
\limsup_{n\to\infty}
\mathbb P\!\left\{
\alpha(Z_{[n]})
>
\alpha + M_\varepsilon n^{-\min\{\rho-1/2,\,1/2\}}
\right\}
\le \varepsilon.
\]
\end{theorem}

\begin{proof}  
We let $\phi_j =\phi_j(Z_{[n+1]})$ and $\psi_{-j} = \psi_{-j}(Z_{[n+1]}).$
Observe that 
\begin{align*}
\ p'(Z_{[n+1]})\leq\alpha  
\Leftrightarrow  &\ \sum_{i=1}^{n}\phi_i\one \{ R_i\geq R_{n+1}\} \leq \alpha n\psi_{-(n+1)} \\
\Leftrightarrow  &\ p(Z_{[n+1]}) = \frac{\sum_{i=1}^{n}\phi_i\one \{ R_i\geq R_{n+1}\}+\phi_{n+1}}{n\psi_{-(n+1)}+\phi_{n+1}}\\
& \hspace{48pt} \leq \alpha + (1-\alpha )w_{n+1}(Z_{[n+1]}).
\end{align*}
By the definition in \eqref{equ:p_n_3}, the local Lipschitz assumption near $\alpha$, and \Cref{assume:stable3},
\begin{equation}\label{equ:alpha_gap}
\mathbb E[\alpha' (Z_{[n]})] - \mathbb E[\alpha (Z_{[n]})] =  \mathbb P\left\{ \alpha < p(Z_{[n+1]}) \leq  \alpha +  (1-\alpha )w_{n+1}(Z_{[n+1]}) \right\} \lesssim n^{-1}.
\end{equation}

By \Cref{lemma:zero_gap} and marginal validity $\mathbb E[\alpha(Z_{[n]})]\le \alpha$, it suffices to show that for every $\varepsilon>0$, there exists a constant $M_\varepsilon>0$ such that
\[
\limsup_{n\to\infty}
\mathbb P\!\left\{
\alpha'(Z_{[n]})-\mathbb E[\alpha'(Z_{[n]})]
>
M_\varepsilon n^{-\min\{\rho-1/2,\,1/2\}}
\right\}
\le \varepsilon .
\]
Let \(\mathcal E_n\) denote the event in \Cref{assume:stable}. 
The bounded-difference argument below is applied on \(\mathcal E_n\); since
\(\mathbb P(\mathcal E_n^c)=o(1)\), its complement is absorbed into the
\(\limsup\) bound.

We fix the value of \(\hat\pi^\star(X_{n+1})\) at \(l/m\), where
\(l=(l_1,\dots,l_J)\in\mathbb N^J\) and \(\sum_{k=1}^J l_k=m\).
We denote the  non-normalized  weight,
\[
\phi_{i}^{-(j,n+1)} 
= 
\prod_{k=1}^{J}
\big[ \hat\pi_k^{-(j,n+1)}(X_i)\big]^{l_k},
\]
for the leave-two-out estimator \(\hat\pi^{-(j,n+1)}\).
We decompose the difference between the two non-normalized weights into two error terms:
\begin{align*}
\left|\phi_i^{-(j,n+1)}-\phi_i\right|
= &\ \Bigg| 
\prod_{k=1}^{J-1}\big[\hat\pi_k^{-(j,n+1)}(X_i)\big]^{l_k}
\left\{
\big[\hat\pi_J^{-(j,n+1)}(X_i)\big]^{l_J}
-
\big[\hat\pi_J(X_i)\big]^{l_J}
\right\} \\
&\quad
+
\big[\hat\pi_J(X_i)\big]^{l_J}
\left\{
\prod_{k=1}^{J-1}\big[\hat\pi_k^{-(j,n+1)}(X_i)\big]^{l_k}
-
\prod_{k=1}^{J-1}\big[\hat\pi_k(X_i)\big]^{l_k}
\right\} \Bigg| \\
\lesssim &\ \sum_{k=1}^{J}
\left|
\big[\hat\pi_k^{-(j,n+1)}(X_i)\big]^{l_k}
-
\big[\hat\pi_k(X_i)\big]^{l_k}
\right| \\
\lesssim &\ n^{-\rho}.
\end{align*}
The first inequality uses the triangle inequality and a telescoping expansion of the product terms.
For the second inequality,
terms with \(l_k=0\) vanish; for \(l_k\ge1\), we use 
\[
a^q-b^q=(a-b)(a^{q-1}+\cdots+b^{q-1})
\]
for integer \(q\ge1\), and \Cref{assume:stable}.
Let $ w_{-j,i}:=\phi_i/(n\psi_{-j})$ and let $ w_{-j,i}^{-(j,n+1)}$ be the normalized version of $\phi_i^{-(j,n+1)}$, that is,
\[
w_{-j,i}^{-(j,n+1)}:=\phi_i^{-(j,n+1)}/(n\psi_{-j}^{-(j,n+1)}),\quad i\in [n]\setminus\{j\},
\]
where
$
\psi_{-j}^{-(j,n+1)}
:=
n^{-1}\sum_{i\in[n]\setminus\{j\}}\phi_i^{-(j,n+1)}.
$

By \Cref{assume:stable,assume:stable3}, we can bound the difference of the normalized weights:
\begin{equation}\label{equ:weight_gap}
\begin{split}
&\ \sum_{i\in[n]\setminus\{j\}}
\left|w_{-j,i}-w_{-j,i}^{-(j,n+1)}\right| \\
= &\ \ \frac{1}{n}
\sum_{i\in[n]\setminus\{j\}}
\left|
\frac{\phi_i}{\psi_{-j}}
-
\frac{\phi_i^{-(j,n+1)}}{\psi_{-j}^{-(j,n+1)}}
\right| \\
\leq &\ \frac{1}{n}
\sum_{i\in[n]\setminus\{j\}}
\left(
\frac{|\phi_i-\phi_i^{-(j,n+1)}|}{\psi_{-j}}
+
\frac{\phi_i^{-(j,n+1)}}{\psi_{-j}\psi_{-j}^{-(j,n+1)}}
\left|\psi_{-j}^{-(j,n+1)}-\psi_{-j}\right|
\right) \\
= &\  O(n^{-\rho}),
\end{split}
\end{equation}
since $\psi_{-j}$ and $\psi_{-j}^{-(j,n+1)}$ are bounded away from zero and 
\[
\left|\psi_{-j}^{-(j,n+1)}- 	\psi_{-j} \right| \leq n^{-1}\sum_{i\in [n]\setminus\{j\}} \left|  \phi_i -  \phi_i^{-(j,n+1)}  \right| = O(n^{-\rho}).
\]
The remaining steps follow from the proof of \Cref{prop:3_x}. First, 
\begin{align*}
 p'(Z_{[n+1]})\leq \alpha \Leftrightarrow &\  \sum_{i\in [n]\setminus\{j\} }\phi_i\one \{ R_i\geq R_{n+1}\} + \phi_j\one \{ R_j\geq R_{n+1}\} \leq \alpha (n\psi_{-j} + \phi_j)  \\
\Leftrightarrow &\  \sum_{i\in [n]\setminus\{j\} } w_{-j,i}\one \{ R_i\geq R_{n+1}\}\leq \alpha  +  \left[\alpha -\one \{ R_j\geq R_{n+1}\} \right]\phi_j/ (n\psi_{-j}).
\end{align*}
Define
\[
p_j^{-(j,n+1)}
:=
\sum_{i\in[n]\setminus\{j\}}
w_{-j,i}^{-(j,n+1)}
\one\{R_i\ge R_{n+1}\}.
\]
By \eqref{equ:weight_gap} and \(\phi_j/(n\psi_{-j})=O(n^{-1})\), the event in the second line is sandwiched between
\[
p_j^{-(j,n+1)}
\leq \alpha-c_1 n^{-\min\{\rho,1\}}  \qquad \text{and} \qquad p_j^{-(j,n+1)} \leq \alpha+c_2 n^{-\min\{\rho,1\}},
\]
for some constants \(c_1,c_2>0\).
Both events in the last display exclude \(Z_j\).
Hence, by the local Lipschitz argument as in the proof of \Cref{prop:3_x},  changing $Z_j$ can change the conditional probability by at most a constant multiple of $n^{-\min\{\rho,1\}}$. Therefore,
\[
\big|\alpha' (Z_{[n]\setminus \{j\}},z_j) -\alpha' (Z_{[n]\setminus \{j\}},z_j')\big|
\lesssim n^{-\min\{\rho,1\}},
\qquad \forall z_j,z_j'\in \mathcal X\times\mathcal Y,
\]
on \(\mathcal E_n\). Applying McDiarmid's inequality on \(\mathcal E_n\), and using
\(\mathbb P(\mathcal E_n^c)=o(1)\), yields the desired conclusion.
\end{proof}

\section{Additional experiments}\label{sect:D}

In the experiments from \Cref{sect:back,sect:empirical}, we compare posterior conformal prediction (PCP) with split conformal prediction (SCP) \citep{vovk2005algorithmic}, SCP+conditional calibration (CC) \citep{gibbs2023conformal} and randomly-localized conformal prediction (RLCP) \citep{hore2023conformal}. Here, we describe the implementation of these methods and discuss the remaining experiments.

\textbf{Methods.}
The SCP interval is defined in \eqref{equ:interval_scp}. The SCP+CC interval is constructed using a linear quantile regression model fitted to the features and residuals in the validation set. Our implementation also includes the randomization strategy suggested by the authors, which makes the interval have exact coverage at $1-\alpha$. The RLCP interval is described in \Cref{sect:related_work}, where the variance $\sigma^2$ in the kernel weights is chosen to keep the effective sample size of the interval at 100; we refer to equation (21) in \citet{hore2023conformal} for more details on this bandwidth selection method.  We implement PCP and select its hyperparameters as described in \Cref{sect:alg}. As detailed there, our mixture model is based on multiple linear models mapping $X_{n+1}$ to predict whether the residual $R_{n+1}$ is smaller than some quantiles defined below \eqref{equ:r_prime}. Our PCP interval in \eqref{equ:posterior_p} is defined non-parametrically using a weighted empirical distribution. In this sense, we can think of PCP as a semiparametric method, whereas our implementation of SCP+CC is parametric, and RLCP is nonparametric. 

\textbf{Worst-slice conditional coverage rate.} Following the steps outlined in  Section S1.2 of \citet{romano2020classification}, we partition the test data into two subsets, with 20\% used to compute the worst slice and 80\% used to evaluate the conditional coverage. We generate slices $\big\{x: v^{\top}x\in [a,b]\big\}$ for 2500 random vectors $v$ on the unit sphere. For every slice, we choose  $a$ and $b$  to minimize the within-slice coverage rate, subject to the slice containing at least 10\% of the data in the first subset. The worst-slice coverage rate is the minimum within-slice coverage rate across all slices.

\subsection{Setting 2}\label{sect:synthetic_data}

The 6-dimensional features and the drift function $f(V)$ are generated in the same way as in Setting 1. 
We let $V$ denote the first feature in $X$.
Here we generate the response $Y$ with a change point in variance:
\[
Y =  f(V)  + 4(1+3\one\{V\leq  5\})\epsilon,
\qquad \epsilon\sim \mathcal{N}(0,1).
\]
As in Setting 1, we generate independent training, validation, and test sets, each containing 5000 random copies of \((X,Y)\). We use the training set to fit a random forest model $\hat \mu$, which can accurately approximate the function $f(V)$. We can imagine the residual $R=|Y-\hat \mu(X) |$ roughly follows a mixture model in \eqref{equ:mixture}, where $\pi_k(X)$ is either 0 or 1 for $V>5$, leading to the sparse scenario described in \Cref{thm:approximate_valid_3}.

\Cref{fig:sim_2_interval} depicts the test responses and their prediction intervals.
The intervals of SCP, SCP+CC and PCP are stable locally with small variations from the predictions by $\hat \mu.$ In comparison, the interval of RLCP exhibits more variability due to matching all the features and randomizing the location of the test point. 
\Cref{fig:sim_2_coverage} shows the local average coverage rates of the intervals over the 250 nearest test points based on the feature $V$.  SCP uses the same empirical quantile to generate the intervals for all the test points. We can see that these intervals are poorly calibrated with coverage of 
around either  0.8 or 1.  In SCP+CC, the linear quantile model fails to capture the nonlinear changes in the conditional residual distribution. Consequently, its coverage rate drops significantly at the change point $V=5$. RLCP also has coverage gaps because matching all the features makes it difficult to detect the changes in the conditional residual distribution due to the first feature $V$.

\clearpage
\begin{figure}[ht]
\vspace{-20pt}
     \centering
  \begin{subfigure}[b]{1\textwidth}      
	\centering 
         \includegraphics[width=0.99\textwidth]{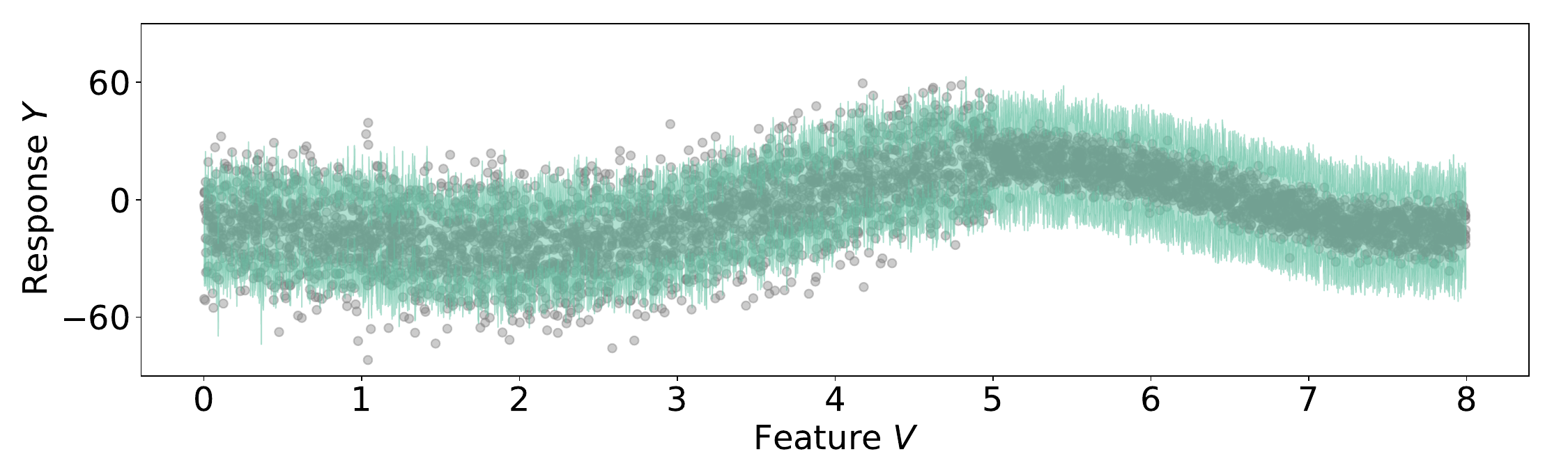}
          \caption{Split conformal prediction (SCP).}
          \vspace{5pt}
          \label{fig:sim_2_SCP}
\end{subfigure}
   \begin{subfigure}[b]{1\textwidth}    
	\centering  
         \includegraphics[width=0.99\textwidth]{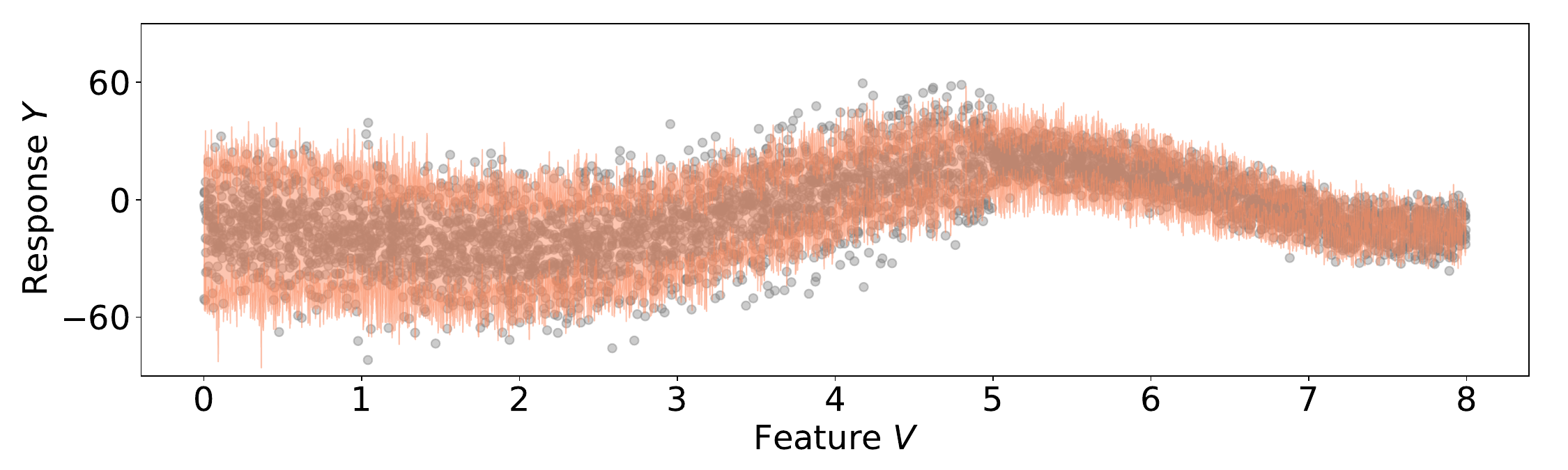}
          \caption{SCP+conditional calibration (SCP+CC).}
                \vspace{5pt}
             \label{fig:sim_2_CC}
	\end{subfigure}	
   \begin{subfigure}[b]{1\textwidth}    
	\centering  
         \includegraphics[width=0.99\textwidth]{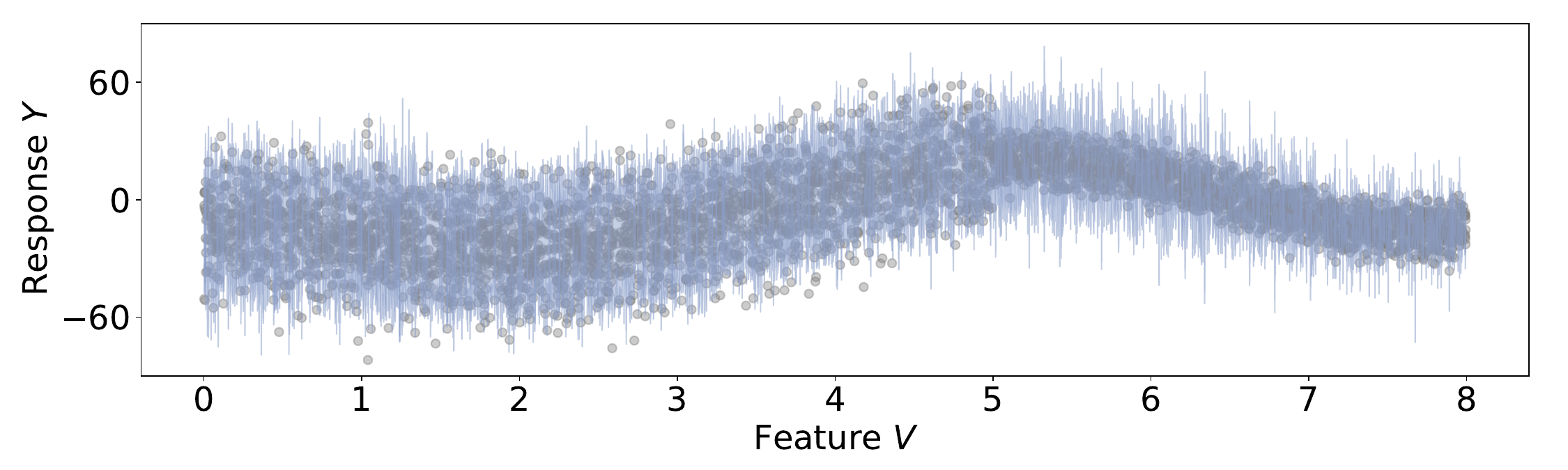}
          \caption{Randomly-localized conformal prediction (RLCP).}
                \vspace{5pt}
             \label{fig:sim_2_RCLP}
	\end{subfigure}	
   \begin{subfigure}[b]{1\textwidth}    
	\centering  
         \includegraphics[width=0.99\textwidth]{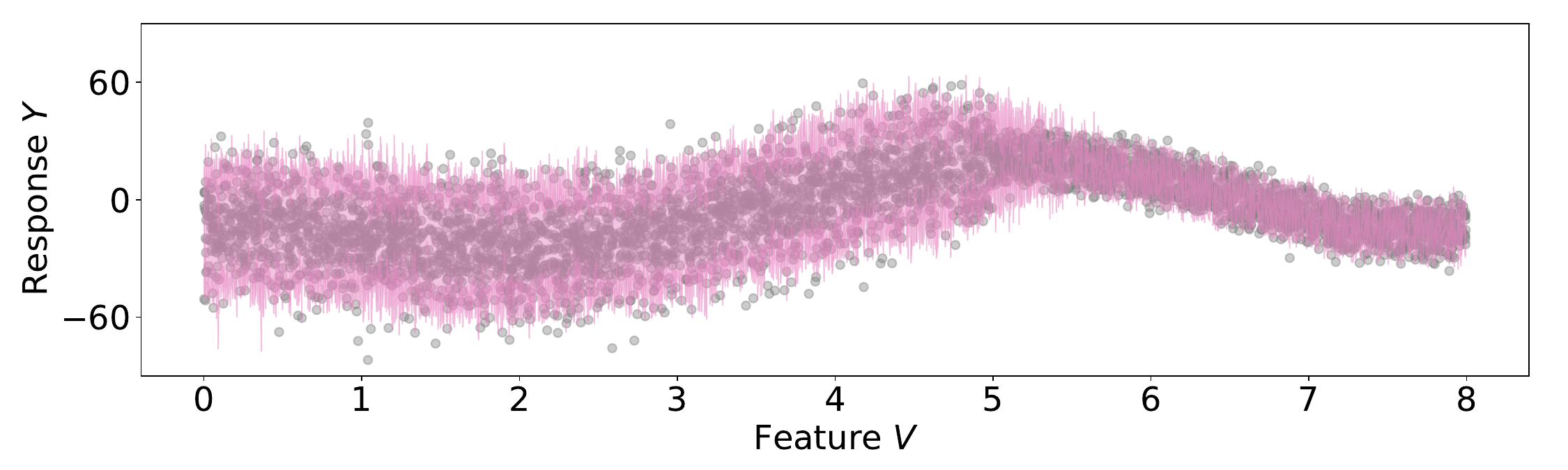}
          \caption{Posterior conformal prediction (PCP).}
                \vspace{5pt}
             \label{fig:sim_2_PCP}
	\end{subfigure}	
\caption{Prediction intervals of conformal prediction methods in Setting 2.}
\label{fig:sim_2_interval}
\end{figure}

\clearpage

\begin{figure}[h] 
	\centering  
         \includegraphics[width=\textwidth]{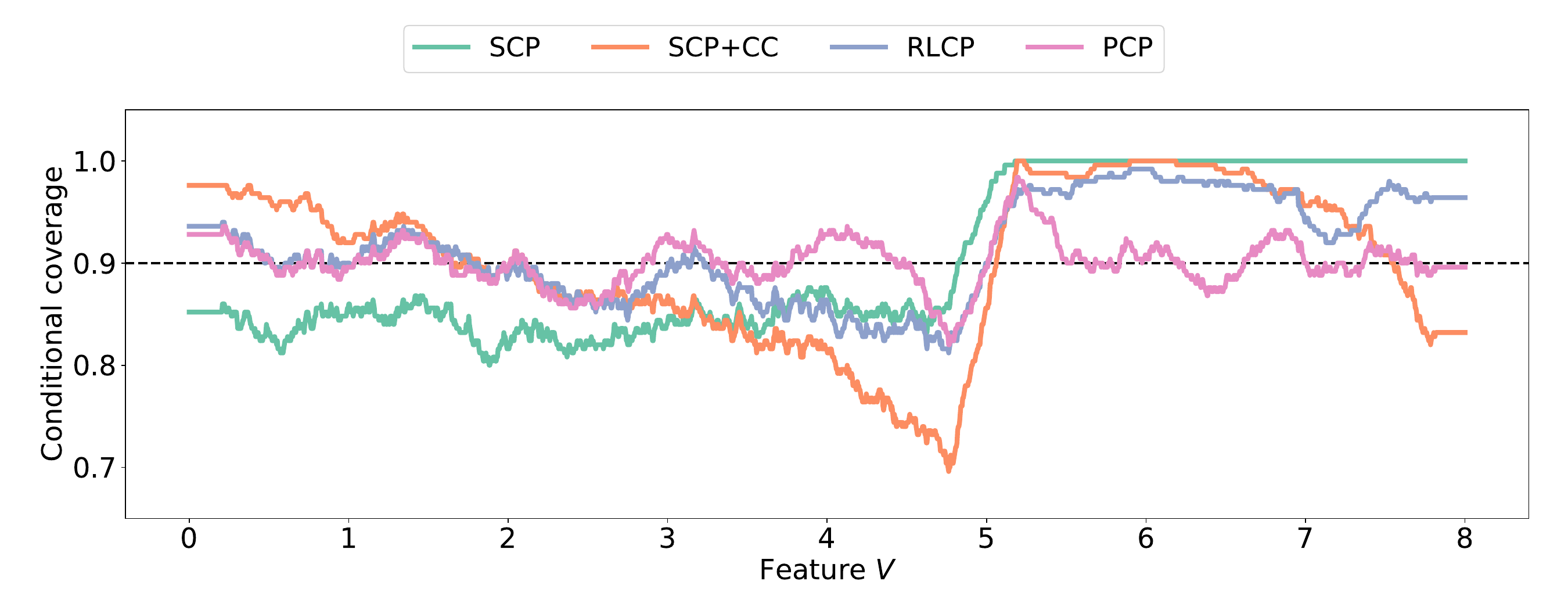}
\caption{Local average coverage rates of conformal prediction methods in Setting 2.}
\label{fig:sim_2_coverage}
\vspace{5pt}
\end{figure}

\begin{figure}[h]
\vspace{25pt}
     \centering
  \begin{subfigure}[b]{0.49\textwidth}      
	\centering 
         \includegraphics[width=0.9\textwidth]{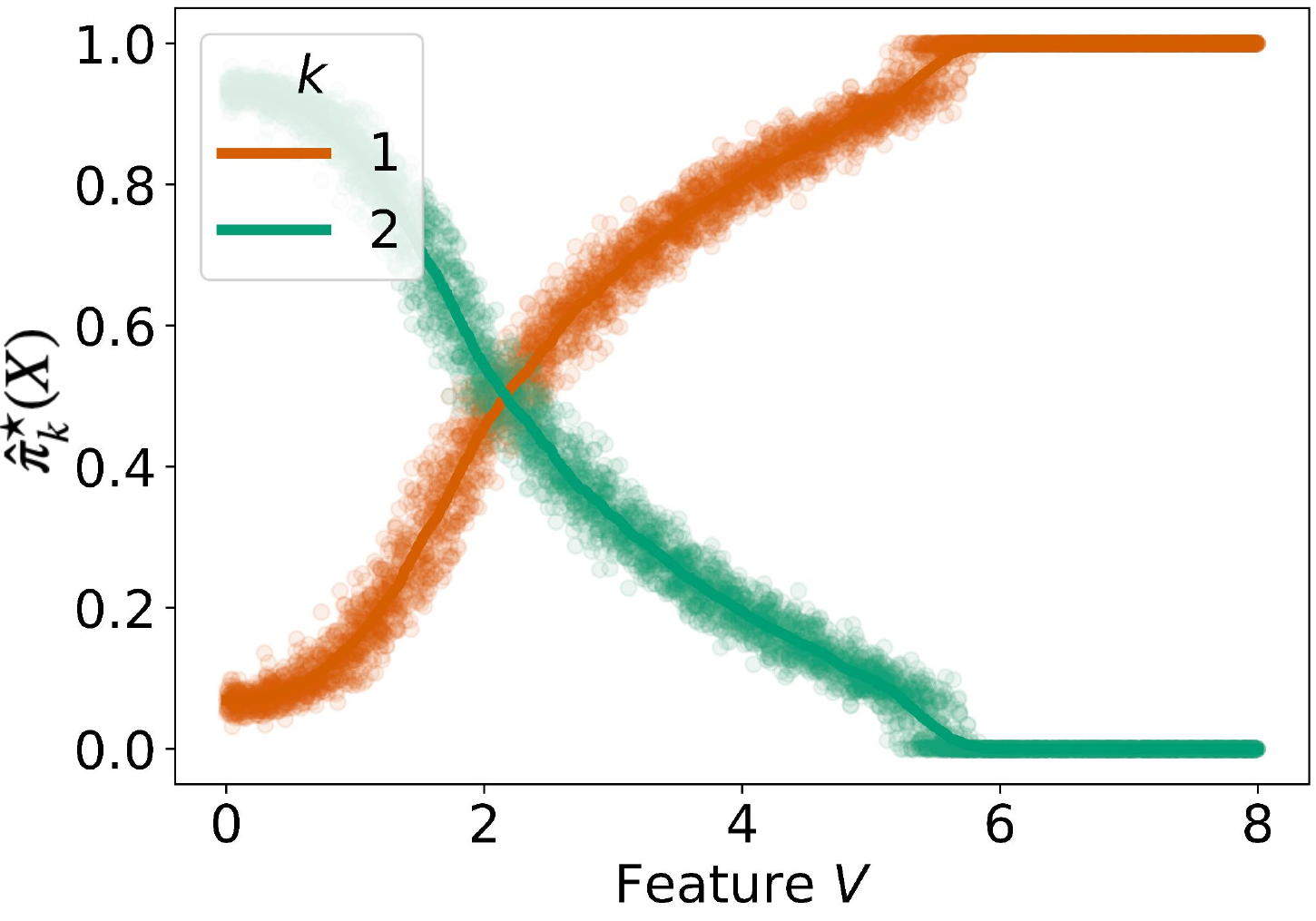}
          \caption{}
          \label{fig:pi_sim_2_1}
\end{subfigure}
  \begin{subfigure}[b]{0.49\textwidth}      
	\centering 
         \includegraphics[width=0.9\textwidth]{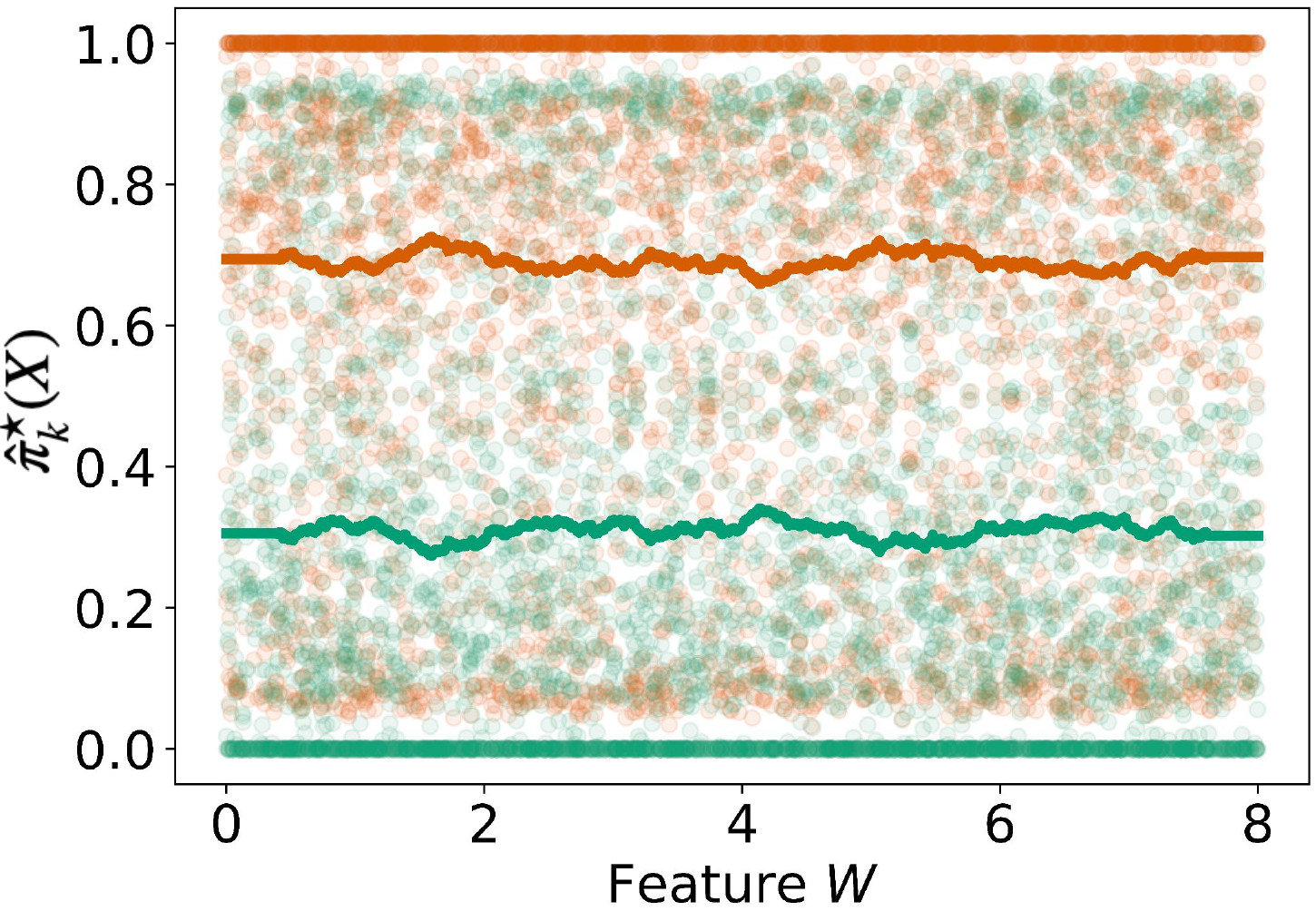}
          \caption{}
           \label{fig:pi_sim_2_2}
\end{subfigure}
\caption{PCP membership probabilities in Setting 2.}
\label{fig:pi_sim2}
\vspace{15pt}
\end{figure}

PCP consistently achieves a coverage rate close to 0.9 at most values of $V$. We interpret this result by visualizing the membership probabilities in \Cref{fig:pi_sim2}. We can see that the probabilities vary continuously over the feature $V$ while changing randomly over the second feature  $W$ in $X$.  This means our mixture model learns that $V$ is the important feature of the conditional residual distribution.
The hyperparameter selection procedure in \Cref{sect:hyper} chooses a large value of the precision parameter, \(m=500\). Thus, the randomized membership probabilities \(\hat \pi_1^\star(X)\) have small deviations from their local average in \Cref{fig:pi_sim_2_1}. With large \(m\) and \(\hat \pi_1^\star(X)\) close to one, PCP achieves approximate conditional validity and short interval length for \(V\in [5,8]\), where the residuals are approximately drawn from the same distribution.

\clearpage

\subsection{
\vspace{-10pt} 
Medical Expenditure Panel Survey dataset
\vspace{10pt}
}\label{sect:repeat_exp}

\begin{figure}[ht]
\hspace{-10pt}
     \centering
     \begin{subfigure}[b]{0.44\textwidth}
         \centering
         \includegraphics[width=0.9\textwidth]{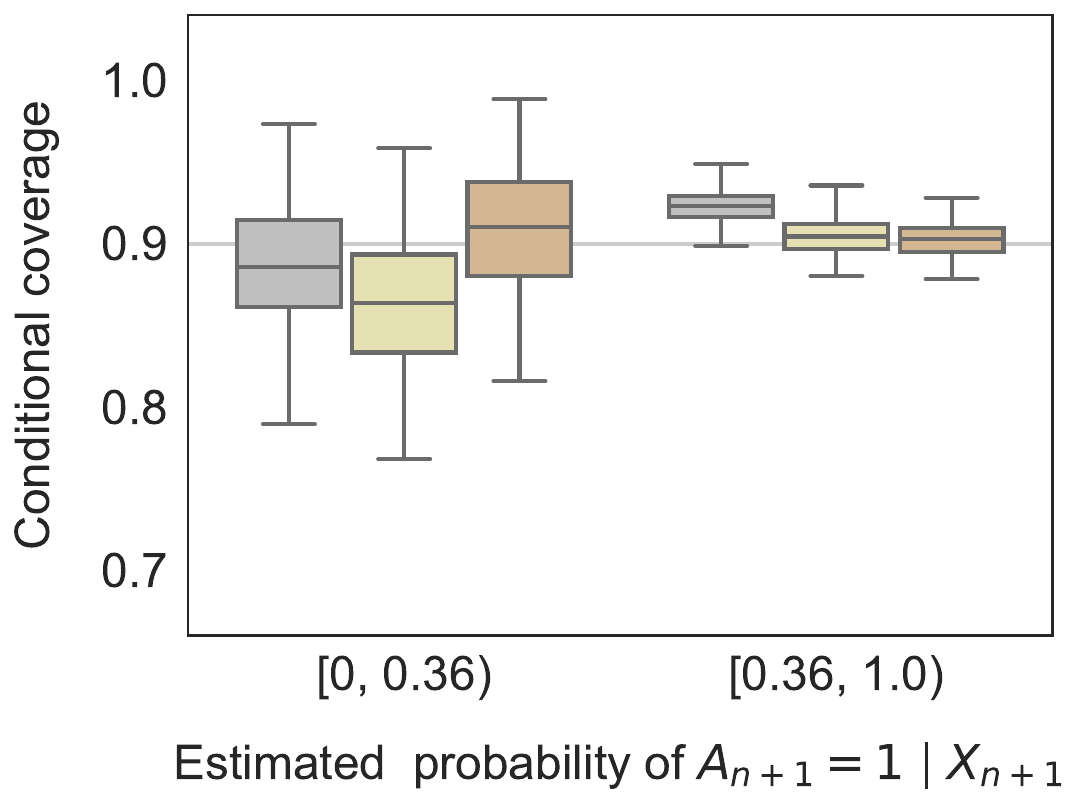}
          \caption{
         \begin{minipage}[t]{0.85\textwidth}
           Coverage rate given $A_{n+1}=1$  and 
           whether  $\hat e^\star(X_{n+1}) <$ or $\geq 0.36.$ 
        \end{minipage}
          }
           \label{fig:fair1_race}
     \end{subfigure}
      \vspace{10pt}
\hspace{5pt}
 \begin{subfigure}[b]{0.44\textwidth}
         \centering
         \includegraphics[width=0.9\textwidth]{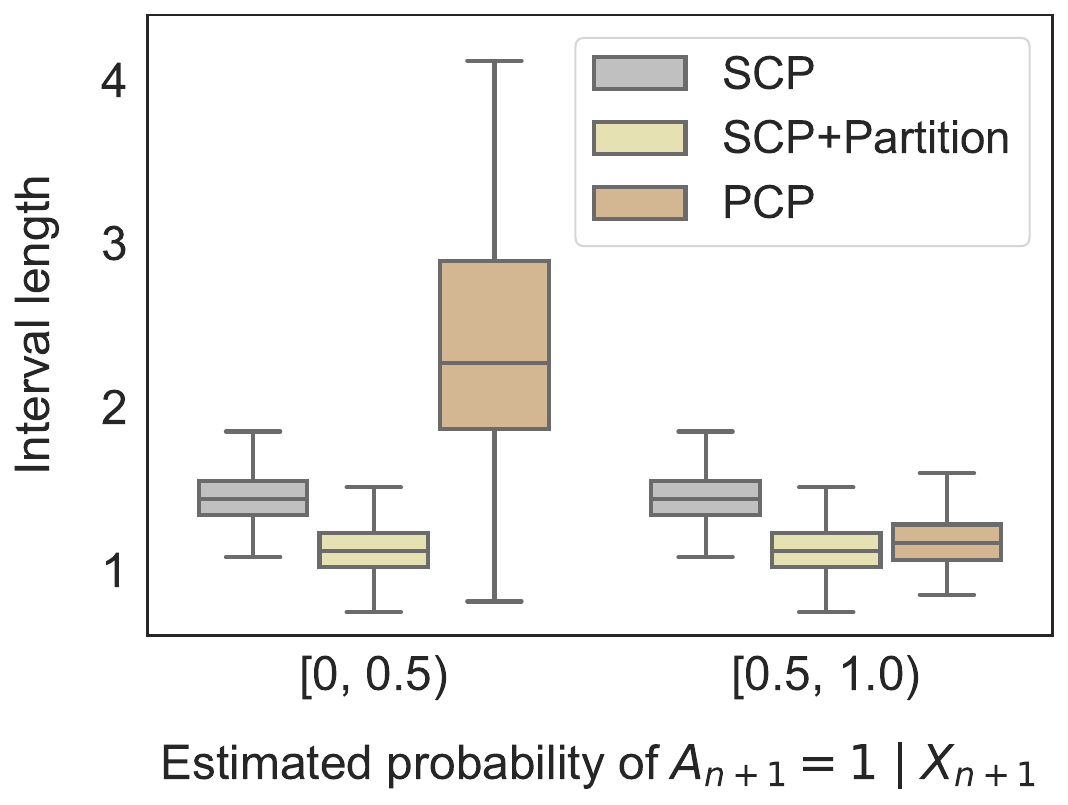}
         \caption{
            \begin{minipage}[t]{0.85\textwidth}
            Interval length given $A_{n+1}=1$  and 
           whether  $\hat e^\star(X_{n+1}) <$ or $\geq 0.36.$  
        \end{minipage}
         }
          \label{fig:fair2_race}
            \end{subfigure}
       \vspace{10pt}
\hspace{-10pt}
     \begin{subfigure}[b]{0.44\textwidth}
          \hspace{5pt}
         \centering
         \includegraphics[width=0.9\textwidth]{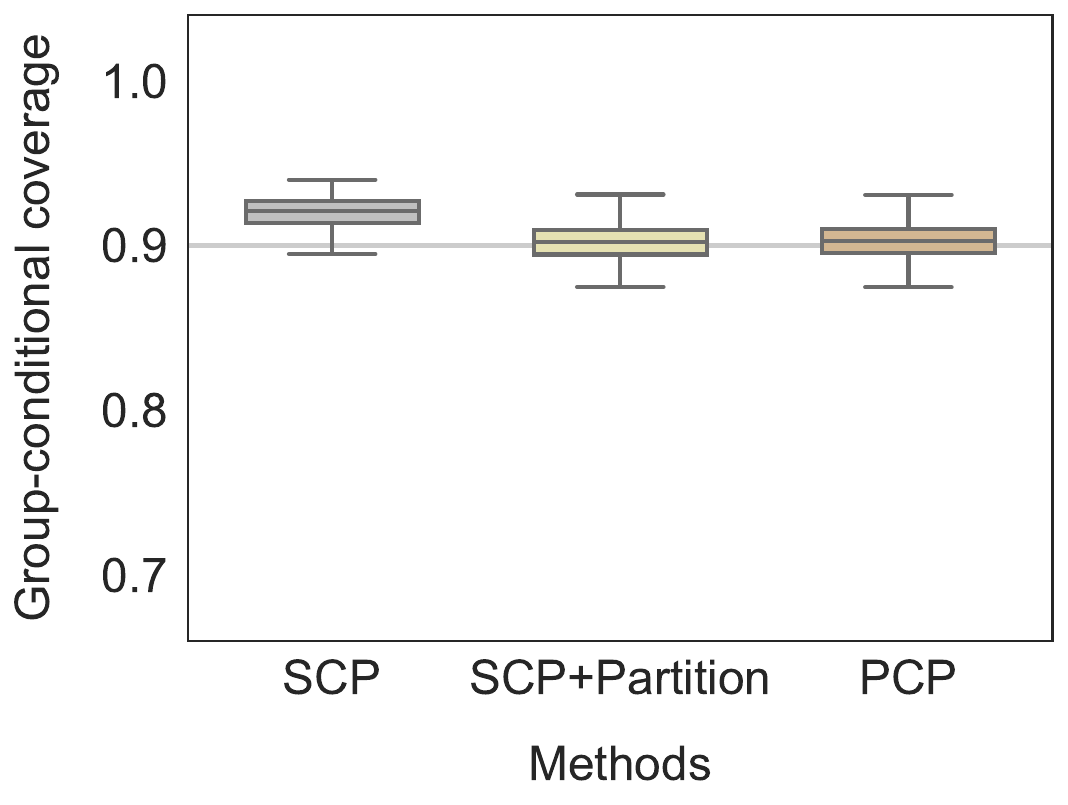}
          \caption{
           \begin{minipage}[t]{0.85\textwidth}
          Group-conditional coverage rate  given $A_{n+1}=1$  (white).     
           \end{minipage}  
          }
     \label{fig:fair3_race}
     \end{subfigure}
           \hspace{5pt}
	\begin{subfigure}[b]{0.44\textwidth}
         \centering
         \includegraphics[width=0.9\textwidth]{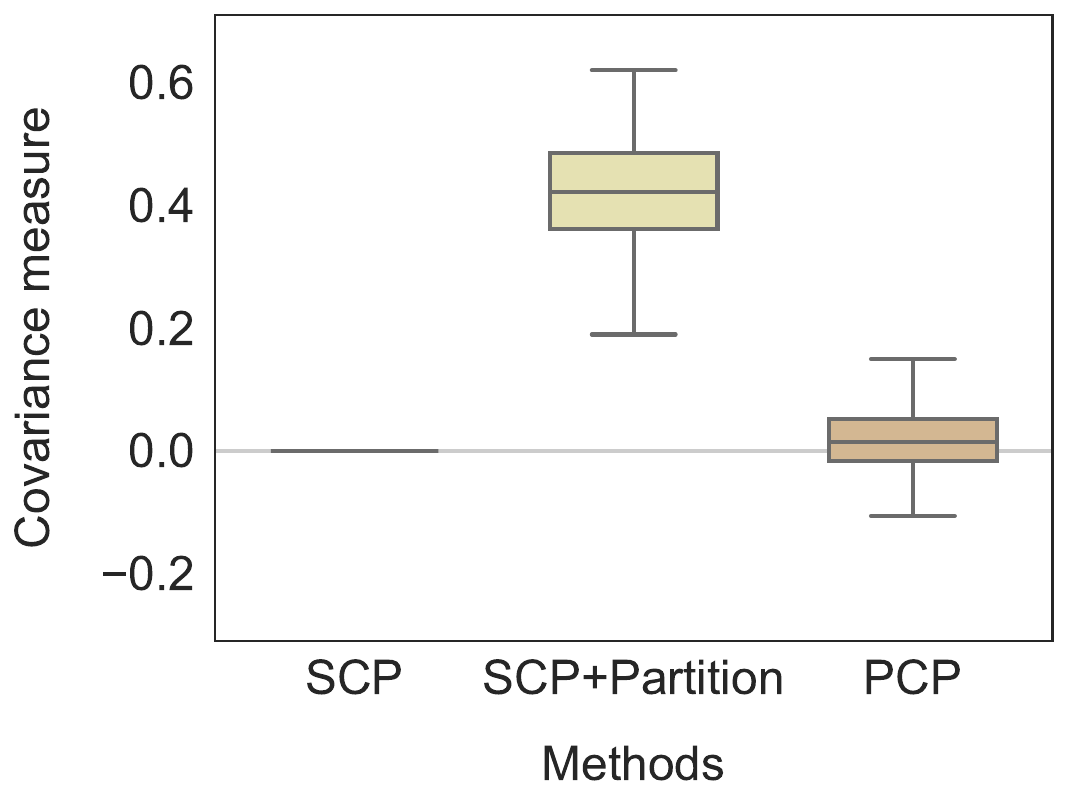}
         \caption{   \begin{minipage}[t]{0.85\textwidth}
         Measure of dependence between intervals and  $A_{n+1}$ given $X_{n+1}$. 
        \end{minipage}              
         }
          \label{fig:fair4_race}
            \end{subfigure}
            \vspace{-5pt}
        \caption{Comparison of conformal prediction methods on a racial feature in the Medical Expenditure Panel Survey (MEPS) 19. }
        \label{fig:fair_race}
            \vspace{2pt}
\end{figure}

We replicate the MEPS experiment in \Cref{sect:fair}, with the variable \(A_{n+1}\) indicating whether the test point is from a white or non-white person. We let \(\theta\approx 0.36\) in \eqref{equ:p_difference}, since white individuals make up approximately \(36\%\) of the full population. We divide the white group \((A_{n+1}=1)\) in the test set into two subgroups based on whether \(\hat e^\star(X_{n+1})<\theta\) or \(\hat e^\star(X_{n+1})\geq\theta\), where \(\hat e^\star(X_{n+1})\) is a randomized estimator defined in \eqref{equ:e_tilde}.
\Cref{fig:fair1_race} shows that SCP+Partition decreases the coverage rate for white individuals with \(A_{n+1}=1\) and \(\hat e^\star(X_{n+1})<0.36\).
PCP closes this gap by weighting, which 
significantly increases the interval length in \Cref{fig:fair2_race}. Nevertheless,  the coverage rate of PCP is around 0.9, meaning it is not overly conservative.
The PCP interval maintains the group-conditional coverage guarantee of SCP+Partition in \Cref{fig:fair3_race}, while being independent of $A_{n+1}$ asymptotically, as shown in \Cref{fig:fair4_race}.

\end{document}